\newtheorem{theorem}{Theorem}
\newtheorem{lemma}{Lemma}
\newtheorem{remark}{Remark}
\def\E{\mathrm{E}}
\def\Pr{\mathrm{P}}
\def\var{\mathrm{var}}
\def\cov{\mathrm{cov}}
\def\cp{\mathop{\rightarrow}\limits^{p}}
\def\cd{\mathop{\rightarrow}\limits^{d}}
\def\mR{\mathbb{R}}
\def\mS{\mathbb{S}}
\def\indicate{\mathbbm{1}}
\newcommand*{\dif}{\mathop{}\!\mathrm{d}}
\def\I{{\bf I}}
\def\O{{\bf O}}
\def\P{{\bf P}}
\def\Q{{\bf Q}}
\def\U{{\bf U}}
\def\V{{\bf V}}
\def\X{{\bf X}}
\def\Z{{\bf Z}}
\def\e{{\bm e}}
\def\s{{\bm s}}
\def\u{{\bf u}}
\def\x{{\bm x}}
\def\bms{{\bf \Sigma}}
\def\bmu{{\bm \mu}}
\def\bzero{{\bm 0}}
\def\bone{{\bm 1}}
\title{Adaptive L-tests for high dimensional independence}
\author{Ping Zhao\\
School of Mathematical Sciences, Tiangong University\\
Huifang Ma\\
School of Statistics and Data Science, Nankai University}
\date{\today}
\begin{document}

\maketitle

\begin{abstract}
Testing mutual independence among multiple random variables is a fundamental problem in statistics, with wide applications in genomics, finance, and neuroscience. 
In this paper, we propose a new class of tests for high-dimensional mutual independence based on $L$-statistics. We establish the asymptotic distribution of the proposed test when the order parameter $k$ is fixed, and prove asymptotic normality when $k$ diverges with the dimension. Moreover, we show the asymptotic independence of the fixed-$k$ and diverging-$k$ statistics, enabling their combination through the Cauchy method. 
The resulting adaptive test is both theoretically justified and practically powerful across a wide range of alternatives. Simulation studies  demonstrate the advantages of our method. 

{\it Keywords:} Cauchy combination test; High dimensional data; L-statistics; Mutual independence.
\end{abstract}

\section{Introduction}

Testing the independence between random variables is one of the most fundamental problems in statistics. Historically, a variety of methods have been developed to test the independence between two variables, including correlation-based approaches such as Pearson's correlation \citep{pearson1895vii}, Kendall's tau \citep{kendall1938new}, Spearman's rho \citep{spearman1987proof}, and Hoeffding's $D$ test \citep{hoeffding1948non}. These classical methods provide powerful tools in the bivariate setting. When extending to the multivariate case, a natural formulation of the mutual independence testing problem arises: given a $p$-dimensional random vector, we aim to test whether all its components are mutually independent. 
Under the assumption of multivariate normality, this problem reduces to testing whether the correlation matrix is the identity matrix. So the classic likelihood ratio test could be used \citep{anderson2003introduction}. 

In recent years, fueled by the increasing prevalence of high-dimensional data in applications such as genomics, finance, and neuroscience, there has been growing interest in developing mutual independence tests that remain valid in high-dimensional settings.  The problem of testing mutual independence is not only of theoretical interest but also has strong practical motivations. In financial econometrics, detecting cross-sectional independence among asset returns is essential for risk management and portfolio allocation. In genomics, independence testing helps to uncover gene regulatory networks. In neuroscience, testing whether brain regions exhibit synchronous activity relies on accurate assessment of independence. Therefore, developing statistically sound and computationally efficient tests for high-dimensional mutual independence remains an important and timely research direction.

Broadly speaking, existing high-dimensional mutual independence tests can be divided into three categories.  
The first class is based on \emph{sum-type statistics}, which aggregate the squared pairwise dependence measures across all variable pairs \citep{schott2005testing,srivastava2005some,srivastava2006some,srivastava2011some,gao2017high,li2023test}. 
Such statistics typically converge to a normal distribution under the null, and are particularly powerful under dense alternatives where many pairs exhibit weak dependence.
Another sum-type test is extending the classic likelihood ratio test to high dimensional settings, such as \cite{jiang2013central,jiang2019determinant,heiny2024log,hu2023limiting,gao2017high}. 
The second class is based on \emph{max-type statistics}, which take the maximum of pairwise dependence measures and often converge to a Gumbel distribution \citep{jiang2004asymptotic,liu2008asymptotic,cai2011limiting}. 
These tests are well-suited to sparse alternatives where only a small fraction of pairs are dependent .  
The third class combines the strengths of both approaches by constructing \emph{adaptive tests}. 
The general idea is to first establish the asymptotic independence between sum-type and max-type statistics, and then combine their $p$-values through methods such as the Cauchy combination test, yielding tests that are powerful across a wide range of sparsity levels \citep{li2015joint,feng2022max,wang2024rank}.  

This paper advances the third line of research on high-dimensional mutual independence testing by proposing a novel framework grounded in $L$-statistics. Although $L$-statistics have been extensively studied in the context of robust estimation and nonparametric inference \citep{serfling1980approximation}, their application to high-dimensional dependence testing remains largely unexplored. Building upon recent developments in adaptive $L$-tests for high-dimensional mean testing \citep{ma2024adaptive}, we introduce an $L$-type test statistic indexed by $k$, representing the number of largest pairwise dependence measures incorporated into the test.

First, we establish the limiting distribution of the proposed statistic under fixed $k$, showing that it is particularly powerful against sparse alternatives. Second, when $k$ diverges with the dimension $p$, we prove the asymptotic normality of the statistic, which makes it well-suited for dense alternatives. Third, we demonstrate that the fixed-$k$ and diverging-$k$ statistics are asymptotically independent, thereby enabling a principled combination via the Cauchy combination method \citep{liu2020cauchy}. Fourth, we address the fundamental challenge that the true sparsity level of alternatives—namely, the number of nonzero correlations $s$—is unknown in practice. While \cite{ma2024adaptive} showed that the optimal choice of $k$ coincides with $s$, this information is unavailable to practitioners. To overcome this limitation, we propose a data-adaptive approach that evaluates a range of $k$ values and integrates the resulting tests, thereby achieving robustness and efficiency across different sparsity regimes.

The resulting adaptive $L$-test inherits the strengths of both fixed-$k$ and diverging-$k$ statistics, offering uniformly superior power against a wide spectrum of alternatives, from extremely sparse to highly dense. Extensive simulation studies confirm the theoretical advantages, and a real data application further highlights its practical relevance. Taken together, our results establish $L$-statistics as a powerful and flexible tool for high-dimensional independence testing, complementing and extending existing sum-type and max-type methodologies.

The remainder of this paper is organized as follows.  
In Section 2, we introduce the proposed test statistics and present their asymptotic properties.  
Section 3 reports simulation results that illustrate the advantages of our methods under different sparsity regimes.  
Section 4 concludes with some discussions.  
All technical proofs and additional results are provided in the supplementary material.

\section{L-tests}
In this paper, we consider a test for the independence of the variables comprising the $p\times 1$ vector $\X=(X_1,\ldots,X_p)^\top$, i.e., testing the global mutual independence hypothesis
\begin{align}\label{h0}
H_0:\; X_1,\ldots,X_p \;\text{are mutually independent random variables.}
\end{align}
Suppose we observe $n$ independent and identically distributed samples $\X_k=(X_{k1},\ldots,X_{kp})^{\top}$, $1\le k\le n$ from the following independent component (IC) model \citep{bai2004clt},
\begin{align}\label{model}
    \X_k=\bmu+\bms^{1/2}\Z_k,~~k=1,\ldots,n,
\end{align}
where $\bmu=(\mu_1,\ldots,\mu_p)^\top$ and $\bms=(\sigma_{ij})_{1\le i,j\le p}$ denote the unknown mean vector and covariance matrix, respectively, and $\Z_k=(Z_{k1},\ldots,Z_{kp})^{\top}$ is a vector of independent random variables with zero means and unit variances. If $\rho_{ij}=\sigma_{ij}/\sqrt{\sigma_{ii}\sigma_{jj}}$ is the correlation coefficient between variables $i$ and $j$, then hypothesis \eqref{h0} can be written as $H_0: \rho_{ij}=0,~\forall~i<j$. The sample correlation coefficient between variables $i$ and $j$ is given by
\begin{align*}
\hat{\rho}_{ij}=\frac{\hat{\sigma}_{ij}}{\sqrt{\hat{\sigma}_{ii} \hat{\sigma}_{jj}}},
\end{align*}
where $\bar{X}_{.i}=\frac{1}{n}\sum_{k=1}^nX_{ki}$ is the sample mean and $\hat{\sigma}_{ij}=\frac{1}{n}\sum_{k=1}^n (X_{ki}-\bar{X}_{.i})(X_{kj}-\bar{X}_{.j})$ is the sample covariance.

\cite{schott2005testing} proposed a sum-type test statistic based on the squared correlations, i.e.,
\begin{align*}
T_{SC}=\frac{\sum_{j=2}^p \sum_{i=1}^{j-1} n\hat{\rho}_{ij}^2-p^*}{\sqrt{2p^*(n-1)/(n+2)}},
\end{align*}
where $p^*=p(p-1)/2$ and established its asymptotic normality under the null. This type of quadratic (or $L_2$-norm) statistic has good power performance under dense alternatives, where many weak correlations exist simultaneously.

To improve sensitivity to sparse signals, \cite{li2023test} proposed a different sum-type statistic based on the $L_4$-norm of the sample covariances,
\begin{align*}
T_{LX}=\sum_{i=2}^p \sum_{j=1}^{i-1} \hat{\sigma}_{ij}^4.
\end{align*}
Unlike $T_{SC}$, this test does not rely on the sample correlations and hence avoids scale-invariance issues. Their results demonstrate that $T_{LX}$ is more powerful than $T_{SC}$ under moderately sparse alternatives. However, when the alternative is extremely sparse, max-type statistics are more suitable.

For this purpose, \cite{jiang2004asymptotic} considered the maximum of the squared correlations,
\begin{align*}
T_J=\max_{1\le i<j\le p} n\hat{\rho}_{ij}^2-4\log p+\log \log p,
\end{align*}
and established that its limiting distribution is Gumbel under the null hypothesis. Max-type tests such as $T_J$ are particularly effective for detecting alternatives with only a few strong correlations.

In practice, the sparsity level of the alternative hypothesis is unknown, and it is not clear whether a dense- or sparse-type test is more appropriate. To address this challenge, \cite{li2015joint} and \cite{feng2022max} proved the asymptotic independence between $T_{SC}$ and $T_J$, and proposed the following combined test statistic:
\begin{align*}
T_F=\min\{p_{SC},p_{J}\},
\end{align*}
where $p_{SC}$ and $p_J$ are the $p$-values of $T_{SC}$ and $T_J$, respectively. This adaptive procedure achieves robustness against unknown sparsity levels by combining dense- and sparse-type tests. However, $T_F$ only exploits the global sum and the extreme maximum of the correlations. When the alternative sparsity lies in an intermediate regime, both $T_{SC}$ and $T_J$ may lose efficiency, leading to suboptimal power.

To overcome this limitation, we propose a more flexible approach based on $L$-statistics. Specifically, let $p^*=p(p-1)/2$ denote the total number of distinct correlation pairs, and order the squared sample correlations in descending order as $\hat{\rho}^2_{(1)} \le \hat{\rho}^2_{(2)} \le \cdots \le \hat{\rho}^2_{(p^*)}$. For a given parameter $k$, we define the $L$-type test statistic
\begin{align}\label{L-statistic}
T_k=\sum_{i=1}^k n\hat{\rho}^2_{(p^*-i+1)}.
\end{align}
This statistic aggregates the top $k$ largest correlation signals and therefore interpolates between max-type tests ($k=p^*$) \citep{jiang2004asymptotic} and sum-type tests ($k=1$) \citep{schott2005testing}. By varying $k$, the $L$-test provides a natural framework for adapting to a wide range of sparsity regimes.

In the following sections, we establish the asymptotic distributions of $T_k$ under different regimes. When $k$ is fixed, we show that $T_k$ is particularly powerful for sparse alternatives. When $k$ diverges with $p$, we prove its asymptotic normality, which favors dense alternatives. Finally, we show that the two regimes are asymptotically independent, allowing us to construct an adaptive procedure that combines them efficiently via the Cauchy combination method.

\subsection{Fixed parameter}\label{fixp}

We now state our first main result, which is about the limiting distribution of each $n\hat{\rho}^2_{(j)}$ and the joint limiting distribution of all $n\hat{\rho}^2_{(j)}$'s.

\begin{theorem}\label{asymptotic distribution of ordered statistics}
	Under the IC model given in \eqref{model}, suppose $\E\{\exp(t_0|Z_{11}|^{a})\}<\infty$ for some $0<a\le 2$ and $t_0>0$. Assume $p=p(n)\to \infty$ and $\log p=o(n^\beta)$ as $n\to\infty$, where $\beta=a/(4+a)$. Then as $\min(n,p)\to\infty$, we have
	
	\noindent(\romannumeral1) for all integer $1\le s\le p^*$ and $x\in\mR$,
	\begin{align*}
	\Pr_{H_0}\left(n\hat{\rho}^2_{(p^*+1-s)}-b_p\le x \right)\to \Lambda(x)\sum_{i=0}^{s-1}\frac{\{\log \Lambda^{-1}(x)\}^i}{i!},
	\end{align*}
	\noindent(\romannumeral2) for all integer $2\le l\le p^*$ and $x_1\ge \cdots\ge x_l\in\mR$,
	\begin{align*}
	&\Pr_{H_0}\left(\bigcap_{s=1}^l\left(n\hat{\rho}^2_{(p^*+1-s)}-b_p \leq x_s\right)\right)\\
	\to&\Lambda\left(x_l\right) \sum_{\sum_{i=2}^s l_i \leq s-1, s=2, \ldots, l} \prod_{i=2}^l \frac{\{\log \Lambda^{-1}(x_i)-\log \Lambda^{-1}(x_{i-1})\}^{l_i}}{l_i!},
	\end{align*}
	where $b_p=4\log p-\log (\log p)$ and $\Lambda(x)=\exp\{-(8\pi)^{-1/2}\exp (-x/2)\}$.
\end{theorem}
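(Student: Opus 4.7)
The plan is to recast both parts in terms of a single family of counts. Let
\[
W_n(x) = \sum_{1 \le i < j \le p} \mathbbm{1}\{n\hat{\rho}_{ij}^2 - b_p > x\},
\]
so that the event in (i) equals $\{W_n(x) \le s-1\}$ and the event in (ii) equals $\bigcap_{s=1}^l \{W_n(x_s) \le s-1\}$. The target is to show that under $H_0$ the process $x \mapsto W_n(x)$ converges to a Poisson point process on $\mR$ with intensity $\lambda(x) = (8\pi)^{-1/2} e^{-x/2} = -\log \Lambda(x)$. Part (i) then reduces to the Poisson tail identity $\Pr(\mathrm{Poisson}(\lambda)\le s-1) = e^{-\lambda}\sum_{i=0}^{s-1}\lambda^i/i!$, while part (ii) follows by decomposing $W_n(x_s) = N_{n,1} + \cdots + N_{n,s}$ into layer counts $N_{n,i} = W_n(x_i) - W_n(x_{i-1})$ (with $W_n(x_0) \equiv 0$), which are asymptotically independent Poissons with rates $\mu_i = \lambda(x_i) - \lambda(x_{i-1})$; telescoping $\prod_{i=1}^l e^{-\mu_i} = e^{-\lambda(x_l)} = \Lambda(x_l)$ then recovers the stated limit, where the forced constraint $N_{n,1}=0$ from the $s=1$ event explains why only $l_2,\ldots,l_l$ enter the combinatorial sum.

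The first ingredient is a uniform single-pair tail expansion. Under the IC model with the assumed exponential moment, $\hat{\rho}_{ij}$ under $H_0$ reduces, after cancellation of $\sigma_{ii}$ and $\sigma_{jj}$, to the self-normalized cross product of the $Z_{ki}$ and $Z_{kj}$ coordinates, to which a Cram\'er-type moderate-deviation theorem applies on the range $0 \le t \le O(n^\beta)$ with $\beta = a/(4+a)$. The standard Mills-ratio calculation then yields
\[
p^* \cdot \Pr_{H_0}(n\hat{\rho}_{ij}^2 > b_p + x) \to (8\pi)^{-1/2} e^{-x/2} = \lambda(x),
\]
so in particular $\E W_n(x) \to \lambda(x)$.

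The main step is a Chen-Stein Poisson approximation for $W_n(x)$. Writing $A_\alpha = \{n\hat{\rho}_\alpha^2 - b_p > x\}$ for pairs $\alpha = \{i,j\}$, the neighborhood $B_\alpha$ consists of pairs sharing at least one index with $\alpha$. Pairs outside $B_\alpha$ depend on disjoint columns and hence are independent of $A_\alpha$ under $H_0$ by the IC structure, so the Chen-Stein $b_3$ term vanishes identically; only
\[
b_1 = \sum_\alpha \sum_{\beta \in B_\alpha} \Pr(A_\alpha)\Pr(A_\beta), \qquad b_2 = \sum_\alpha \sum_{\beta \in B_\alpha,\, \beta \ne \alpha} \Pr(A_\alpha \cap A_\beta),
\]
remain. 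The $b_1$ term is bounded by $O(p^3) \cdot (\lambda(x)/p^*)^2 = O(p^{-1})$ via the marginal tail from the previous step. For $b_2$, I would condition on the sample column shared by $\hat{\rho}_{\{i,j\}}$ and $\hat{\rho}_{\{i,k\}}$: conditionally, the two statistics involve the disjoint independent columns $\{Z_{k'j}\}$ and $\{Z_{k'm}\}$, and a conditional bivariate Cram\'er-type bound yields a joint tail of size $O(p^{-4})$, which summed over $O(p^3)$ index-sharing pairs gives $b_2 = o(1)$.

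The hard part will be the uniform validity of these moderate-deviation estimates across all $O(p^3)$ index-sharing configurations on the scale $x = O(\log p)$; this is where the exponential-moment hypothesis together with $\log p = o(n^{a/(4+a)})$ is essential, since only within this range does the Cram\'er-type relative error decay and allow the replacement of the self-normalized tail by the Gaussian $\chi^2$ tail. Once the scalar convergence $W_n(x) \cd \mathrm{Poisson}(\lambda(x))$ is in hand, a routine multivariate extension of Chen-Stein applied to the vector $(N_{n,1},\ldots,N_{n,l})$ yields joint independence of the layer counts, and the Poisson-process computation outlined in the first paragraph produces the explicit form of the limiting distribution in (ii).
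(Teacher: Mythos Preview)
Your high-level strategy matches the paper's: both reduce (i) and (ii) to Poisson point-process convergence for the exceedance counts, use the Chen--Stein/Arratia--Goldstein--Gordon bound with the same neighborhood structure ($b_3=0$, $b_1=O(p^{-1})$, and $b_2$ handled by conditioning on the shared column), and then read off the order-statistic limits via the Poisson tail identity and the layer decomposition you describe. The one substantive technical difference is that you propose to run Chen--Stein and the moderate-deviation estimates \emph{directly} on $\hat\rho_{ij}$, whereas the paper first replaces $\hat\rho_{ij}$ by the unnormalized cross-product $\tilde\rho_{ij}=n^{-1}\Z_{.i}^\top\Z_{.j}$, shows $\max n\hat\rho^2-\max n\tilde\rho^2=o_p(1)$ via tightness of the normalizers $h_i=\|\Z_{.i}-\bar Z_i\|_2/\sqrt n$, and only then applies Poisson approximation to $\tilde\rho$. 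This decoupling buys the paper clean, off-the-shelf inputs: the single-pair tail follows from the standard i.i.d.-sum moderate deviation for $\sum_k Z_{k1}Z_{k2}$, and the $b_2$ joint tail is exactly Cai--Jiang's $\Psi_n$ lemma, which is stated for the raw products rather than the correlations. Your direct route is conceptually equivalent but would require a Cram\'er-type result for $\hat\rho_{ij}$ itself---a ratio with two separate self-normalizers plus centering---and a conditional version of it for $b_2$; these are not standard and would have to be argued from scratch, so the paper's approximation step is what makes the ``hard part'' you flag actually routine.
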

According to Theorem \ref{asymptotic distribution of ordered statistics}, through convolution, for fixed $k$ the asymptotic null distribution of $T_k$ exists. Specially, $\Pr_{H_0}(T_1-b_p\le x)\to \Lambda(x)$. When $k>1$, the asymptotic null distribution of $T_k$ is complex. So we adapt the permutation method to get the null distribution of $T_k$.

To approximate the null distribution of the proposed test statistic $T_k$, we adopt a permutation-based bootstrap procedure. Specifically, we generate bootstrap samples 
\begin{align*}
    \X_k^* = (X_{k1}^*,\ldots,X_{kp}^*)^{\top}, \quad k=1,\ldots,n,
\end{align*}
where, for each coordinate $i$, $\{X_{ki}^*\}_{k=1}^n$ is obtained by randomly permuting the original sample $\{X_{ki}\}_{k=1}^n$ without replacement. This procedure effectively breaks any potential dependence among variables while preserving the marginal distributions of each coordinate, thereby mimicking the null hypothesis of mutual independence.

For each bootstrap sample, we construct the corresponding test statistic $T_k^*$ using the same procedure as in the original statistic $T_k$. Repeating this process independently $B$ times yields a collection of bootstrap replicates $\{T_{k1}^*, T_{k2}^*, \ldots, T_{kB}^*\}$, which serve as an empirical approximation to the null distribution of $T_k$.

The bootstrap $p$-value for the observed statistic $T_k$ is then computed as
\begin{align}
p_k = \frac{1}{B} \sum_{b=1}^B\indicate(T_k < T_{kb}^*),
\end{align}
where $\indicate(\cdot)$ denotes the indicator function. Intuitively, $p_k$ represents the proportion of bootstrap replicates that exceed the observed statistic, thus quantifying the extremeness of the observed dependence relative to what would be expected under the null hypothesis.

This bootstrap framework has several advantages. First, it does not require explicit knowledge of the limiting distribution of $T_k$, which may be complicated or analytically intractable. Second, it adapts automatically to different distributional settings by relying only on the permutation mechanism. Finally, as the following theorem states, as the number of bootstrap replications $B$ grows, the resulting $p$-value $p_k$ converges to the true significance level, ensuring asymptotic validity of the test.
\begin{theorem}\label{bootstrap}
    Under the IC model given in \eqref{model}, $\Pr_{H_0}(p_k\le\alpha)\le\alpha$ for any desired type \uppercase\expandafter{\romannumeral1} error rate $\alpha\in[0,1]$.
\end{theorem}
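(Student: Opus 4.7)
The plan is to reduce the claim to the standard exchangeability-based validity argument for permutation/randomization tests. The key observation is that under $H_0$ the data matrix $\X$ is invariant in distribution under column-wise permutations, and the bootstrap samples are obtained by exactly such permutations, so the original and bootstrap statistics form an exchangeable sequence.

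First I would identify the invariance group. Let $G = S_n^p$ act on $\X = (X_{ki})$ by $(\Pi \X)_{ki} = X_{\pi_i(k),\, i}$ for $\Pi = (\pi_1,\ldots,\pi_p) \in G$. Under $H_0$ the $p$ coordinates are mutually independent; combined with the iid sampling across $k$, this means that the columns of $\X$ are independent and the entries within each column are iid. Consequently $\Pi \X \stackrel{d}{=} \X$ for every fixed $\Pi \in G$.

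Second, I would write each bootstrap sample as $\X^{*b} = \Pi_b \X$, where $\Pi_1,\ldots,\Pi_B$ are iid uniform on $G$ and independent of $\X$, and argue that $(T_k, T_{k1}^*, \ldots, T_{kB}^*)$ is exchangeable under $H_0$. For any swap of position $0$ with position $j$, averaging over the uniform draws and applying the change of variables $\X \mapsto \Pi_j^{-1}\X$ (permissible by $G$-invariance) converts $\mathrm{E}[\phi(\X, \Pi_1\X, \ldots, \Pi_B\X)]$ into an expression in which the $j$th coordinate becomes $\X$; a further re-parametrization $\Pi_i' = \Pi_i\Pi_j^{-1}$ using the group structure of $G$ and the uniformity of the $\Pi_i$'s restores the target form. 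This gives joint exchangeability of $(\X, \Pi_1\X, \ldots, \Pi_B\X)$, which is inherited by the statistics $T_k(\cdot)$ computed from them.

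Third, exchangeability (combined with the strict inequality in the definition of $p_k$, which handles ties conservatively) implies that the rank of $T_k$ within the augmented vector $(T_k, T_{k1}^*, \ldots, T_{kB}^*)$ is stochastically bounded below by a uniform on $\{1,\ldots,B+1\}$, so that the count $\sum_{b=1}^B \indicate(T_k < T_{kb}^*)$ is stochastically dominated by a discrete uniform on $\{0,\ldots,B\}$; a direct calculation then yields $\Pr_{H_0}(p_k \le \alpha) \le \alpha$ for every $\alpha \in [0,1]$. The main obstacle is the exchangeability verification in the second step, since $G$-invariance by itself only gives marginal distributional identity; the argument relies crucially on the group structure of $G$ and the independent uniform choice of the $\Pi_b$'s. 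Ties in the null distribution of $T_k$ are a minor technicality absorbed by the strict inequality in the definition of $p_k$.
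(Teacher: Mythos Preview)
Your exchangeability argument via the group $G=S_n^p$ of column-wise permutations is exactly the route the paper takes; its proof is simply a terse version that observes $\hat\rho_{ij}^*\stackrel{d}{=}\hat\rho_{ij}$ under $H_0$ and concludes ``the observations are exchangeable,'' without spelling out the group-theoretic change of variables you describe. Your Steps~1--2 are correct and, if anything, more careful than the paper.

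There is, however, a genuine gap in your Step~3: the inequality goes the wrong way. Using the strict inequality $T_k<T_{kb}^*$ makes $p_k$ \emph{smaller} under ties, which is anti-conservative, not conservative; and ``the count is stochastically dominated by a uniform on $\{0,\ldots,B\}$'' yields a \emph{lower} bound on $\Pr_{H_0}(p_k\le\alpha)$, not the upper bound required. In fact, with the paper's normalization $p_k=\tfrac{1}{B}\#\{b:T_{kb}^*>T_k\}$, the exact bound $\Pr_{H_0}(p_k\le\alpha)\le\alpha$ fails already in the continuous tie-free case: by exchangeability the rank of $T_k$ in $(T_k,T_{k1}^*,\ldots,T_{kB}^*)$ is uniform on $\{1,\ldots,B+1\}$, so $\Pr_{H_0}(p_k\le\alpha)=(\lfloor\alpha B\rfloor+1)/(B+1)$, which exceeds $\alpha$ for small $\alpha$ (in particular $\Pr_{H_0}(p_k=0)=1/(B+1)>0$). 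The standard fix that does deliver the exact inequality from exchangeability is $p_k=(1+\#\{b:T_{kb}^*\ge T_k\})/(B+1)$: then the rank argument gives $\Pr_{H_0}(p_k\le k/(B+1))\le k/(B+1)$ for all integers $k$. The paper's own proof stops at exchangeability and does not address this normalization issue either, so the defect is inherited from the statement rather than specific to your approach.
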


\subsection{Diverging parameter}

Next, we show the asymptotic normality of the L-statistic with diverging parameter $\lceil \gamma p^*\rceil$.

\begin{theorem}\label{asymptotic normality}
	Under the IC model given in \eqref{model}, suppose $Z_{11}\sim N(0,1)$, then for any $\gamma\in(0,1]$, under $H_0$, as $\min(n,p)\to\infty$,
	\begin{align*}
    	(p^*)^{-1/2}(T_{\lceil \gamma p^*\rceil}-\mu_{\gamma,n,p})\cd N(0,\sigma_{\gamma}^2),
	\end{align*} 
    where 
    \begin{align*}
        \mu_{\gamma,n,p}:=&\sum_{1\le i<j\le p}\E\left\{(n\hat{\rho}_{ij}^2-v_{\gamma,n})\indicate(n\hat{\rho}_{ij}^2\ge v_{\gamma,n})+\gamma v_{\gamma,n}\right\}\\
        =&p^*\left[\frac{n}{n-1}\left\{1-F_{\text{Beta}(\frac{3}{2},\frac{n-2}{2})}\left(\frac{v_{\gamma,n}}{n}\right)\right\}-v_{\gamma,n}\left\{1-F_{\text{Beta}(\frac{1}{2},\frac{n-2}{2})}\left(\frac{v_{\gamma,n}}{n}\right)\right\}+\gamma v_{\gamma,n}\right],\\
        \sigma_{\gamma}^2:=&3\{1-F_{\chi^2_5}(v_{\gamma})\}-2v_{\gamma}\{1-F_{\chi^2_3}(v_{\gamma})\}+v_{\gamma}^2\{1-F_{\chi^2_1}(v_{\gamma})\}-[1-F_{\chi^2_3}(v_{\gamma})-v_{\gamma}\{1-F_{\chi^2_1}(v_{\gamma})\}]^2<\infty.
\end{align*}
Here, $v_{\gamma,n}$ and $v_{\gamma}$ are the $(1-\gamma)$-th quantiles of $n\text{Beta}(\frac{1}{2},\frac{n-2}{2})$ and $\chi^2_1$, respectively, and $F_{\xi}(\cdot)$ is the C.D.F. of random variable $\xi$.
\end{theorem}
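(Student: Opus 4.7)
The plan is to reduce $T_{\lceil\gamma p^*\rceil}$ to a sum-type statistic with a deterministic threshold, compute its mean and limiting variance in closed form via the exact Beta/chi-squared distribution of the squared sample correlations under $H_0$ Gaussianity, and invoke a CLT for a completely-degenerate $U$-statistic in the data columns.

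Set $N_\gamma=\lceil\gamma p^*\rceil$, $Y_{ij}=n\hat\rho_{ij}^2$, and define the oracle statistic
\begin{align*}
S_n=\sum_{1\le i<j\le p}\bigl\{(Y_{ij}-v_{\gamma,n})\indicate(Y_{ij}\ge v_{\gamma,n})+\gamma v_{\gamma,n}\bigr\}.
\end{align*}
Since $T_{N_\gamma}=\sum_{i<j}(Y_{ij}-\hat v_\gamma)_{+}+N_\gamma\hat v_\gamma$, where $\hat v_\gamma$ is the empirical $N_\gamma$-th largest $Y_{ij}$, the expansion of this expression against the analogous form for $S_n$ yields cancellations that leave $T_{N_\gamma}-S_n=O_p(1)+O_p(|\hat v_\gamma-v_{\gamma,n}|^{2}\,p^*)$. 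A Bahadur-type concentration for the empirical distribution of $\{Y_{ij}\}$, valid at the $\sqrt{p^*}$-rate thanks to the pairwise independence of $\{\hat\rho_{ij}^2\}$ established below, gives $\hat v_\gamma-v_{\gamma,n}=O_p((p^*)^{-1/2})$, so $T_{N_\gamma}-S_n=o_p((p^*)^{1/2})$ and it suffices to prove the CLT for $S_n$.

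Under $H_0$ with $Z_{11}\sim N(0,1)$, each $\hat\rho_{ij}^2$ has exactly the $\mathrm{Beta}(1/2,(n-2)/2)$ distribution, and the identity $b\,f_{\mathrm{Beta}(\alpha,\beta)}(b)=\tfrac{\alpha}{\alpha+\beta}f_{\mathrm{Beta}(\alpha+1,\beta)}(b)$ applied to $\E[(Y_{12}-v_{\gamma,n})_{+}]$ produces the stated $\mu_{\gamma,n,p}=\E S_n$. For the variance, the decisive fact is that rotation-invariance of Gaussian vectors forces \emph{pairwise independence} of $\hat\rho_{ij}$ and $\hat\rho_{kl}$ for all distinct pairs: conditioning on the centered $i$-th column renders the normalized $j$-th column uniform on the centered unit sphere $S^{n-2}$, so the conditional law of $\hat\rho_{ij}$ is free of the $i$-th column, and combining across the various overlap patterns yields the claimed pairwise independence. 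Consequently $(p^*)^{-1}\var(S_n)=\var(\xi_{12})$ for the centered summand $\xi_{12}$, and the chi-square identities $y\,f_{\chi^2_1}(y)=f_{\chi^2_3}(y)$ and $y^{2}f_{\chi^2_1}(y)=3f_{\chi^2_5}(y)$, combined with $\Pr(\chi^2_1\ge v_\gamma)=\gamma$, evaluate $\var((\chi^2_1-v_\gamma)_{+})$ to exactly $\sigma_\gamma^{2}$.

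For asymptotic normality, view $S_n-\E S_n=\sum_{i<j}h(\mathrm{col}_i,\mathrm{col}_j)$ as a symmetric $U$-statistic of order two in the $p$ independent data columns. The pairwise-independence argument above actually yields the stronger $\E[h(\mathrm{col}_1,\mathrm{col}_2)\mid\mathrm{col}_1]\equiv 0$, making the Hoeffding decomposition completely degenerate. Spectrally expanding $h=\sum_k\lambda_k\phi_k\otimes\phi_k$, the statistic rewrites as a diagonal quadratic form in iid columns, and a Lindeberg-type condition $\max_k\lambda_k^{2}/\sum_k\lambda_k^{2}\to 0$ on the eigenvalues---together with a standard CLT for completely-degenerate $U$-statistics---delivers $(p^*)^{-1/2}(S_n-\E S_n)\cd N(0,\sigma_\gamma^{2})$. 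The main obstacle is this last step: the kernel $h$ depends on $n$ through both $v_{\gamma,n}$ and the Beta-to-chi-square scaling, so verifying the Lindeberg condition requires uniform spectral control of $h$ as $(n,p)\to\infty$, which goes beyond classical fixed-kernel degenerate-$U$-statistic arguments and is the most delicate piece of the proof.
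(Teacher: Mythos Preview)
Your overall plan matches the paper's: reduce $T_{\lceil\gamma p^*\rceil}$ to a thresholded sum with the deterministic cutoff $v_{\gamma,n}$, identify $\mu_{\gamma,n,p}$ and $\sigma_\gamma^2$ from the exact Beta law of $\hat\rho_{ij}^2$, and then prove a CLT for the oracle statistic. The reduction step and the moment calculations are essentially the same as in the paper; the paper obtains the quantile rate $\hat v_{\gamma,n,p}-v_{\gamma,n}=O_p(p^{-1})$ (its Lemma~8) from the same pairwise-independence observation you use, and then carries out the $o_p(\sqrt{p^*})$ comparison by a careful truncation argument rather than by a one-line Taylor remark.

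Where you diverge from the paper is in the CLT engine. You propose to treat the oracle sum as a completely degenerate order-two $U$-statistic in the columns and to invoke a Hall/De~Jong type CLT after checking an eigenvalue-spreading condition. The paper instead writes the sum as a martingale in the column index and applies the Lindeberg--Feller martingale CLT. Both routes are viable, but they meet at exactly the same technical crux: the quantity
\[
G(\s_1,\s_2)\;=\;\E\bigl[h_{v_{\gamma,n}}(\sqrt{n}\,\s_1^\top\s_3)\,h_{v_{\gamma,n}}(\sqrt{n}\,\s_2^\top\s_3)\bigm|\s_1,\s_2\bigr],
\]
which is the second Hoeffding component in your language and the cross term of the conditional variance in the paper's. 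Your proposed eigenvalue condition is equivalent, under Hall's formulation, to showing $\E[G^2]=o(1)$; the martingale route requires the same bound to get convergence of the conditional variance. The paper dispatches this with a concrete device you do not mention: using rotational invariance, $G$ is a function of $r=\s_1^\top\s_2$ alone, say $G=\phi(r)$; since $\phi$ is even and $r=O_p(n^{-1/2})$, a second-order Taylor expansion plus Stein's identity gives $\var\{\phi(r)\}=O(n^{-2})$. This is the missing ingredient behind your ``uniform spectral control of $h$ as $(n,p)\to\infty$'': once you have it, either the Hall/De~Jong CLT or the martingale CLT closes immediately. So your obstacle is real, but it is not a matter of abstract spectral bounds; it reduces to this single $O(n^{-2})$ estimate, and the paper's Taylor--Stein argument is the natural way to obtain it.
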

\begin{remark}
    (\romannumeral1) $(p^*)^{-1}\mu_{\gamma,n,p}\to 1-F_{\chi^2_3}(v_{\gamma})-v_{\gamma}\{1-F_{\chi^2_1}(v_{\gamma})\}+\gamma v_{\gamma}<\infty$ as $n\to\infty$. \\
    (\romannumeral2) For $\gamma=1$, $(p^*)^{-1}\mu_{1,n,p}\to 1$ and $\sigma^2_1=2$. The asymptotic normality of $T_{p^*}$ given in Theorem \ref{asymptotic normality} aligns with \cite{schott2005testing}'s result, while relaxing the constraints on $n$ and $p$ to merely require $\min(n,p)\to\infty$.  
\end{remark}

Although the asymptotic null distribution of the diverging-parameter statistic
$T_{\lceil \gamma p^*\rceil}$ is available, we still recommend using the bootstrap procedure
in Section~\ref{fixp} to obtain empirical critical values (and hence evaluate empirical sizes).
In finite samples---especially in high-dimensional settings---the asymptotic approximation
may be poorly calibrated, while the bootstrap typically provides more accurate size control.

Moreover, the additional computational burden is negligible. Since we already implement
the bootstrap to compute critical values (or $p$-values) for the fixed-$k$ statistics $\{T_k\}$,
the same bootstrap resamples can be reused to evaluate $T_{\lceil \gamma p^*\rceil}$.
Therefore, incorporating $T_{\lceil \gamma p^*\rceil}$ into the bootstrap-based size calculation
requires little extra running time.

\subsection{Adaptive procedures}

In practice, we seldom know whether the regime is sparse or dense. In order to adapt to various alternative behaviors, we propose an adaptive test combining different $L$-statistics in \eqref{L-statistic} with different $k$. The key message is that $T_k$ with fixed $k$ and $T_{\lceil \gamma p^*\rceil}$ for $\gamma\in(0,1]$ are asymptotically independent if $H_0$ holds.
\begin{theorem}\label{asymptotic independence}
	Under the IC model given \eqref{model}, suppose $Z_{11}\sim N(0,1)$, then for any $\gamma\in(0,1]$, assume $p=p(n)\to\infty$ and $\log p=o(n^{1/2})$ as $n\to\infty$. Then, as $\min(n,p)\to\infty$, 
	
	\noindent(\romannumeral1) for all integer $1\le s\le p^*$, $\gamma\in(0,1]$ and $x,y\in\mathbb{R}$,
	\begin{align*}
	\Pr_{H_0}\left(n\hat{\rho}^2_{(p^*+1-s)}-b_p\le x,\frac{T_{\lceil \gamma p^*\rceil}-\mu_{\gamma,n,p}}{\sqrt{p^*\sigma_{\gamma}^2}}\le y \right)\to \Lambda(x)\sum_{i=0}^{s-1}\frac{\{\log \Lambda^{-1}(x)\}^i}{i!}\Phi(y),
	\end{align*}
	\noindent(\romannumeral2) for all integer $2\le l\le p^*$, $\gamma\in(0,1]$ and $x_1\ge \cdots\ge x_l\in\mathbb{R}$, $y\in\mathbb{R}$,
	\begin{align*}
	&\Pr_{H_0}\left(\bigcap_{s=1}^l\left(n\hat{\rho}^2_{(p^*+1-s)}-b_p \leq x_s\right), \frac{T_{\lceil \gamma p^*\rceil}-\mu_{\gamma,n,p}}{\sqrt{p^*\sigma_{\gamma}^2}}\le y\right)\\
	\to&\Lambda\left(x_l\right) \sum_{\sum_{i=2}^s l_i \leq s-1, s=2, \ldots, l} \prod_{i=2}^s \frac{\{\log \Lambda^{-1}(x_i)-\log \Lambda^{-1}(x_{i-1})\}^{l_i}}{l_i!}\Phi(y),
	\end{align*}
    where $\Phi(\cdot)$ is the C.D.F. of $N(0,1)$.
\end{theorem}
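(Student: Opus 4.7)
The plan is to deduce joint asymptotic independence from the two marginal limits already established in Theorems~\ref{asymptotic distribution of ordered statistics} and~\ref{asymptotic normality} via a truncation-plus-decoupling argument. The arguments for parts (i) and (ii) are identical once one carries along the joint distribution of the top-$l$ order statistics, so I focus on (ii).

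First, as in the proof of Theorem~\ref{asymptotic normality}, replace the random threshold $n\hat\rho^2_{(p^*-\lceil\gamma p^*\rceil+1)}$ in $T_{\lceil\gamma p^*\rceil}$ by the deterministic quantile $v_{\gamma,n}$:
\begin{align*}
T_{\lceil\gamma p^*\rceil} = \sum_{1\le i<j\le p}(n\hat\rho^2_{ij}-v_{\gamma,n})\indicate(n\hat\rho^2_{ij}\ge v_{\gamma,n}) + \gamma p^* v_{\gamma,n} + o_p(\sqrt{p^*}).
\end{align*}
Next, choose a deterministic intermediate threshold $c_p\to\infty$ with $b_p-c_p\to\infty$ and $p^*\Pr(\chi^2_1>c_p)=o(\sqrt{p^*})$; any $c_p=C\log p$ with $C\in(2,4)$ suffices. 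Split the sum above into a ``bulk'' piece $S_{\mathrm{lo}}$ (contributions with $n\hat\rho^2_{ij}\le c_p$) and an ``extreme'' piece $S_{\mathrm{hi}}$ (contributions with $n\hat\rho^2_{ij}>c_p$). Gaussian-tail estimates on $n^{1/2}\hat\rho_{ij}$ under the IC model give $\E S_{\mathrm{hi}}=o(\sqrt{p^*})$ and $\var(S_{\mathrm{hi}})=o(p^*)$, so $(S_{\mathrm{hi}}-\E S_{\mathrm{hi}})/\sqrt{p^*}=o_p(1)$. Simultaneously, Theorem~\ref{asymptotic distribution of ordered statistics} shows that each top-$l$ order statistic equals $b_p+O_p(1)$, which exceeds $c_p$ with probability tending to $1$; hence on a high-probability event all top-$l$ order statistics lie inside $S_{\mathrm{hi}}$ and the pairs driving them are disjoint in value range from those driving $S_{\mathrm{lo}}$.

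The core step is to establish asymptotic independence of $(S_{\mathrm{lo}}-\E S_{\mathrm{lo}})/\sqrt{p^*\sigma_\gamma^2}$ from the top-$l$ order statistics. Although the two sums are over disjoint value ranges, the underlying variables $X_j$ are shared. To decouple, let $I\subset\{1,\ldots,p\}$ be the random set of at most $2l$ indices involved in the pairs realizing the top-$l$ order statistics, and define $S^{\flat}_{\mathrm{lo}}$ by removing from $S_{\mathrm{lo}}$ every term indexed by a pair incident to $I$. This removes $O(lp)$ bounded terms; the \emph{centered} perturbation $(S_{\mathrm{lo}}-S^{\flat}_{\mathrm{lo}})-\E(S_{\mathrm{lo}}-S^{\flat}_{\mathrm{lo}})$ has variance $O(lpc_p^2)=O(p\log^2 p)=o(p^*)$ by covariance bounds exploiting the IC structure and the exchangeability of $\hat\rho_{ij}$'s across index-disjoint pairs under $H_0$, so $(S_{\mathrm{lo}}-\E S_{\mathrm{lo}})/\sqrt{p^*}$ and $(S^{\flat}_{\mathrm{lo}}-\E S^{\flat}_{\mathrm{lo}})/\sqrt{p^*}$ agree up to $o_p(1)$. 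Conditional on $I$, $S^{\flat}_{\mathrm{lo}}$ is a functional of $\{X_j:j\notin I\}$ while the top-$l$ order statistics are functionals of $\{X_j:j\in I\}$; under $H_0$ these two variable-blocks are independent, so the conditional joint distribution factorizes. Integrating out $I$ and combining with the marginal limits from Theorems~\ref{asymptotic distribution of ordered statistics} and~\ref{asymptotic normality} via Slutsky's theorem yields the factorized joint limit.

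The main obstacle is the decoupling step. First, one must verify that the centered perturbation from removing $O(lp)$ index-incident pairs is $o_p(\sqrt{p^*})$, which requires tight covariance bounds among $\hat\rho_{ij}$'s that share a single index. Second and more delicately, the set $I$ itself is random and its distribution is entangled with the top-$l$ order statistics, so the ``condition on $I$'' step must be formalized carefully—for instance through the Poisson-process approximation of extremes underlying Theorem~\ref{asymptotic distribution of ordered statistics}—to ensure the decoupling survives after $I$ is integrated out. The IC model assumption is essential here, since it turns the column-independence of the $X_j$'s under $H_0$ into the block-independence structure on which the argument rests.
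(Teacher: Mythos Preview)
Your decoupling intuition is correct and matches the paper's: remove from the sum statistic all pairs incident to the indices involved in the extremes, then exploit column-independence under $H_0$. But the ``condition on $I$'' step as you write it does not work. Conditional on $\{I=I_0\}$, the top-$l$ order statistics are \emph{not} functionals of $\{X_j:j\in I_0\}$ alone: the event $\{I=I_0\}$ itself constrains $\{X_j:j\notin I_0\}$ (all pairs outside $I_0$ must have $\hat\rho^2_{ij}$ below the $l$-th extreme), so conditioning on $I$ destroys the product structure between the two variable-blocks and the claimed factorization fails. Your $c_p$ truncation separates the two statistics in \emph{value} range but not in \emph{index} dependence, so it does not cure this. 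You recognize the issue in your last paragraph, but invoking the Poisson approximation is not a fix---the entanglement through $I$ is exactly what needs to be undone, and you have not said how.

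The paper resolves this by never conditioning on a random index set. Instead it rewrites the order-statistic event as $\{\sum_{k}\indicate(B_k)\ge s\}$ with $B_k=\{n\hat\rho^2_{I_k}>b_p+y\}$ and applies Bonferroni (Lemma~\ref{lemma12}) to sandwich $\Pr(A_p\cap\{\sum_k\indicate(B_k)\ge s\})$ between alternating sums of $\Pr(A_pB_{k_1}\cdots B_{k_l})$ over \emph{deterministic} tuples $k_1<\cdots<k_l$. For each fixed tuple the event $B_{k_1}\cdots B_{k_l}$ depends only on $\{\s_{i_{k_m}},\s_{j_{k_m}}:1\le m\le l\}$, and the portion of the sum statistic over pairs with both indices outside this fixed set is genuinely independent of it---this is the clean analogue of your $S^\flat_{\mathrm{lo}}$. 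The residual tasks are (a) to show the removed piece over $\Lambda_{l,p}$ (at most $2lp$ pairs) is $o_p(\sqrt{p^*})$, which the paper does via $\tau$-th moment bounds with $\tau=6l$ (Lemmas~\ref{lemma14}--\ref{lemma15}) rather than a variance bound, because one needs the estimate to survive summation over $\binom{p^*}{l}$ tuples; and (b) to control the Bonferroni tail $H(p,l)=\sum\Pr(B_{k_1}\cdots B_{k_l})=O(1/l!)$ (Lemma~\ref{lemma13}). The Bonferroni-over-deterministic-tuples device is the concrete formalization your proposal is missing.
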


Based on Theorem~\ref{asymptotic independence}, we construct a Cauchy combination test that aggregates information across multiple choices of $k$:
\begin{align}\label{Cauchy combine test}
T_C
&=\frac{1}{K+1}\tan\!\left\{\Big(\tfrac{1}{2}-p_{5}\Big)\pi\right\}
+\frac{1}{K+1}\sum_{i=1}^K \tan\!\left\{\Big(\tfrac{1}{2}-p_{\lceil 2^{-i}p\rceil}\Big)\pi\right\}.
\end{align}
Under $H_0$, Theorem~\ref{asymptotic independence} implies asymptotic independence between $T_5$ and
$\{T_{\lceil 2^{-i}p\rceil}\}_{i=1}^K$, and each $T_{\lceil 2^{-i}p\rceil}$ is asymptotically normal.
Consequently, $T_C$ is asymptotically standard Cauchy, and the combined $p$-value is
\[
p_C = 1 - G(T_C),
\]
where $G(\cdot)$ denotes the C.D.F. of the standard Cauchy distribution. We reject $H_0$ at level $\alpha$
if $p_C \le \alpha$.

\section{Simulation}

We begin by examining the empirical size performance of the proposed $L$-tests. Specifically, we consider the independent component model
$$
\X_k = \bms^{1/2} \Z_k, \quad k=1,\ldots,n,
$$
where $\Z_k=(Z_{k1},\ldots,Z_{kp})^\top$ and $\{Z_{ki};1\le k\le n,1\le i\le p\}$ are i.i.d. according to one of the following three distributions: (1) standard normal $N(0,1)$; (2) uniform distribution $U(-\sqrt{3},\sqrt{3})$ with unit variance; and (3) standardized $t_5$ distribution, i.e., $t_5/\sqrt{5/3}$. Under the null hypothesis of mutual independence, we set the covariance matrix $\bms=\I_p$.

Table \ref{tab1} summarizes the empirical sizes of the $L$-tests as well as the proposed Cauchy combination test $T_C$, based on sample sizes $n=100,200$ and dimensions $p=100,200$. Across all settings and distributions, the observed rejection rates are very close to the nominal significance level, indicating that both the $L$-tests and the combined test $T_C$ are able to control type I error effectively. These results demonstrate that the proposed procedures are robust to different distributional assumptions, including light-tailed, bounded-support, and heavy-tailed cases, which highlights their practical applicability in a wide range of real data scenarios.

\begin{table}[htbp]
    \centering
    \renewcommand{\arraystretch}{1.3}
    \setlength{\tabcolsep}{8pt}
    \caption{Empirical sizes of L-tests and Cauchy combination test $T_C$.} 
    \begin{tabular}{c |c c |c c |c c} 
        \hline \hline
        & \multicolumn{2}{c}{$N(0,1)$} & \multicolumn{2}{c}{$U(-\sqrt{3},\sqrt{3})$}&\multicolumn{2}{c}{$t_3/\sqrt{3}$}\\ \hline
         {$n$} &  {100} &  {200} &  {100} &  {200} &  {100} &  {200} \\
        \hline
        &\multicolumn{6}{c}{$p=100$}\\ \hline
        $T_5$ & 0.060 & 0.046 & 0.040 & 0.055 & 0.046 & 0.042 \\
        $T_{[\frac{p^*}{256}]}$ & 0.048 & 0.046 & 0.040 & 0.049 & 0.049 & 0.049 \\
        $T_{[\frac{p^*}{128}]}$ & 0.055 & 0.046 & 0.039 & 0.055 & 0.053 & 0.054 \\
        $T_{[\frac{p^*}{64}]}$ & 0.048 & 0.051 & 0.043 & 0.058 & 0.052 & 0.054 \\
        $T_{[\frac{p^*}{32}]}$ & 0.050 & 0.049 & 0.049 & 0.053 & 0.051 & 0.057 \\
        $T_{[\frac{p^*}{16}]}$ & 0.045 & 0.048 & 0.060 & 0.067 & 0.057 & 0.053 \\
        $T_{[\frac{p^*}{8}]}$ & 0.045 & 0.054 & 0.059 & 0.064 & 0.058 & 0.053 \\
        $T_{[\frac{p^*}{4}]}$ & 0.043 & 0.055 & 0.064 & 0.064 & 0.058 & 0.056 \\
        $T_{[\frac{p^*}{2}]}$ & 0.045 & 0.055 & 0.065 & 0.062 & 0.056 & 0.050 \\
        $T_C$ & 0.043 & 0.065 & 0.063 & 0.067 & 0.061 & 0.067 \\
        \hline 
        &\multicolumn{6}{c}{$p=200$}\\ \hline
$T_{[5]}$&0.047&0.058&0.059&0.051&0.052&0.048\\
$T_{[\frac{p^*}{624}]}$&0.057&0.066&0.056&0.045&0.051&0.063\\
$T_{[\frac{p^*}{312}]}$&0.055&0.064&0.059&0.04&0.054&0.058\\
$T_{[\frac{p^*}{256}]}$&0.044&0.064&0.052&0.043&0.053&0.052\\
$T_{[\frac{p^*}{128}]}$&0.046&0.063&0.045&0.036&0.056&0.056\\
$T_{[\frac{p^*}{64}]}$&0.051&0.068&0.042&0.039&0.059&0.052\\
$T_{[\frac{p^*}{32}]}$&0.056&0.067&0.036&0.044&0.067&0.059\\
$T_{[\frac{p^*}{16}]}$&0.049&0.066&0.039&0.042&0.057&0.053\\
$T_{[\frac{p^*}{8}]}$&0.043&0.067&0.044&0.043&0.061&0.052\\
$T_{[\frac{p^*}{4}]}$&0.045&0.066&0.048&0.042&0.065&0.047\\
$T_{[\frac{p^*}{2}]}$&0.043&0.061&0.039&0.039&0.062&0.043\\
$T_C$&0.064&0.057&0.058&0.058&0.066&0.062\\ \hline \hline
    \end{tabular}
    \label{tab1} 
\end{table}

Next, we compare the proposed adaptive method $T_C$ with several existing procedures: 
the $L_2$-norm based sum-type test $T_{SC}$ of \cite{schott2005testing}, 
the $L_4$-norm based sum-type test $T_{LX}$ of \cite{li2023test}, 
the max-type test $T_J$ of \cite{jiang2004asymptotic}, 
and the adaptive test $T_F$ of \cite{feng2022max}. 
To ensure a fair comparison, we adopt a size-corrected power analysis. Specifically, 
we first compute the critical values of each test under the null hypothesis, 
and then evaluate their empirical power under the alternative hypothesis.  

For the alternative, we consider the block covariance structure
\[
\bms=\begin{pmatrix}
\bms_1 & \bm 0 \\
\bm 0 & \I_{p-m}
\end{pmatrix},
\]
where $\bms_1=(\sigma_{ij})_{1\leq i,j\leq m}\in\mathbb{R}^{m\times m}$ with entries
\[
\sigma_{ij}=\left\{\frac{\theta}{m(m-1)}\right\}^{|i-j|/4}.
\]
Here we set $\theta=1.5$. 
Under this setting, the sparsity level of the alternative is given by $s=m(m-1)/2$.  

Figure \ref{fig1}-\ref{fig2} report the empirical power curves of the different methods across a range of sparsity levels $s$. 
Several key observations emerge. First, our proposed adaptive test $T_C$ consistently achieves the best or near-best performance across most regimes. 
In particular, when the alternative is moderately sparse, $T_C$ significantly outperforms all competitors. 
Second, in the very dense regime ($s$ large), the $L_2$-based sum-type test $T_{SC}$ attains slightly higher power than $T_C$, 
as expected due to its design focus on dense alternatives. 
Third, the max-type test $T_J$ is only effective when the alternative is extremely sparse ($s$ very small), 
but its power deteriorates rapidly as sparsity decreases. 
Fourth, the adaptive procedure $T_F$, which combines $T_{SC}$ and $T_J$, improves robustness across sparsity levels; 
however, it still fails to fully capture intermediate alternatives and is uniformly less powerful than $T_C$. 
Finally, the $L_4$-norm based test $T_{LX}$ performs comparably to $T_F$ in light- or moderate-tailed distributions, 
but under heavy-tailed settings, e.g.\ $Z_{ki}\sim t(5)/\sqrt{5/3}$, 
its performance drops sharply due to the requirement of finite eighth moments.  

Overall, these results demonstrate that the proposed $L$-test with Cauchy combination, $T_C$, 
provides a highly efficient and robust testing procedure. 
Unlike existing methods that excel only in specific sparsity regimes or distributional settings, 
$T_C$ maintains stable and superior power across a wide spectrum of alternatives, 
making it particularly suitable for high-dimensional applications with unknown dependence structures and tail behaviors.

\begin{figure}[htpb]
	\centering
 \caption{Power curves of each test with $n=100,p=100$.}
	\includegraphics[width=\linewidth]{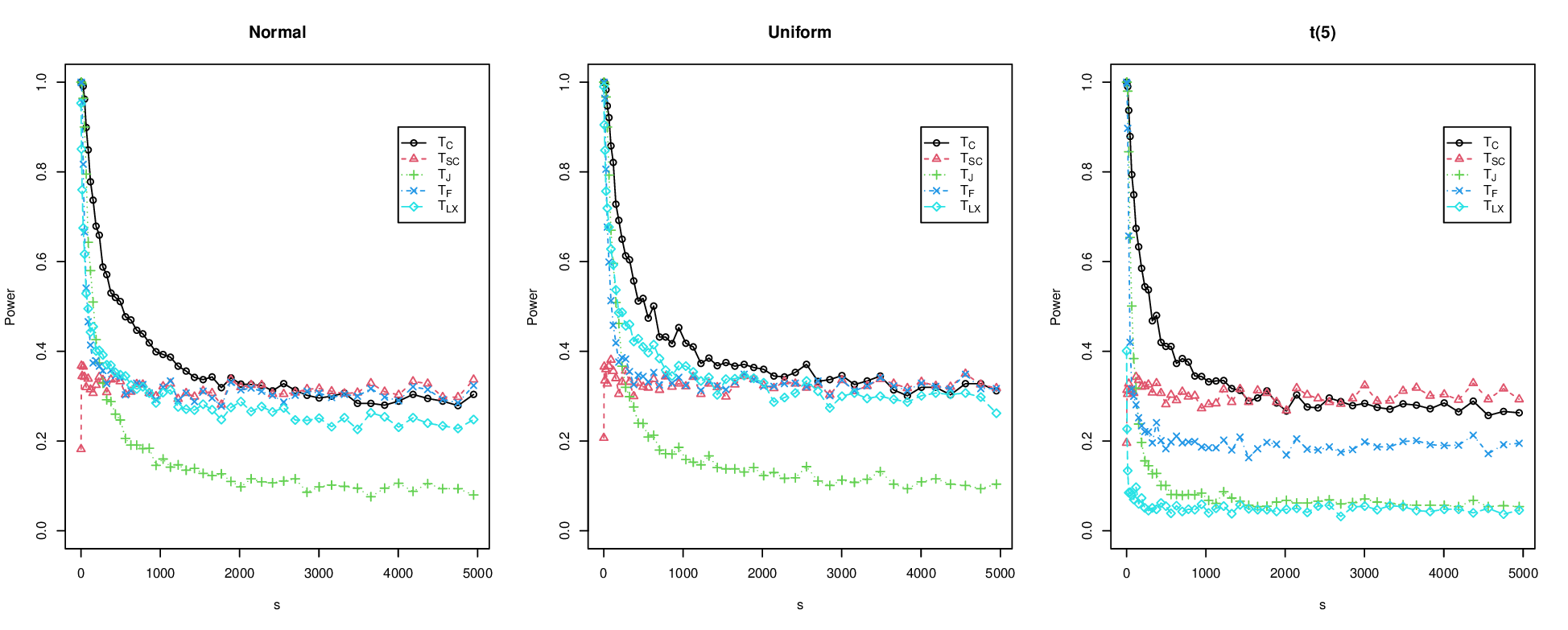} \label{fig1}
\end{figure}

\begin{figure}[htpb]
	\centering
	\includegraphics[width=\linewidth]{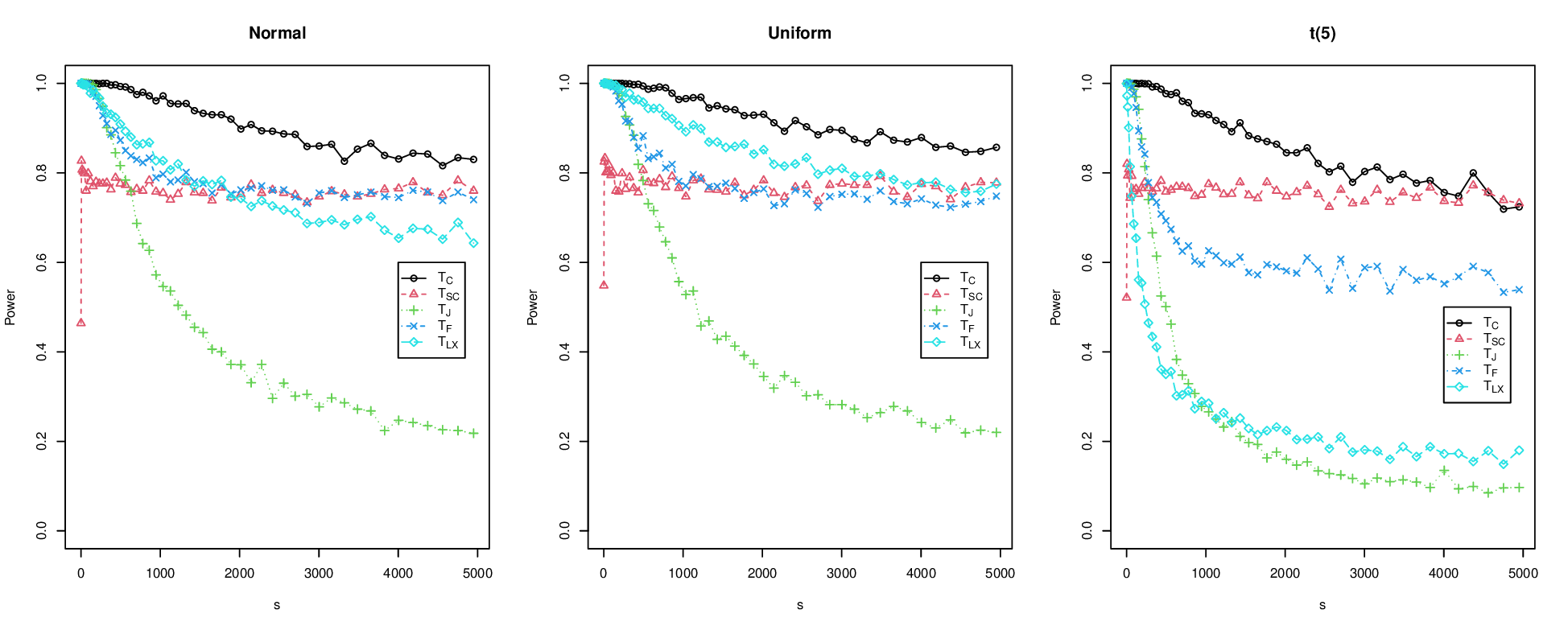}
	\caption{Power curves of each test with $n=200,p=100$.}\label{fig2}
\end{figure}

\section{Conclusion}
In this paper, we proposed two types of $L$-tests for high-dimensional mutual independence testing and further introduced a Cauchy combination procedure that integrates different $L$-tests. We established their theoretical properties and demonstrated through simulation studies that our method outperforms existing approaches based on Pearson’s correlation. Although the proposed tests exhibit strong performance, they still rely on Pearson’s correlation as the baseline measure, which is known to be sensitive to heavy-tailed distributions and incapable of capturing nonlinear dependencies. Recently, a growing body of work has focused on high-dimensional independence testing based on nonlinear correlation measures, such as \cite{han2017distribution}, \cite{yao2018testing}, \cite{drton2020high}, \cite{cai2024asymptotic}, \cite{wang2024rank}, and \cite{xia2025consistent}. An important avenue for future research is to explore how $L$-type test statistics can be constructed upon these nonlinear correlation measures in order to achieve robustness against heavy tails and nonlinear dependence structures.

\section{Appendix}
In this section we prove Theorems \ref{asymptotic distribution of ordered statistics}-\ref{asymptotic independence}. The letter $C$ stands for a constant that may vary from place to place throughout this section.

We begin by collecting a few essential technical lemmas in Section \ref{appendix1}.

\subsection{Technical tools}\label{appendix1}
Intuitively, by a point process, we generally mean a series of events occurring in time (or space, or both) according to some statistical law. Suppose that $\{N_n\}$ is a sequence of point processes on a rectangle $S\subset\mathbb{R}^n$ and that $N$ is a point process. Then we may say that $N_n$ converges in distribution to $N$ (written $N_n\cd N$) if the sequence of vector r.v.'s $(N_n(B_1),\ldots,N_n(B_k))$ converges in distribution to $(N(B_1),\ldots,N(B_k))$ for each choice of $k$, and all bounded Borel sets $B_i\subset S$ such that $N(\partial B_i) = 0$ a.s., $i = 1,\ldots,k$, (writing $\partial B$ for the boundary of the set $B$).
\begin{lemma}\label{lemma1}
    (\romannumeral1) Let $N_n, n\ge 1$ and $N$ be point processes on the semiclosed interval $S$ in the real line, $N$ being simple. Suppose that\\
    (a) $\E(N_n((c,d]))\to\E(N((c,d]))$ for all $-\infty<c<d<\infty$ such that $(c,d]\subset S$, and\\
    (b) $\Pr(N_n(B)=0)\to\Pr(N(B)=0)$ for all $B$ of the form $\cup_{i=1}^k(c_i,d_i]$ with $(c_i,d_i]\subset S$ for $i=1,\ldots,k;k\ge 1$.\\
    Then $N_n\cd N$.\\
    (\romannumeral2) The same is true for point processes on a semi-closed rectangle $S$ in the plane, if the semi-closed intervals $(c,d],(c_i,d_i]$ are replaced by semiclosed rectangles $(c,d]\times(\gamma,\delta],(c_i,d_i]\times(\gamma_i,\delta_i]$.
\end{lemma}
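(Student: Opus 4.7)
The strategy is the classical two-step scheme for weak convergence of point processes: first establish tightness of $\{N_n\}$ in the vague topology, then identify every subsequential limit with $N$ through its avoidance (void) probabilities on a sufficiently rich class of sets, appealing to the uniqueness theorem for simple point processes.

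For tightness, condition (a) gives $\sup_n \E(N_n((c,d])) < \infty$ for every bounded $(c,d]\subset S$. By Markov's inequality the $\mathbb{N}_0$-valued random variables $N_n((c,d])$ are tight uniformly in $n$. Exhausting $S$ by a countable increasing sequence of bounded semiclosed intervals and applying the standard compactness criterion for random measures, we conclude that $\{N_n\}$ is relatively compact in distribution in the vague topology on the space of boundedly finite integer-valued measures on $S$, so every subsequence admits a further vaguely convergent subsequence.

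For identification, let $N^*$ be any subsequential vague limit of $\{N_n\}$. For $B=\bigcup_{i=1}^k(c_i,d_i]$ with $(c_i,d_i]\subset S$, the functional $\nu\mapsto \mathbf{1}\{\nu(B)=0\}$ is continuous at configurations $\nu$ with $\nu(\partial B)=0$. Since the intensity measure $\mu(\cdot)=\E(N(\cdot))$ has at most countably many atoms, we may perturb the endpoints $c_i,d_i$ within arbitrarily small neighborhoods to land outside this atom set; then $\E(N^*(\partial B))=0$, so $N^*(\partial B)=0$ almost surely. The portmanteau theorem combined with condition (b) therefore yields $\Pr(N^*(B)=0)=\lim_n \Pr(N_n(B)=0)=\Pr(N(B)=0)$. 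The class of finite unions of semiclosed intervals is a dissecting ring generating the Borel $\sigma$-algebra of $S$, and Kallenberg's uniqueness theorem asserts that the distribution of a simple point process is completely determined by its avoidance function on such a class. Hence $N^*\stackrel{d}{=}N$ for every subsequential limit, which is equivalent to $N_n\cd N$.

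Part (ii) follows by an identical argument: finite unions of semiclosed rectangles $(c,d]\times(\gamma,\delta]$ again form a dissecting generating ring on $\mathbb{R}^2$, so both the tightness step and the uniqueness step transfer verbatim, with endpoints of each rectangle perturbed to avoid the (countable set of) atomic lines of the intensity measure of any subsequential limit. The principal technical obstacle throughout is the continuity-set verification needed to apply the portmanteau theorem in the vague topology—the one point where the half-open convention on intervals and rectangles matters—but this is handled by the countable perturbation of endpoints just described.
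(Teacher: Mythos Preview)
The paper does not prove this lemma; it simply cites Theorem~A.1 of Leadbetter, Lindgren and Rootz\'en (1983). Your proposal supplies the standard proof behind that citation---tightness from the intensity bound (a), followed by identification of every subsequential limit via the avoidance probabilities (b) and Kallenberg's uniqueness theorem---so your approach is essentially the argument the cited reference gives.

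Two small technical points are worth tightening. First, the endpoint perturbation should be made relative to the fixed atoms of the \emph{subsequential limit} $N^*$ (the countable set $\{x:\Pr(N^*(\{x\})>0)>0\}$), not relative to the atoms of $\E(N(\cdot))$; knowing $\E(N(\partial B))=0$ does not by itself give $N^*(\partial B)=0$ a.s. Second, Kallenberg's avoidance-function uniqueness theorem determines the law of a \emph{simple} process, so to conclude $N^*\stackrel{d}{=}N$ you must either check that $N^*$ is simple or invoke the version of the theorem requiring only the target $N$ to be simple. The clean way is to use condition~(a): along the subsequence, Fatou-type bounds give $\E(N^*(B))\le\E(N(B))$ for continuity sets $B$, while the matching avoidance functions force the support process of $N^*$ to have the same intensity as $N$; together these rule out multiplicities in $N^*$. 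With these two adjustments your outline is complete.
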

\begin{proof}
    This basic Point Process Theory follows from Theorem A.1 in \cite{Leadbetter1983ExtremesAR}.
\end{proof}

\begin{lemma}\label{lemma2}
    Suppose $\{Z_{ki};1\le k\le n,1\le i\le p\}$ are i.i.d. random variables with $E(Z_{11})=0$, $\E(Z_{11}^2)=1$ and $\E\{\exp(t_0|Z_{11}|^{a})\}<\infty$ for some $0<a\le 2$ and $t_0>0$. Assume $p=p_n\to \infty$ and $\log p=o(n^\beta)$ as $n\to\infty$, where $\beta=a/(4+a)$. Define
    \begin{align*}
        r_{ij}=\frac{(\Z_{.i}-\bar{Z}_i)^\top(\Z_{.j}-\bar{Z}_j)}{\Vert\Z_{.i}-\bar{Z}_i\Vert_2\cdot\Vert\Z_{.j}-\bar{Z}_j\Vert_2},
    \end{align*}
    where $\Z_{.i}=(Z_{1i},\ldots,Z_{ni})^\top$ and $\bar{Z}_i=n^{-1}\sum_{k=1}^nZ_{ki}$. Here, we write $\Z_{.i}-\bar{Z}_i$ for $\Z_{.i}-\bar{Z}_i\bone_n$, where $\bone_n=(1,\ldots,1)^\top\in\mR^n$. Then $\max_{1\le i<j\le p}nr_{ij}^2-4\log p+\log(\log p)$ converges weakly to an extreme distribution of type \uppercase\expandafter{\romannumeral1} with distribution function $\Lambda(x)=\exp\{-(8\pi)^{-1/2}\exp (-x/2)\}$.
\end{lemma}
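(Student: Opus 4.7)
The plan is to establish the stronger Poisson point-process convergence
\[
N_n := \sum_{1\le i<j\le p} \delta_{\,n r_{ij}^2 - b_p} \;\cd\; N
\]
on $\mathbb{R}$, where $N$ is a Poisson process whose mean measure $\mu$ satisfies $\mu((x,\infty)) = (8\pi)^{-1/2} e^{-x/2}$. Since $\max_{i<j} n r_{ij}^2 - b_p \le x$ if and only if $N_n((x,\infty)) = 0$, and $\Pr(N((x,\infty))=0) = e^{-\mu((x,\infty))} = \Lambda(x)$, the conclusion of the lemma will follow. The process-level convergence is obtained by verifying the two sufficient conditions of Lemma~\ref{lemma1}(i) for the one-dimensional point process $N_n$ on bounded half-open intervals $(c,d]\subset\mathbb{R}$.

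First I would verify condition (a). Because the columns $\Z_{.i}$ are i.i.d.\ vectors and $r_{ij}$ depends only on columns $i$ and $j$, all $r_{ij}$ share a common distribution, so $\E N_n((c,d]) = p^*\,\Pr(n r_{12}^2 - b_p \in (c,d])$. The heart of the matter is the uniform tail estimate
\[
\Pr(n r_{12}^2 > b_p + x) \;\sim\; (2\pi)^{-1/2}\, p^{-2}\, e^{-x/2} \qquad (x\in[c,d]).
\]
I would first replace $r_{12}$ by $n^{-1}\Z_{.1}^\top \Z_{.2}$, controlling the replacement error via $\|\Z_{.i}\|_2^2/n\to 1$ in probability and $\bar Z_i = O_p(n^{-1/2})$, thereby reducing the problem to a moderate-deviation estimate for the i.i.d.\ sum $\sum_{k=1}^n Z_{k1} Z_{k2}$ at level $\sqrt{n(b_p+x)} \sim 2\sqrt{n\log p}$. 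Matching the Gaussian tail $\Pr(\chi^2_1 > y)\sim \sqrt{2/\pi}\,y^{-1/2}e^{-y/2}$ at $y=b_p+x$, together with $p^* \sim p^2/2$, then yields $\E N_n((x,\infty)) \to (8\pi)^{-1/2} e^{-x/2}$, as required.

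Next, condition (b) is a near-independence argument. Writing $A_{ij} := \{n r_{ij}^2 - b_p \in B\}$ for $B$ a finite union of bounded intervals, I would apply the Chen--Stein method. If the index sets of $(i,j)$ and $(i',j')$ are disjoint, then $A_{ij}$ and $A_{i'j'}$ are independent, since they are measurable with respect to disjoint collections of i.i.d.\ columns of $\Z$. Consequently the dependency neighborhood of $(i,j)$ has size $O(p)$, consisting of pairs sharing exactly one index, and the Chen--Stein bound reduces to showing
\[
\sum_{(i,j)}\ \sum_{(i',j')\ \mathrm{share\ one\ index}}\Pr(A_{ij}\cap A_{i'j'}) \;=\; o(1).
\]
This in turn requires a joint moderate-deviation bound for $(r_{12}, r_{13})$---two correlations sharing column $1$---giving $\Pr(n r_{12}^2 > b_p+c,\ n r_{13}^2 > b_p + c) = O\!\left(p^{-4}(\log p)^{-1}\right)$, so that the $O(p^3)$ such ordered triples contribute $o(1)$.

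The main obstacle is the moderate-deviation estimate itself, both marginal and bivariate, up to the level $\sqrt{b_p+O(1)} \sim 2\sqrt{\log p}$ under only the sub-Weibull moment assumption $\E\exp(t_0|Z_{11}|^a)<\infty$. This is exactly where the growth condition $\log p = o(n^{a/(4+a)})$ enters. I would truncate each $Z_{ki}$ at $M_n := n^{1/(4+a)}$, so that the total discarded mass $np\cdot\Pr(|Z_{11}|>M_n) \le np\exp(-t_0 n^{a/(4+a)})$ is negligible relative to the target tail $p^{-2}$, and then apply a Bernstein / Cramer-type expansion of the log-MGF of the truncated products $Z_{k1}Z_{k2}\indicate(|Z_{k1}|,|Z_{k2}|\le M_n)$ to show that their sum inherits a Gaussian tail on the required range. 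Once this sub-Weibull moderate-deviation input is in place, the remaining point-process bookkeeping is routine.
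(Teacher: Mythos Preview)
The paper does not prove Lemma~\ref{lemma2} at all: its entire proof is the sentence ``See Theorem~3 in \cite{cai2011limiting}.'' Your proposal, by contrast, is an outline of an actual argument, and it is essentially the argument of the cited reference (Cai--Jiang, 2011): replace $r_{ij}$ by the uncentred, unnormalized inner product $n^{-1}\Z_{.i}^\top\Z_{.j}$, invoke a moderate-deviation expansion for the i.i.d.\ sum $\sum_k Z_{k1}Z_{k2}$ at level $\sim 2\sqrt{\log p}$ under the sub-Weibull assumption (the truncation level $n^{1/(4+a)}$ and the growth condition $\log p = o(n^{a/(4+a)})$ are exactly the ingredients used there), and then run a Chen--Stein/Poisson approximation with dependency neighborhoods consisting of pairs sharing one index. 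The outline is correct; the only place you are slightly optimistic is the joint tail $\Pr(nr_{12}^2>b_p+c,\ nr_{13}^2>b_p+c)$, for which the cited paper (and Lemma~\ref{lemma5} here) only delivers $O(p^{-4+\varepsilon})$ rather than $O(p^{-4}(\log p)^{-1})$---but either bound is $o(p^{-3})$, so the Chen--Stein sum over $O(p^3)$ triples still vanishes.

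It is worth noting that the present paper carries out precisely this program---moderate deviation via Lemma~\ref{lemma4}, the joint tail via Lemma~\ref{lemma5}, the replacement of $\hat\rho_{ij}$ by $\tilde\rho_{ij}$ via Lemma~\ref{lemma6}, and the Poisson approximation via Lemma~\ref{lemma3}---in its proof of Theorem~\ref{asymptotic distribution of ordered statistics}(i), of which Lemma~\ref{lemma2} is the special case $s=1$. So your proposal is in fact closer to what the paper does for its main theorem than to what it does for Lemma~\ref{lemma2} itself.
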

\begin{proof}
    See Theorem 3 in \cite{cai2011limiting}.
\end{proof}

\begin{lemma}\label{lemma3}
    Let $I(1),\ldots,I(d)$ be disjoint index sets and $\{\eta_{\alpha},\alpha\in\cup_{j=1}^dI(j)\}$ be random variables. For each $1\le j\le d$, let $\{B_{\alpha},\alpha\in I(j)\}$ be a set of subsets of $I(j)$, also set $\lambda_j=\sum_{\alpha\in I(j)}\Pr(\eta_{\alpha}>t_j)$ for some $t_j\in\mathbb{R}$. Then
    \begin{align*}
        \left|\Pr\left(\bigcap_{j=1}^d\left\{\max_{\alpha\in I(j)}\eta_{\alpha}\le t_j\right\}\right)-\exp\left(-\sum_{j=1}^d\lambda_j\right)\right|\le 2\left\{1\wedge 1.4\left(\min_{1\le j\le d}\lambda_j\right)^{-1/2}\right\}(2b_1+2b_2+b_3),
    \end{align*}
    where 
    \begin{align*}
        b_1=&\sum_{j=1}^d\sum_{\alpha\in I(j)}\sum_{\beta\in B_\alpha}\Pr(\eta_\alpha>t_j)\Pr(\eta_\beta>t_j),~~b_2=\sum_{j=1}^d\sum_{\alpha\in I(j)}\sum_{\alpha\ne\beta\in B_\alpha}\Pr(\eta_\alpha>t_j,\eta_\beta>t_j),\\
        b_3=&\sum_{j=1}^d\sum_{\alpha\in I(j)}\E\{|\Pr(\eta_\alpha>t_j|\sigma\{\eta_\beta:\beta\in I(j)-B_\alpha\})-\Pr(\eta_\alpha>t_j)|\},
    \end{align*}
    and $\sigma\{\eta_\beta:\beta\in I(j)-B_\alpha\}$ is the $\sigma$-algebra generated by $\{\eta_\beta:\beta\in I(j)-B_\alpha\}$. In particular, if for each $1\le j\le d$ and all $\alpha\in I(j)$, $\eta_\alpha$ is independent of $\{\eta_\beta:\beta\in I(j)-B_\alpha\}$, then $b_3=0$.
\end{lemma}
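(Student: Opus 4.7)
The plan is to apply the Chen--Stein method for Poisson approximation, treating the exceedances as rare events with a locally dependent structure.

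First, I would reduce to an indicator-sum problem. For each $\alpha\in I(j)$, set $X_\alpha=\indicate(\eta_\alpha>t_j)$, and define $W_j=\sum_{\alpha\in I(j)}X_\alpha$ and $W=\sum_{j=1}^d W_j$; then $\E W_j=\lambda_j$ and $\E W=\lambda:=\sum_j\lambda_j$. Because the groups $I(j)$ are disjoint, $\bigcap_{j=1}^d\{\max_{\alpha\in I(j)}\eta_\alpha\le t_j\}=\{W=0\}$ and $e^{-\lambda}=\prod_{j=1}^d\Pr(\mathrm{Poi}(\lambda_j)=0)$. The target therefore equals
$$\left|\Pr(W_1=0,\ldots,W_d=0)-\prod_{j=1}^d e^{-\lambda_j}\right|,$$
which is dominated by the total variation distance between $(W_1,\ldots,W_d)$ and a vector $(Z_1,\ldots,Z_d)$ of independent Poisson variables with means $\lambda_j$.

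Second, I would invoke Stein's method. Let $f$ be the bounded solution of the Stein equation $\lambda f(k+1)-k f(k)=\indicate(k=0)-e^{-\lambda}$; the classical estimates (Barbour--Holst--Janson 1992) give $\|f\|_\infty$ and $\|\Delta f\|_\infty$ bounded by $1\wedge 1.4\,\lambda^{-1/2}$. Writing
$$\Pr(W=0)-e^{-\lambda}=\lambda\,\E f(W+1)-\sum_{j=1}^d\sum_{\alpha\in I(j)}\E[X_\alpha f(W)],$$
I would handle each summand by splitting $W=Y_\alpha+V_\alpha$ with $Y_\alpha=\sum_{\beta\in B_\alpha}X_\beta$ and decomposing $\E[X_\alpha f(W)]=\E[X_\alpha f(V_\alpha)]+\E[X_\alpha\{f(W)-f(V_\alpha)\}]$. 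The second term is controlled via $|f(W)-f(V_\alpha)|\le Y_\alpha\|\Delta f\|_\infty$, producing the products $p_\alpha p_\beta$ and the joint probabilities $\E[X_\alpha X_\beta]$ for $\beta\in B_\alpha$, i.e.\ contributions to $b_1$ and $b_2$. The first term is compared with $p_\alpha\,\E f(V_\alpha)$ through
$$\E[X_\alpha f(V_\alpha)]-p_\alpha\,\E f(V_\alpha)=\E\!\left[\{\Pr(X_\alpha=1\mid\mathcal{F}_\alpha^{(j)})-p_\alpha\}\,f(V_\alpha)\right],$$
where $\mathcal{F}_\alpha^{(j)}:=\sigma(\eta_\beta:\beta\in I(j)\setminus B_\alpha)$, yielding the $b_3$ term with the within-group conditioning stated in the lemma.

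Third, the factor $1\wedge 1.4(\min_j\lambda_j)^{-1/2}$ arises by setting up the Stein equation coordinate-wise for each group $j$ and exploiting the factorisation $e^{-\lambda}=\prod_j e^{-\lambda_j}$; this is the form of Theorem~3 in Arratia, Goldstein, and Gordon (\emph{Statistical Science}, 1990). Summing the pieces over $j$ and $\alpha$ and absorbing universal constants then yields the claimed bound. The "in particular" clause is immediate: if $\eta_\alpha$ is independent of $\{\eta_\beta:\beta\in I(j)\setminus B_\alpha\}$, then $\Pr(\eta_\alpha>t_j\mid\mathcal{F}_\alpha^{(j)})=\Pr(\eta_\alpha>t_j)$ almost surely, so $b_3=0$.

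The main obstacle is justifying the within-group restriction of the conditioning in the definition of $b_3$. A univariate Chen--Stein bound applied directly to $W$ with neighbourhoods $B_\alpha$ would require conditioning on the strictly larger $\sigma$-algebra $\sigma(X_\beta:\beta\in I\setminus B_\alpha)$, which by Jensen's inequality produces a strictly larger $L^1$ discrepancy. Reducing to the within-group $\mathcal{F}_\alpha^{(j)}$ is the substantive technical step and requires the multivariate (coordinate-wise) Stein formulation together with a careful coupling across groups; without this step the argument would only yield a weaker bound in which $b_3$ is replaced by its full-complement analogue.
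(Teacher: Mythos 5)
Your framework is right: this is the Chen--Stein/Poisson-approximation bound of Arratia--Goldstein--Gordon, and the paper does nothing more than cite Theorem~2 of that 1989 paper, so in spirit you have found the correct source. Your reduction of the target event to $\{W=0\}$, the use of the Stein operator $\lambda f(k+1)-kf(k)$, the split $W=V_\alpha+Y_\alpha$ with $Y_\alpha=\sum_{\beta\in B_\alpha}X_\beta$, the extraction of $b_1$ and $b_2$ from $|f(W)-f(V_\alpha)|\le Y_\alpha\|\Delta f\|_\infty$, and the appearance of the magic constants $1\wedge 1.4\lambda^{-1/2}$ are all the standard ingredients of that argument. So methodologically you have reconstructed the route behind the citation.

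The gap, however, is real, and you have diagnosed it yourself but not closed it. A univariate Stein argument based on the decomposition $W=V_\alpha+Y_\alpha$ inevitably produces a $b_3$ conditioned on $\sigma\{\eta_\beta:\beta\in\bigl(\bigcup_{j'}I(j')\bigr)\setminus B_\alpha\}$, because $V_\alpha$ involves \emph{all} the exceedance indicators outside $B_\alpha$, not only those in $\alpha$'s own group $I(j)$, and $\E\lvert\E[X_\alpha\mid\mathcal{G}]-p_\alpha\rvert$ can only increase when $\mathcal{G}$ is refined. Consequently your argument as written yields a strictly weaker statement in which the conditioning in $b_3$ is on the full-complement $\sigma$-algebra, not the within-group one that the lemma asserts. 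You wave at a ``multivariate coordinate-wise Stein formulation together with a careful coupling across groups,'' but that is precisely the step the lemma is being cited for, and without executing it your proposal does not establish the stated bound. (In the paper's downstream applications this distinction is essentially harmless, because the different groups are built from the same underlying pairwise statistics $n\hat\rho^2_{I_k}$ with different thresholds $t_j$, so conditioning on the within-group variables already determines the out-of-group ones; but the lemma as stated makes no such assumption, and a correct proof must deliver the within-group $b_3$ without it.) In short: right method, right reference, but the ``substantive technical step'' you flag is not a side remark --- it is the content of the lemma, and your sketch does not supply it.
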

\begin{proof}
    This Poisson approximation result is essentially a special case of Theorem 2 from \cite{AGG89poisson_approximation}.
\end{proof}

\begin{lemma}\label{lemma4}
    Let $\xi_1,\ldots,\xi_n$ be i.i.d. random variables with $\E(\xi_1)=0$, $\E(\xi_1^2)=1$ and $\E(e^{t_0|\xi_1|^a})<\infty$ for some $t_0>0$ and $0< a\le 2$. Put $S_n=\sum_{i=1}^n\xi_i$ and $\beta=a/(2+a)$. Then, for any $\{p_n:n\ge 1\}$ with $0<p_n\to\infty$ and $\log p_n=o(n^\beta)$ and $\{y_n:n\ge 1\}$ with $y_n\to y>0$,
    \begin{align*}
        \Pr\left(\frac{S_n}{\sqrt{n\log p_n}}\ge y_n\right)\sim\frac{p_n^{-y_n^2/2}(\log p_n)^{-1/2}}{\sqrt{2\pi}y}
    \end{align*}
    as $n\to\infty$.
\end{lemma}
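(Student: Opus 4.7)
The plan is to recognize this as a classical Cram\'er-type moderate deviation for sums of i.i.d.\ variables with semi-exponential tails, and to reduce it to the Gaussian-tail asymptotic. Setting $x_n := y_n\sqrt{\log p_n}$, the hypotheses force $x_n\to\infty$ and $x_n^2 = y_n^2\log p_n = o(n^\beta)$ with $\beta = a/(2+a)$. By Mills' ratio, $1-\Phi(x_n) \sim \frac{1}{\sqrt{2\pi}\,x_n}e^{-x_n^2/2}$, and since $y_n\to y$ this matches the right-hand side of the stated equivalence. So it suffices to establish
\[
\Pr(S_n \ge x_n\sqrt{n}) \;\sim\; 1 - \Phi(x_n) \qquad \text{as } n\to\infty.
\]

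I would prove this by the standard truncate-and-tilt strategy. First, truncate each summand at a level $A_n\to\infty$ to be calibrated, writing $\xi_i = \xi_i' + \xi_i''$ with $\xi_i' = \xi_i\indicate(|\xi_i|\le A_n)$. The exponential-moment condition yields $\Pr(|\xi_1|>A_n)\le Ce^{-t_0 A_n^a}$, so that $A_n$ can be chosen to make the event $\{\max_i|\xi_i|>A_n\}$ contribute only $o(1-\Phi(x_n))$ to the tail probability; the same bound controls the mean correction $|n\E \xi_1'| = |n\E[\xi_1 \indicate(|\xi_1|>A_n)]|$, which is negligible on the scale $x_n\sqrt{n}$. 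The problem therefore reduces to the tail of the centered bounded sum $\sum_i(\xi_i'-\E \xi_i')$. For bounded summands the moment generating function is entire, so I would apply Cram\'er's conjugate change of measure with tilt $\lambda_n$ solving $\psi_n'(\lambda_n) = x_n/\sqrt{n}$, where $\psi_n$ is the cumulant generating function of $\xi_1'-\E \xi_1'$. A Taylor expansion of $\psi_n$ around zero (using the moment bounds $|\E(\xi_1'-\E \xi_1')^j|\le C A_n^{j-2}$) gives $\lambda_n = (x_n/\sqrt{n})(1+o(1))$ and $n\psi_n(\lambda_n) - \lambda_n x_n\sqrt{n} = -x_n^2/2 + o(1)$. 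Under the tilted law the sum has mean $x_n\sqrt{n}$ and variance of order $n$, so a Berry-Esseen/local CLT argument yields
\[
\widetilde{\E}\bigl[e^{-\lambda_n(\sum_i \xi_i' - x_n\sqrt{n})}\indicate(\textstyle\sum_i \xi_i'\ge x_n\sqrt{n})\bigr] \sim \frac{1}{\sqrt{2\pi}\,\lambda_n\sqrt{n}} \sim \frac{1}{\sqrt{2\pi}\,x_n}.
\]
Multiplying by the exponential prefactor $e^{-x_n^2/2+o(1)}$ delivers the Gaussian tail.

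The crux is the calibration of $A_n$: the truncation level must simultaneously be large enough that $n e^{-t_0 A_n^a} = o\!\bigl(e^{-x_n^2/2}/\sqrt{\log p_n}\bigr)$ and small enough that the Taylor-residual bounds in the cumulant expansion (which involve quantities of order $x_n^3 A_n/\sqrt{n}$) are $o(1)$. The exponent $\beta = a/(2+a)$ in the hypothesis $\log p_n = o(n^\beta)$ is exactly what makes this trade-off feasible, and it is the sharp moderate-deviation range for variables with tail controlled by $\E \exp(t_0|\xi_1|^a)<\infty$. Since the argument is entirely classical, an efficient alternative is to invoke directly the corresponding moderate deviation theorem in Petrov's \emph{Sums of Independent Random Variables} (Chapter VIII), which yields the stated asymptotic in exactly this range.
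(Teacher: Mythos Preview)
Your proposal is mathematically sound: the reduction to $\Pr(S_n\ge x_n\sqrt n)\sim 1-\Phi(x_n)$ via Mills' ratio is correct, and the truncate-and-tilt argument you outline is precisely the classical Cram\'er--Linnik route to moderate deviations in the semi-exponential regime, with the exponent $\beta=a/(2+a)$ arising exactly from the calibration trade-off you describe.

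The paper, however, does not reprove this at all; it simply cites the result as Lemma~6.8 of Cai and Jiang (2011), who in turn rely on the moderate-deviation literature (Linnik, Petrov). So the comparison is that the paper treats this as a known black box, while you have sketched the mechanism behind it. Your closing remark that one could invoke Petrov directly is therefore exactly in the spirit of what the paper does. Your version buys self-containment and insight into why the growth condition $\log p_n=o(n^\beta)$ is sharp; the paper's version buys brevity.
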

\begin{proof}
    See Lemma 6.8 in \cite{cai2011limiting}.
\end{proof}

\begin{lemma}\label{lemma5}
    Suppose $\{Z_{ki};1\le k\le n,1\le i\le p\}$ are i.i.d. random variables with $E(Z_{11})=0$, $\E(Z_{11}^2)=1$ and $\E\{\exp(t_0|Z_{11}|^{a})\}<\infty$ for some $0<a\le 2$ and $t_0>0$. Assume $p=p_n\to \infty$ and $\log p=o(n^\beta)$ as $n\to\infty$, where $\beta=a/(4+a)$. Then, for any $\varepsilon>0$ and a sequence of positive numbers $\{t_n\}$ with limit $t>0$,
    \begin{align*}
        \Psi_n:=\E\left\{\Pr^1\left(\left|\sum_{k=1}^nZ_{k1}Z_{k2}\right|>t_n\sqrt{n\log p}\right)^2\right\}=O\left(\frac{1}{p^{t^2-\varepsilon}}\right)
    \end{align*}
    as $n\to\infty$, where $\Pr^1$ stands for the conditional probability given $\{Z_{k1}:1\le k\le n\}$.
\end{lemma}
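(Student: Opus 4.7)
The core observation is that, conditional on $\{Z_{k1}\}_{k=1}^{n}$, the summands $Z_{k1}Z_{k2}$, $k=1,\ldots,n$, are independent, so the squared conditional probability can be handled by a conditional moderate-deviation estimate. Unconditionally applying Lemma \ref{lemma4} to the iid product variables $Z_{k1}Z_{k2}$ would only bound the unsquared probability, giving roughly $p^{-t_n^2/2}$; the extra factor of two in the exponent that we need here comes precisely from the conditional independence of the two implicit copies of $\{Z_{k2}\}$ embedded in $\Pr^1(\cdot)^2$.

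First I would introduce a high-probability "good event" controlling the conditioning variables. Fix $M_n = n^{1/(4+a)}$ and, for a small $\delta>0$ to be chosen at the end, set
\[
A_n = \Bigl\{\textstyle\sum_{k=1}^{n}Z_{k1}^2 \le (1+\delta)n\Bigr\}\cap\Bigl\{\max_{1\le k\le n}|Z_{k1}|\le M_n\Bigr\}.
\]
The exponential-moment assumption together with standard union-bound and Bernstein/Chernoff arguments yield $\Pr(A_n^c)\le C\exp(-c\, n^{a/(4+a)})$, which is $o(p^{-C'})$ for every $C'>0$ since $\log p=o(n^{a/(4+a)})$. On $A_n^c$ I would use the trivial bound $\Pr^1(\cdot)^2\le 1$, whose contribution to $\Psi_n$ is then negligible compared with $p^{-(t^2-\varepsilon)}$.

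On $A_n$ I aim to show, uniformly over realizations of $\{Z_{k1}\}$ in $A_n$, a pointwise conditional moderate-deviation bound of the form
\[
\Pr^1\bigl(|W_n|>t_n\sqrt{n\log p}\bigr)\le C\, p^{-t_n^2/[2(1+\delta)] + o(1)},
\]
where $W_n=\sum_{k=1}^{n}Z_{k1}Z_{k2}$. The strategy is Cramér tilting: for a tilt parameter $\lambda\asymp t_n\sqrt{\log p/n}$, the conditional m.g.f.\ factorises as
\[
\E^1 e^{\lambda W_n}=\prod_{k=1}^{n} \E\exp(\lambda\, Z_{k1}\, Z_{12}),
\]
and each factor is estimated by a Taylor expansion in $\lambda Z_{k1}$ of the log-mgf of $Z_{12}$: the linear term vanishes, the quadratic term contributes $\tfrac12\lambda^2 Z_{k1}^2$, and the cubic-and-higher remainder is controlled by truncating $Z_{12}$ at level $(\log p)^{1/a}$ (so that the truncated part is essentially Gaussian by the exponential-moment assumption) together with the uniform bound $|Z_{k1}|\le M_n$ on $A_n$. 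Summing the quadratic contributions and using $\sum_k Z_{k1}^2\le (1+\delta)n$ then gives the displayed bound, exactly in the spirit of the proof of Lemma \ref{lemma4}.

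Squaring the pointwise conditional bound and taking expectation over $A_n$ produces a contribution of order $p^{-t^2/(1+\delta)+o(1)}$; choosing $\delta$ small enough that $t^2/(1+\delta)>t^2-\varepsilon$ yields the claim. The main obstacle will be carrying out the conditional Cramér estimate with \emph{uniform} control over the random weights $Z_{k1}$; the two truncation levels, $M_n=n^{1/(4+a)}$ for $Z_{k1}$ and $(\log p)^{1/a}$ for $Z_{k2}$, must be jointly tuned so that both the discarded mass from each truncation and the cubic Taylor remainder are simultaneously $o(1)$. It is precisely this matching constraint that forces the exponent $a/(4+a)$ in the growth condition $\log p=o(n^{a/(4+a)})$.
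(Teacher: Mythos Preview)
The paper does not supply its own argument for this lemma; it simply cites Lemma~6.7 of \cite{cai2011limiting}. Your sketch is a faithful reconstruction of the standard proof of that result: a good-set argument on $\{Z_{k1}\}$ (controlling both $\sum_k Z_{k1}^2$ and $\max_k|Z_{k1}|$), followed by a conditional Cram\'er/Chernoff bound with truncation of $Z_{k2}$, then squaring and integrating. The essential observation you make---that the square of the conditional probability is what produces the exponent $t^2$ rather than $t^2/2$---is exactly the point of the lemma, and your handling of the two truncation scales $n^{1/(4+a)}$ and $(\log p)^{1/a}$ correctly identifies why the growth restriction $\log p=o(n^{a/(4+a)})$ arises.

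One remark that may streamline the write-up: since $\Pr^1(\cdot)^2=\E^1\bigl[\indicate(|\sum_k Z_{k1}Z_{k2}|>t_n\sqrt{n\log p})\indicate(|\sum_k Z_{k1}Z_{k3}|>t_n\sqrt{n\log p})\bigr]$ for an independent copy $\{Z_{k3}\}$ of $\{Z_{k2}\}$, one has
\[
\Psi_n=\Pr\Bigl(\Bigl|\sum_{k=1}^n Z_{k1}Z_{k2}\Bigr|>t_n\sqrt{n\log p},\ \Bigl|\sum_{k=1}^n Z_{k1}Z_{k3}\Bigr|>t_n\sqrt{n\log p}\Bigr),
\]
so the problem is equivalently a two-dimensional moderate deviation for the vector $\bigl(\sum_k Z_{k1}Z_{k2},\sum_k Z_{k1}Z_{k3}\bigr)$, whose components are conditionally independent given $\{Z_{k1}\}$. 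This reformulation is the one used in \cite{cai2011limiting} and leads to the same truncation-and-tilting computation you outline; it also makes transparent why the exponent doubles.
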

\begin{proof}
    See Lemma 6.7 in \cite{cai2011limiting}.
\end{proof}

The following is known: if $\{X_n\}$ are tight, then for any sequence of constants $\{\varepsilon_n\}$ with $\lim_{n\to\infty}\varepsilon_n=0$, we have $\varepsilon_nX_n\to 0$ in probability as $n\to\infty$. 
\begin{lemma}\label{lemma6}
    Suppose $\{Z_{ki};1\le k\le n,1\le i\le p\}$ are i.i.d. random variables with $E(Z_{11})=0$ and $\E(Z_{11}^2)=1$. Set $\Z_{.i}=(Z_{1i},\ldots,Z_{ni})^\top$, $\bar{Z}_i=n^{-1}\sum_{k=1}^nZ_{ki}$ and $h_i=\Vert\Z_{.i}-\bar{Z}_i\Vert_2/\sqrt{n}$ for each $i$. Define 
    \begin{align*}
        b_{n,1}=\max_{1\le i\le p}|h_i-1|,~~b_{n,3}=\min_{1\le i\le p}h_i~~\text{and}~~b_{n,4}=\max_{1\le i\le p}|\bar{Z}_i|.
    \end{align*}
    Then, $b_{n,3}\to 1$ in probability as $n\to\infty$, and $\{\sqrt{n/\log p}b_{n,1}\}$ and $\{\sqrt{n/\log p}b_{n,4}\}$ are tight provided one of the following conditions holds:\\
    (\romannumeral1) $|Z_{11}|\le C$ for some constant $C>0$, $p_n\to\infty$ and $\log p_n=o(n)$ as $n\to\infty$;\\
    (\romannumeral2) $\E\{\exp(t_0|Z_{11}|^{a})\}<\infty$ for some $a>0$ and $t_0>0$, and $p_n\to \infty$ with $\log p=o(n^\beta)$ as $n\to\infty$, where $\beta=a/(4+a)$.
\end{lemma}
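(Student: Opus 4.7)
The plan is to prove the uniform concentration $\max_{1\le i\le p}|h_i^{2}-1|=O_{p}(\sqrt{\log p/n})$, from which all three conclusions follow at once. The starting point is the elementary decomposition
\begin{equation*}
h_i^{2}-1 \;=\; \frac{1}{n}\sum_{k=1}^{n}(Z_{ki}^{2}-1)\;-\;\bar Z_i^{\,2}\;=:\;T_i-\bar Z_i^{\,2},
\end{equation*}
which reduces the problem to uniform tail bounds on the two pieces $T_i$ and $\bar Z_i$.

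For $b_{n,4}$, I would apply Lemma~\ref{lemma4} directly to the i.i.d.\ sequence $\{Z_{k1}\}_k$: for any fixed $y>\sqrt{2}$,
\begin{equation*}
\Pr\!\left(|\bar Z_i|\ge y\sqrt{\log p/n}\right)\;\lesssim\;p^{-y^{2}/2},
\end{equation*}
so a union bound over $i=1,\dots,p$ gives $\Pr(b_{n,4}\ge y\sqrt{\log p/n})\lesssim p^{\,1-y^{2}/2}\to 0$, proving tightness of $\sqrt{n/\log p}\,b_{n,4}$. The hypothesis $\log p=o(n^{a/(4+a)})$ is strictly stronger than the $o(n^{a/(2+a)})$ required by Lemma~\ref{lemma4}, so this step is comfortable; case~(i) is even simpler via Hoeffding's inequality. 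To bound $T_i$ I would rerun the same argument on the centred i.i.d.\ variables $\xi_k:=Z_{k1}^{2}-1$. The crucial moment-transfer observation is that, since $|\xi_1|^{a/2}\le C_{a}(|Z_{11}|^{a}+1)$ when $0<a\le 2$, the exponential-moment hypothesis passes to $\xi_1$ with new exponent $a'=a/2$. Lemma~\ref{lemma4} then requires $\log p=o(n^{a'/(2+a')})=o(n^{a/(4+a)})$, precisely the assumption in case~(ii); this is exactly what pins down the rate exponent $\beta=a/(4+a)$. After rescaling by $\sqrt{\var(Z_{11}^{2})}$ and union-bounding, one obtains tightness of $\sqrt{n/\log p}\,\max_{i}|T_i|$.

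Combining the two estimates, $\max_{i}|h_i^{2}-1|\le\max_{i}|T_i|+b_{n,4}^{\,2}$, and the last term is $O_{p}(\log p/n)=o_{p}(\sqrt{\log p/n})$, so $\sqrt{n/\log p}\,\max_{i}|h_i^{2}-1|$ is tight. Since $|h_i-1|=|h_i^{2}-1|/(h_i+1)\le|h_i^{2}-1|$ whenever $h_i\ge 0$, tightness of $\sqrt{n/\log p}\,b_{n,1}$ follows immediately, and $|b_{n,3}-1|\le b_{n,1}\cp 0$ delivers $b_{n,3}\cp 1$. The main obstacle is the moment-transfer step for the sub-exponential-type variables $Z^{2}-1$ and the careful verification that the induced rate exponent matches $\beta=a/(4+a)$; once this is in place, the remaining manipulations are standard union-bound exercises.
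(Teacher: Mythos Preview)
The paper does not prove this lemma; it simply cites Lemma~6.5 of \cite{cai2011limiting}. Your sketch is a correct self-contained argument and is essentially the one given there: decompose $h_i^{2}-1=T_i-\bar Z_i^{\,2}$, apply the moderate-deviation estimate of Lemma~\ref{lemma4} separately to $\bar Z_i$ and to $T_i=\tfrac{1}{n}\sum_k(Z_{ki}^{2}-1)$, and union-bound over $i\le p$. The moment-transfer step you isolate---that $\xi_k=Z_{k1}^{2}-1$ inherits an exponential moment with exponent $a'=a/2$, so that Lemma~\ref{lemma4} forces the rate $\beta=a/(4+a)$ in case~(ii)---is exactly the mechanism behind the stated condition. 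One small remark: the inequality $|\xi_1|^{a/2}\le C_a(|Z_{11}|^{a}+1)$ holds for every $a>0$ (take $C_a=2^{(a/2-1)\vee 0}$), not only for $0<a\le 2$ as you wrote; the genuine constraint is the hypothesis $a'\le 2$ in Lemma~\ref{lemma4}, which limits this route as stated to $a\le 4$. For $a>4$ a sharper large-deviation input would be needed to reach the full range $\log p=o(n^{a/(4+a)})$, but since the present paper only ever invokes the lemma with $0<a\le 2$ this is immaterial here.
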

\begin{proof}
    See Lemma 6.5 in \cite{cai2011limiting}.
\end{proof}

\begin{lemma}\label{lemma7}
    Let $\P=\I_n-n^{-1}\bone_n\bone_n^\top$. Let $\O$ be a $n\times n$ orthogonal matrix such that
    \begin{align*}
        \P=\O\left(
        \begin{array}{cc}
            \I_{n-1} & \bzero \\
            \bzero & \bzero 
        \end{array}
        \right)\O^\top.
    \end{align*} 
    Write $\O=(\U,\V)$, where $\U$ is a $n\times(n-1)$ submatrix. Let $\{Z_{ki};1\le k\le n,1\le i\le p\}$ be i.i.d. random variables with $Z_{11}\sim N(0,1)$. Write $\Z_{.i}=(Z_{1i},\ldots,Z_{ni})^\top$ for each $i$. Let $\s_1,\ldots,\s_p$ be i.i.d. random vectors uniformly distributed on $\mS^{n-2}$. Then $\left(\frac{\P\Z_{.1}}{\Vert\P\Z_{.1}\Vert_2},\ldots,\frac{\P\Z_{.p}}{\Vert\P\Z_{.p}\Vert_2}\right)$ and $(\U\s_1,\ldots,\U\s_p)$ have the same distribution.
\end{lemma}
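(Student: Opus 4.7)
The plan is to reduce the left-hand vector to the right-hand vector by expressing $\P$ in terms of $\U$ alone and then exploiting the rotational invariance of the multivariate Gaussian.

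First I would observe that because $\O$ is orthogonal and the eigendecomposition in the hypothesis has $\I_{n-1}$ in the top-left block and zero elsewhere, one immediately has $\P = \U\U^\top$, while orthogonality of $\O = (\U,\V)$ gives $\U^\top \U = \I_{n-1}$ (with $\U^\top \V = \bzero$ and $\V^\top\V = \I_1$). In particular $\V$ spans the kernel of $\P$, which is $\text{span}(\bone_n)$, a fact that is not strictly needed for the identity but explains the geometry.

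Next, for each $i=1,\dots,p$ set $\W_i := \U^\top \Z_{.i}\in\mR^{n-1}$. Since $\Z_{.i}\sim N(\bzero,\I_n)$ and $\U^\top\U=\I_{n-1}$, the vector $\W_i$ is $N(\bzero,\I_{n-1})$, and because the $\Z_{.i}$ are independent across $i$, so are the $\W_i$. Now compute
\begin{align*}
\P\Z_{.i} \;=\; \U\U^\top \Z_{.i} \;=\; \U\W_i,
\qquad
\Vert\P\Z_{.i}\Vert_2^2 \;=\; \W_i^\top \U^\top\U\W_i \;=\; \Vert\W_i\Vert_2^2,
\end{align*}
so that
\begin{align*}
\frac{\P\Z_{.i}}{\Vert\P\Z_{.i}\Vert_2} \;=\; \U\cdot \frac{\W_i}{\Vert\W_i\Vert_2}.
\end{align*}

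Finally, I would invoke the standard fact that if $\W\sim N(\bzero,\I_{n-1})$, then $\W/\Vert\W\Vert_2$ is uniformly distributed on $\mS^{n-2}$ (this is immediate from the rotational invariance of the $N(\bzero,\I_{n-1})$ law together with the fact that $\Vert\W\Vert_2$ is measurable and independent of the direction). Applied coordinatewise to the independent $\W_1,\dots,\W_p$, this shows that $(\W_1/\Vert\W_1\Vert_2,\dots,\W_p/\Vert\W_p\Vert_2)$ has the same joint distribution as $(\s_1,\dots,\s_p)$. Pushing this through the fixed linear map $\U$ on each coordinate yields the desired distributional identity. There is no real obstacle here; the only point that would require care in a fully formal write-up is stating precisely why the direction $\W/\Vert\W\Vert_2$ of an isotropic Gaussian is uniform on the sphere, which is a classical consequence of the invariance of the standard Gaussian under orthogonal transformations.
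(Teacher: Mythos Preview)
Your proposal is correct and follows essentially the same approach as the paper: both use $\P=\U\U^\top$ and $\U^\top\U=\I_{n-1}$, reduce $\P\Z_{.i}/\Vert\P\Z_{.i}\Vert_2$ to $\U$ times a normalized $(n-1)$-dimensional standard Gaussian, and invoke the uniform distribution of the normalized Gaussian on $\mS^{n-2}$ together with independence across $i$. The only cosmetic difference is that the paper writes the intermediate step as a distributional equality $\P\Z_{.i}\stackrel{d}{=}\U\Z_{-n,i}$ via $\O^\top\Z_{.i}\stackrel{d}{=}\Z_{.i}$, whereas you keep the exact identity $\P\Z_{.i}=\U\W_i$ with $\W_i=\U^\top\Z_{.i}$; your formulation is arguably a touch cleaner for handling the joint distribution.
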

\begin{proof}
    By the orthogonality, we have
    \begin{align*}
        \U^{\top}\U=\I_{n-1}~~\text{and}~~\U\U^{\top}=\P.
    \end{align*}
    By the orthogonal invariance of normal distributions, we have
    \begin{align*}
        \P\Z_{.i}=(\U,\bzero)\O^\top\Z_{.i}\stackrel{d}{=}(\U,\bzero)\Z_{.i}=\U\Z_{-n,i},
    \end{align*}
    where $\Z_{-n,i}=(Z_{1i},\ldots,Z_{n-1,i})^\top$. Then $\frac{\P\Z_{.i}}{\Vert\P\Z_{.i}\Vert_2}$ as a function of $\P\Z_{.i}$, has the same distribution as that of
    \begin{align*}
        \frac{\U\Z_{-n,i}}{\Vert\U\Z_{-n,i}\Vert_2}=\U\frac{\Z_{-n,i}}{\Vert\Z_{-n,i}\Vert_2},
    \end{align*}
    for each $i$. The desired conclusion then follows from the independence among $\Z_{.1},\ldots,\Z_{.p}$.
\end{proof}

\begin{lemma}\label{lemma8}
    Let $\s_1,\ldots,\s_p$ be i.i.d. random vectors uniformly distributed on $\mS^{n-2}$. Let
    \begin{align*}
        F_n(x):=\Pr(n(\s_1^\top\s_2)^2\le x),~~\hat{F}_{n,p}(x):=\frac{2}{p(p-1)}\sum_{1\le i<j\le p}\indicate(n(\s_i^\top\s_j)^2\le x),
    \end{align*}
    and for some $\gamma\in(0,1]$,
    \begin{align*}
        v_{\gamma,n}:=\inf\{x:F_n(x)\ge 1-\gamma\},~~\hat{v}_{\gamma,n,p}:=\inf\{x:\hat{F}_{n,p}(x)\ge 1-\gamma\}.
    \end{align*}
    Then, for each $n$, $\hat{v}_{\gamma,n,p}-v_{\gamma,n}=O_p(p^{-1})$.
\end{lemma}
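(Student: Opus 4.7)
My plan is to exploit the fact that $\hat F_{n,p}(x)$ is a $U$-statistic of order two in the i.i.d.\ random vectors $\s_1,\ldots,\s_p$, and that---crucially---the centred kernel $h_x(\s_i,\s_j)=\indicate\{n(\s_i^\top\s_j)^2\le x\}-F_n(x)$ is completely degenerate. By the rotational invariance of the uniform law on $\mS^{n-2}$, conditional on $\s_1$ the variable $\s_1^\top\s_2$ has the same distribution as $e_1^\top\s_2$, which does not depend on $\s_1$. Hence $\E[h_x(\s_1,\s_2)\mid\s_1]=F_n(x)-F_n(x)=0$ almost surely. By Hoeffding's decomposition of a $U$-statistic with a completely degenerate kernel of order two, the leading $O(p^{-1})$ variance term (the projection on a single argument) vanishes, so
\begin{align*}
\var\bigl(\hat F_{n,p}(x)\bigr)=\frac{2}{p(p-1)}\var\bigl(\indicate\{n(\s_1^\top\s_2)^2\le x\}\bigr)=O(p^{-2})
\end{align*}
for each fixed $x$, with the implied constant depending only on $n$.

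Next, I would convert this concentration of $\hat F_{n,p}$ around $F_n$ into a matching rate for the associated quantile by a Bahadur-type inversion. For fixed $n$ and $\gamma\in(0,1)$, the distribution of $n(\s_1^\top\s_2)^2$ is $n$ times a $\text{Beta}(1/2,(n-2)/2)$ variable, so $F_n$ admits a continuous density $f_n$ that is strictly positive at $v_{\gamma,n}\in(0,n)$. Setting $q=v_{\gamma,n}$ and $\hat q=\hat v_{\gamma,n,p}$, monotonicity of $\hat F_{n,p}$ gives, for any $M>0$ and $p$ sufficiently large,
\begin{align*}
\{\hat q-q>M/p\}\subset\bigl\{\hat F_{n,p}(q+M/p)<1-\gamma\bigr\}\subset\Bigl\{\bigl|\hat F_{n,p}(q+M/p)-F_n(q+M/p)\bigr|>f_n(q)M/(2p)\Bigr\},
\end{align*}
where I used the first-order Taylor expansion $F_n(q+M/p)-F_n(q)=f_n(q)M/p+o(1/p)$, and symmetrically for $\{\hat q-q<-M/p\}$. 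Chebyshev's inequality combined with the variance bound from the previous step then yields
\begin{align*}
\Pr\bigl(|\hat q-q|>M/p\bigr)\le C\cdot\frac{O(p^{-2})}{\bigl(f_n(q)M/(2p)\bigr)^{2}}=O\bigl(M^{-2}\bigr),
\end{align*}
uniformly over large $p$, so the family $\{p(\hat v_{\gamma,n,p}-v_{\gamma,n})\}$ is tight. That is precisely the $O_p(p^{-1})$ statement. The boundary case $\gamma=1$ is trivial since both $v_{1,n}$ and $\hat v_{1,n,p}$ then equal zero.

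The main substantive step is the verification of degeneracy, which reduces entirely to the rotational invariance of the uniform distribution on $\mS^{n-2}$ and is therefore short. Everything else is standard: the variance bound is the classical second-order Hoeffding decomposition, and the passage from distribution-function concentration to quantile concentration is a textbook monotone-inversion argument. The accelerated $p^{-1}$ rate---rather than the usual $p^{-1/2}$---is the payoff of the degeneracy, and is what makes the lemma useful in the overall framework.
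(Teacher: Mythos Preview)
Your argument is correct and coincides with the paper's in structure: both establish $\var(\hat F_{n,p}(x))=F_n(x)\{1-F_n(x)\}/\binom{p}{2}=O(p^{-2})$ and then pass from distribution-function to quantile concentration by monotone inversion and Chebyshev, using positivity of the (scaled Beta) density at $v_{\gamma,n}$. The only cosmetic difference is in how the variance bound is obtained: the paper computes it by hand---writing the double sum as $\sum_{j\ge 2} J_j$ with $J_j=\sum_{i<j}\indicate$, then using that $\{\s_1^\top\s_j\}_{j\ge 2}$ are i.i.d.\ given $\s_1$ with a law not depending on $\s_1$ to evaluate $\var(J_j)$ and $\cov(J_{j_1},J_{j_2})$ directly---whereas you recognize this same rotational-invariance fact as complete degeneracy of the $U$-statistic kernel and invoke Hoeffding's decomposition. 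Your framing is slightly cleaner and isolates more explicitly why the rate is $p^{-1}$ rather than the naive $p^{-1/2}$, but the two arguments are essentially the same.
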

\begin{proof}
    By Theorem 1.5.7(i) and the argument for (5) on p.147 of \cite{Muirhead1982AspectsOM},
    \begin{align}\label{eql1}
        (\s_i^\top\s_j)^2\sim\text{Beta}\left(\frac{1}{2},\frac{n-2}{2}\right),~~\text{for any}~~1\le i<j\le p.
    \end{align}
    Then, the density of $n(\s_1^\top\s_2)^2$ is given by 
    \begin{align*}
        f_n(x)=\frac{\Gamma(\frac{n-1}{2})}{\Gamma(\frac{1}{2})\Gamma(\frac{n-2}{2})}n^{-1/2}x^{-1/2}(1-x^2)^{(n-4)/2},~~0\le x\le n.
    \end{align*}
    For any $\varepsilon>0$, let $C_1=\inf_{x\in[v_{\gamma,n}-\varepsilon,v_{\gamma,n}+\varepsilon]}f_n(x)$. By definitions,
    \begin{align}\label{eql2}
        &\Pr(|\hat{v}_{\gamma,n,p}-v_{\gamma,n}|\ge\varepsilon)\nonumber\\
        =&\Pr(\hat{v}_{\gamma,n,p}\ge v_{\gamma,n}+\varepsilon)+\Pr(\hat{v}_{\gamma,n,p}\le v_{\gamma,n}-\varepsilon)\nonumber\\
        \le&\Pr(\hat{F}_{n,p}(v_{\gamma,n}+\varepsilon)\le 1-\gamma)+\Pr(\hat{F}_{n,p}(v_{\gamma,n}-\varepsilon)\ge 1-\gamma)\nonumber\\
        \le&\Pr(|\hat{F}_{n,p}(v_{\gamma,n}+\varepsilon)-F_{n}(v_{\gamma,n}+\varepsilon)|>C_1\varepsilon)+\Pr(|\hat{F}_{n,p}(v_{\gamma,n}-\varepsilon)-F_{n}(v_{\gamma,n}-\varepsilon)|>C_1\varepsilon).
    \end{align}
    For each $x\in\mR$, define $J_j:=\sum_{i=1}^{j-1}\indicate(n(\s_i^\top\s_j)^2\le x)$. Then
    \begin{align*}
        \sqrt{p^*}\{\hat{F}_{n,p}(x)-F_n(x)\}=\frac{1}{\sqrt{p^*}}\sum_{j=2}^p\{J_j-\E(J_j)\}.
    \end{align*}
    We first claim that
    \begin{align}\label{eql3}
        \{\s_1^\top\s_2,\s_1^\top\s_3,\ldots,\s_1^\top\s_p\}~~\text{are independent}.
    \end{align}
    In fact, let $\s$ be uniformly distributed on $\mS^{n-2}$ . Then, $\x^\top\s$ has the same distribution as that of $(1,0,\ldots,0)^\top\s$ for any $\x\in\mS^{n-2}$ (see, e.g., Theorem 1.5.7(i) and the argument for (5) on p.147 of \cite{Muirhead1982AspectsOM}). Since $\s_1,\ldots,\s_p$ are i.i.d. random vectors, we know that, conditioning on $\s_1$, the random variables $\{\s_1^\top\s_2,\s_1^\top\s_3,\ldots,\s_1^\top\s_p\}$ are i.i.d. with a common distribution of $(1,0,\ldots,0)^\top\s$. In particular, their conditional distributions do not depend on $\s_1$. This proves the claim. Consequently, it is easy to calculate that
    \begin{align*}
        &\E(J_j)=(j-1)F_{\text{Beta}\left(\frac{1}{2},\frac{n-2}{2}\right)}\left(\frac{x}{n}\right),1\le j\le p\\
        &\E(J_j^2)=(j-1)F_{\text{Beta}\left(\frac{1}{2},\frac{n-2}{2}\right)}\left(\frac{x}{n}\right)+(j-1)(j-2)\left\{F_{\text{Beta}\left(\frac{1}{2},\frac{n-2}{2}\right)}\left(\frac{x}{n}\right)\right\}^2,1\le j\le p,~~\text{and}\\
        &\E(J_{j_1}J_{j_2})=(j_1-1)(j_2-1)\left\{F_{\text{Beta}\left(\frac{1}{2},\frac{n-2}{2}\right)}\left(\frac{x}{n}\right)\right\}^2,1\le j_1\ne j_2\le p.
    \end{align*}
    Thus,
    \begin{align*}
        \E\left[\frac{1}{\sqrt{p^*}}\sum_{j=2}^p\{J_j-\E(J_j)\}\right]=0,
    \end{align*}
    and 
    \begin{align*}
        \E\left[\left|\frac{1}{\sqrt{p^*}}\sum_{j=2}^p\{J_j-\E(J_j)\}\right|^2\right]=&\frac{1}{p^*}\sum_{j=2}^{p}\var(J_j)+\frac{1}{p^*}\sum_{2\le j_1\ne j_2\le p}\cov(J_{j_1},J_{j_2})\\
        =&F_{\text{Beta}\left(\frac{1}{2},\frac{n-2}{2}\right)}\left(\frac{x}{n}\right)-\left\{F_{\text{Beta}\left(\frac{1}{2},\frac{n-2}{2}\right)}\left(\frac{x}{n}\right)\right\}^2.
    \end{align*}
    By Gamma approximation for Beta tails, we have
    \begin{align*}
        F_{\text{Beta}\left(\frac{1}{2},\frac{n-2}{2}\right)}\left(\frac{x}{n}\right)\to F_{\text{Gamma}\left(\frac{1}{2},1\right)}(x/2)=F_{\chi^2_1}(x),~~\text{as}~~n\to\infty,
    \end{align*}
    which implies for each $n$, 
    \begin{align*}
        \E\left[\left|\frac{1}{\sqrt{p^*}}\sum_{j=2}^p\{J_j-\E(J_j)\}\right|^2\right]=O(1).
    \end{align*}
    Hence,
    \begin{align*}
        \sqrt{p^*}\{\hat{F}_{n,p}(x)-F_n(x)\}=\frac{1}{\sqrt{p^*}}\sum_{j=2}^p\{J_j-\E(J_j)\}=O_p(1),
    \end{align*}
    which, combining with \eqref{eql2}, leads to the desired result.    
\end{proof}

\begin{lemma}\label{lemma9}
    Let $\s_1,\s_2$ be i.i.d. random vectors uniformly distributed on $\mS^{n-2}$. For any $x\in\mR^+$, define 
    \begin{align*}
        A_{l,n}(x):=\E[\{n(\s_1^\top\s_2)^2-x\}^l\indicate\{n(\s_i^\top\s_j)^2\ge x\}],l=1,2,3,4.
    \end{align*}
    Then, 
    \begin{align*}
        A_{1,n}(x)=&\frac{n}{n-1}\left\{1-F_{\text{Beta}\left(\frac{3}{2},\frac{n-2}{2}\right)}\left(\frac{x}{n}\right)\right\}-x\left\{1-F_{\text{Beta}\left(\frac{1}{2},\frac{n-2}{2}\right)}\left(\frac{x}{n}\right)\right\},\\
        A_{2,n}(x)=&\frac{3n^2}{(n-1)(n+1)}\left\{1-F_{\text{Beta}\left(\frac{5}{2},\frac{n-2}{2}\right)}\left(\frac{x}{n}\right)\right\}-\frac{2xn}{n-1}\left\{1-F_{\text{Beta}\left(\frac{3}{2},\frac{n-2}{2}\right)}\left(\frac{x}{n}\right)\right\}\\
        &+x^2\left\{1-F_{\text{Beta}\left(\frac{1}{2},\frac{n-2}{2}\right)}\left(\frac{x}{n}\right)\right\},\\
        A_{3,n}(x)=&\frac{15n^3}{(n-1)(n+1)(n+3)}\left\{1-F_{\text{Beta}\left(\frac{7}{2},\frac{n-2}{2}\right)}\left(\frac{x}{n}\right)\right\}-\frac{9xn^2}{(n-1)(n+1)}\left\{1-F_{\text{Beta}\left(\frac{5}{2},\frac{n-2}{2}\right)}\left(\frac{x}{n}\right)\right\}\\
        &+\frac{3x^2n}{n-1}\left\{1-F_{\text{Beta}\left(\frac{3}{2},\frac{n-2}{2}\right)}\left(\frac{x}{n}\right)\right\}-x^3\left\{1-F_{\text{Beta}\left(\frac{1}{2},\frac{n-2}{2}\right)}\left(\frac{x}{n}\right)\right\},\\
        A_{4,n}(x)=&\frac{105n^4}{(n-1)(n+1)(n+3)(n+5)}\left\{1-F_{\text{Beta}\left(\frac{9}{2},\frac{n-2}{2}\right)}\left(\frac{x}{n}\right)\right\}\\
        &-\frac{60xn^3}{(n-1)(n+1)(n+3)}\left\{1-F_{\text{Beta}\left(\frac{7}{2},\frac{n-2}{2}\right)}\left(\frac{x}{n}\right)\right\}+\frac{18x^2n^2}{(n-1)(n+1)}\left\{1-F_{\text{Beta}\left(\frac{5}{2},\frac{n-2}{2}\right)}\left(\frac{x}{n}\right)\right\}\\
        &-\frac{4x^3n}{n-1}\left\{1-F_{\text{Beta}\left(\frac{3}{2},\frac{n-2}{2}\right)}\left(\frac{x}{n}\right)\right\}+x^4\left\{1-F_{\text{Beta}\left(\frac{1}{2},\frac{n-2}{2}\right)}\left(\frac{x}{n}\right)\right\}.
    \end{align*}
    In addition, $A_{l,n}(x)\to A_l(x)$ for $l=1,2,3,4$, as $n\to\infty$, where 
    \begin{align*}
        A_{1}(x):=&\{1-F_{\chi^2_3}(x)\}-x\{1-F_{\chi^2_1}(x)\},\\
        A_{2}(x):=&3\{1-F_{\chi^2_5}(x)\}-2x\{1-F_{\chi^2_3}(x)\}+x^2\{1-F_{\chi^2_1}(x)\},\\
        A_{3}(x):=&15\{1-F_{\chi^2_7}(x)\}-9x\{1-F_{\chi^2_5}(x)\}+3x^2\{1-F_{\chi^2_3}(x)\}-x^3\{1-F_{\chi^2_1}(x)\},\\
        A_{4}(x):=&105\{1-F_{\chi^2_9}(x)\}-60x\{1-F_{\chi^2_7}(x)\}+18x^2\{1-F_{\chi^2_5}(x)\}-4x^3\{1-F_{\chi^2_3}(x)\}+x^4\{1-F_{\chi^2_1}(x)\}.
    \end{align*}
\end{lemma}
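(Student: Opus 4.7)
The plan is to compute each $A_{l,n}(x)$ in closed form via binomial expansion plus explicit incomplete-Beta integration, and then pass to the $n\to\infty$ limit using the standard Gamma/chi-square approximation to Beta tails. First, by identity \eqref{eql1} from the proof of Lemma \ref{lemma8}, $(\s_1^\top\s_2)^2\sim\text{Beta}(1/2,(n-2)/2)$; setting $Y=n(\s_1^\top\s_2)^2$, the binomial theorem gives
\[
A_{l,n}(x)=\sum_{k=0}^{l}\binom{l}{k}(-x)^{l-k}\,\E\!\left[Y^{k}\indicate(Y\ge x)\right],
\]
so the task reduces to evaluating the partial moments $\E[Y^{k}\indicate(Y\ge x)]$ for $k=0,1,\ldots,l$.

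Inserting the Beta density and recognizing the integrand (up to normalization) as the kernel of a $\text{Beta}(k+1/2,(n-2)/2)$ density yields
\[
\E[Y^{k}\indicate(Y\ge x)] = c_{n,k}\left\{1 - F_{\text{Beta}(k+1/2,(n-2)/2)}(x/n)\right\},\quad c_{n,k}:=\frac{n^{k}\Gamma((n-1)/2)\Gamma(k+1/2)}{\Gamma(1/2)\Gamma(k+(n-1)/2)}.
\]
Using the telescoping identity $\Gamma(k+(n-1)/2)/\Gamma((n-1)/2)=\prod_{j=0}^{k-1}((n-1)/2+j)$ together with $\Gamma(k+1/2)/\Gamma(1/2)=(2k-1)!!/2^{k}$, the coefficients $c_{n,k}$ simplify for $k=1,2,3,4$ to the rational functions $\tfrac{n}{n-1}$, $\tfrac{3n^{2}}{(n-1)(n+1)}$, $\tfrac{15n^{3}}{(n-1)(n+1)(n+3)}$, $\tfrac{105n^{4}}{(n-1)(n+1)(n+3)(n+5)}$; substituting into the binomial expansion and collecting the terms reproduces exactly the stated closed-form expressions for $A_{l,n}(x)$, $l=1,2,3,4$.

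For the limit, two ingredients combine. By Stirling's formula, $\Gamma((n-1)/2)/\Gamma(k+(n-1)/2)\sim((n-1)/2)^{-k}$, so $c_{n,k}\to 2^{k}\Gamma(k+1/2)/\Gamma(1/2)=(2k-1)!!$, producing the constants $1,3,15,105$ for $k=1,2,3,4$. Moreover, if $W_n\sim\text{Beta}(k+1/2,(n-2)/2)$, then a direct density calculation (the same Gamma-limit argument invoked in Lemma \ref{lemma8}) shows $nW_n\cd\chi^{2}_{2k+1}$, hence $1-F_{\text{Beta}(k+1/2,(n-2)/2)}(x/n)\to 1-F_{\chi^{2}_{2k+1}}(x)$ at every $x>0$. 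Passing to the limit termwise in the finite sum defining $A_{l,n}(x)$ then yields $A_l(x)$. The main obstacle is purely bookkeeping for $l=3,4$: one must verify that all four Gamma-function prefactors collapse to precisely the stated rational functions of $n$ and that their limits align with the chi-square coefficients $\{105,60,18,4,1\}$ and $\{15,9,3,1\}$ in the right positions, but no analytic tools beyond Stirling and a pointwise density limit are required.
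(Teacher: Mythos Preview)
Your proposal is correct and follows essentially the same approach as the paper's own proof, which merely states that the formulas follow from \eqref{eql1} by ``simple computation'' and that the limits follow from the ``Gamma approximation for Beta tails.'' You have simply made those two steps explicit: the binomial expansion together with the identification of each partial moment $\E[Y^k\indicate(Y\ge x)]$ as a reweighted Beta tail is exactly the content of the paper's ``simple computation,'' and your Stirling/density-limit argument is precisely the Gamma approximation the paper invokes.
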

\begin{proof}
    Using the fact that \eqref{eql1} and \eqref{eql3} and simple computation, we can calculate $A_{l,n}(x),l=1,2,3,4$. By Gamma approximation for Beta tails, we obtain the limiting results.
\end{proof}

\begin{lemma}\label{lemma10}
    Let $\s_1,\ldots,\s_p$ be i.i.d. random vectors uniformly distributed on $\mS^{n-2}$. Set $\mathcal{F}_0=\{\emptyset,\Omega\}$ and $\mathcal{F}_j=\sigma(\s_1,\ldots,\s_j)$, where $\sigma(\s_1,\ldots,\s_j)$ is the $\sigma$-algebra generated by $\{\s_1,\ldots,\s_j\}$ for $1\le j\le p$. Here $\Omega$ is the sample space on which random variables $\{\s_i;1\le i\le p\}$ are defined on.
    For any $x\in\mR^+$, define $I_{x,1}=0$ and for $2\le j\le p$,
    \begin{align*}
        I_{x,j}:=\sum_{i=1}^{j-1}\{n(\s_i^\top\s_j)^2-x\}\indicate\{n(\s_i^\top\s_j)^2\ge x\},~~\text{and}~~Y_{x,j}:=I_{x,j}-\E(I_{x,j}|\mathcal{F}_{j-1}).
    \end{align*}
    Then, as $\min(n,p)\to\infty$,
    \begin{align*}
        \frac{1}{\sqrt{p^*}}\sum_{j=2}^pY_{x,j}\cd N(0,\sigma_{x,\gamma}^2),
    \end{align*}
    where $\E(I_{x,j}|\mathcal{F}_{j-1})=(j-1)A_{1,n}(x)$ and $\sigma_{x,\gamma}^2:=A_2(x)-\{A_1(x)\}^2$, with $A_{l,n}(x)$ and $A_l(x)$ for $l=1,2$ being defined in Lemma \ref{lemma9}.
\end{lemma}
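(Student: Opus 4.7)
The plan is to invoke a martingale central limit theorem. Since $I_{x,j}$ is $\mathcal{F}_j$-measurable and $Y_{x,j}$ is centered by construction, $\{Y_{x,j},\mathcal{F}_j\}_{j=2}^p$ is a martingale difference array. I would apply the standard martingale CLT (e.g.\ Corollary 3.1 of Hall and Heyde, 1980), which reduces the proof to (a) the conditional-variance condition $\frac{1}{p^*}\sum_{j=2}^p\E(Y_{x,j}^2\mid\mathcal{F}_{j-1})\cp\sigma_{x,\gamma}^2$, and (b) a Lyapunov-type moment condition such as $\frac{1}{(p^*)^2}\sum_{j=2}^p\E(Y_{x,j}^4)\to 0$.

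For (a), writing $W_{ij}:=\{n(\s_i^\top\s_j)^2-x\}\indicate\{n(\s_i^\top\s_j)^2\ge x\}$, rotational invariance of $\s_j$ on $\mathbb{S}^{n-2}$ gives $\E(W_{ij}\mid\mathcal{F}_{j-1})=A_{1,n}(x)$, so $Y_{x,j}=\sum_{i=1}^{j-1}\{W_{ij}-A_{1,n}(x)\}$. Expanding the square yields
\begin{align*}
\E(Y_{x,j}^2\mid\mathcal{F}_{j-1})=(j-1)\{A_{2,n}(x)-A_{1,n}(x)^2\}+\sum_{1\le i\ne i'\le j-1}\{\psi_n(\s_i^\top\s_{i'})-A_{1,n}(x)^2\},
\end{align*}
where $\psi_n(r):=\E\{W_{1j}W_{2j}\mid\s_1^\top\s_2=r\}$ depends only on the inner product, again by orthogonal invariance. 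The deterministic first piece, summed over $j$ and divided by $p^*$, converges to $A_{2,n}(x)-A_{1,n}(x)^2\to\sigma_{x,\gamma}^2$ by Lemma \ref{lemma9}. The random second piece has mean zero, since conditioning on $\s_j$ makes $W_{1j},W_{2j}$ independent with common conditional mean $A_{1,n}(x)$, so $\E\psi_n(\s_1^\top\s_2)=A_{1,n}(x)^2$. To show that this piece is $o_p(p^*)$ I would compute its variance by a weighted $U$-statistic argument: crucially, covariances across pairs sharing one common index vanish, because conditionally on the shared $\s_i$ the inner products $\s_i^\top\s_{i'}$ and $\s_i^\top\s_k$ are independent with $\s_i$-free marginal distributions, so $\E\{\psi_n(\s_i^\top\s_{i'})\psi_n(\s_i^\top\s_k)\mid\s_i\}=A_{1,n}(x)^4$. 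Only the diagonal variance terms $\var(\psi_n(\s_i^\top\s_{i'}))$ survive, and using the concentration of $\s_i^\top\s_{i'}$ at $0$ together with sufficient regularity of $\psi_n$, each such variance is $O(n^{-1})$, so the overall variance is $O(p^4/n)=o((p^*)^2)$.

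For (b), conditional on $\s_j$, the summands $W_{ij}-A_{1,n}(x)$ for $i=1,\dots,j-1$ are i.i.d.\ functions of $\s_i$ with zero conditional mean (again by rotational invariance). The standard fourth-moment bound for sums of i.i.d.\ zero-mean variables gives $\E(Y_{x,j}^4\mid\s_j)\le C(j-1)^2\E(W_{1j}^2\mid\s_j)^2+C(j-1)\E(W_{1j}^4\mid\s_j)$; taking unconditional expectations and invoking the finite limits $A_{2,n}(x)\to A_2(x)$ and $A_{4,n}(x)\to A_4(x)$ from Lemma \ref{lemma9} produces $\E(Y_{x,j}^4)\le Cj^2$, whence $\sum_{j=2}^p\E(Y_{x,j}^4)=O(p^3)=o((p^*)^2)$. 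The main obstacle is step (a): controlling the $U$-statistic-type fluctuation $\sum_{i\ne i'\le j-1}\{\psi_n(\s_i^\top\s_{i'})-A_{1,n}(x)^2\}$ summed over $j$. The cross-covariance cancellation noted above simplifies matters considerably, but one still needs a quantitative smoothness estimate for $\psi_n$ near $r=0$, combined with the $O(n^{-1/2})$ concentration of $\s_i^\top\s_{i'}$ on $\mathbb{S}^{n-2}$, to obtain the required $O(n^{-1})$ bound on $\var(\psi_n(\s_1^\top\s_2))$.
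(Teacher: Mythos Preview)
Your proposal is correct and follows the same overall strategy as the paper: apply the martingale CLT to $\{Y_{x,j},\mathcal{F}_j\}$ via the conditional-variance condition $\frac{1}{p^*}\sum_{j}\E(Y_{x,j}^2\mid\mathcal{F}_{j-1})\cp\sigma_{x,\gamma}^2$ and a Lyapunov fourth-moment condition $\sum_j\E(Y_{x,j}^4)=o((p^*)^2)$; the Lyapunov step is handled essentially identically, yielding $\E(Y_{x,j}^4)\le Cj^2$.

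There is one tactical difference worth noting in the conditional-variance step. You control the random cross-sum $\sum_{i\ne i'}\{\psi_n(\s_i^\top\s_{i'})-A_{1,n}(x)^2\}$ by observing that covariances across pairs sharing exactly one index vanish, so only the diagonal variances $\var(\psi_n(\s_1^\top\s_2))$ survive; this lets you get away with the bound $\var(\psi_n(\s_1^\top\s_2))=O(n^{-1})$. The paper does not exploit this cancellation; instead it uses the crude convexity bound $\var\bigl(\sum_j\ldots\bigr)\le(p-1)\sum_j\var(\ldots)$, which forces it to establish the sharper estimate $\var(\psi_n(\s_1^\top\s_2))=O(n^{-2})$. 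The paper obtains that sharper rate by the observation $\psi_n(r)=\psi_n(-r)$ (from sign symmetry of the second spherical coordinate after rotating $\s_1$ to $\e_1$), so the Taylor expansion of $\psi_n$ at $0$ starts at the quadratic term and $\var(\psi_n(\s_1^\top\s_2))\le C\var\{(\s_1^\top\s_2)^2\}=O(n^{-2})$ via the Beta law in \eqref{eql1}. This same even-symmetry trick resolves the ``smoothness of $\psi_n$'' obstacle you flag, and in fact gives you $O(n^{-2})$ directly, more than enough for your approach.
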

\begin{proof}
    Note that $\{Y_{x,j};2\le j\le p\}$ forms a sequence of martingale difference with respect to $\{\mathcal{F}_j;2\le j\le p\}$. Then, in order to show the asymptotic normality, we will employ the Lindeberg-Feller central limit theorem (see, for example, p. 344 from \cite{durrett2019probability}). To achieve so, it is enough to verify that
    \begin{align}\label{eql4}
        \frac{1}{p^*}\sum_{j=2}^p\E(Y_{x,j}^2|\mathcal{F}_{j-1})\cp\sigma^2_{x,\gamma},
    \end{align}
    and
    \begin{align}\label{eql5}
        \frac{1}{(p^*)^2}\sum_{j=2}^p\E(Y_{x,j}^4|\mathcal{F}_{j-1})\cp0.
    \end{align}

    \noindent\textit{Step 1.} We prove \eqref{eql4}. By \eqref{eql1} and \eqref{eql3} and Lemma \ref{lemma9}, recalling the definitions of $A_{l,n}(x)$ and $A_l(x)$ given in Lemma \ref{lemma9}, it is easy to calculate 
    \begin{align*}
        \E(I_{x,j}|\mathcal{F}_{j-1})=&\sum_{i=1}^{j-1}\E[\{n(\s_i^\top\s_j)^2-x\}\indicate\{n(\s_i^\top\s_j)^2\ge x\}|\s_i]=(j-1)A_{1,n}(x)=\E(I_{x,j}),
    \end{align*}
    \begin{align}\label{eql6}
        &\E(I_{x,j}^2|\mathcal{F}_{j-1})\nonumber\\
        =&\sum_{i=1}^{j-1}\E[\{n(\s_i^\top\s_j)^2-x\}^2\indicate\{n(\s_i^\top\s_j)^2\ge x\}|\s_i]\nonumber\\
        &+2\sum_{1\le i_1<i_2\le j-1}\E[\{n(\s_{i_1}^\top\s_j)^2-x\}\indicate\{n(\s_{i_1}^\top\s_j)^2\ge x\}\{n(\s_{i_2}^\top\s_j)^2-x\}\indicate\{n(\s_{i_2}^\top\s_j)^2\ge x\}|\s_{i_1},\s_{i_2}]\nonumber\\
        =&(j-1)A_{2,n}(x)+2\sum_{1\le i_1<i_2\le j-1}\E\{h_x(\sqrt{n}\s_{i_1}^\top\s_j)h_x(\sqrt{n}\s_{i_2}^\top\s_j)|\s_{i_1},\s_{i_2}\},
    \end{align}
    where $h_x(u):=(u^2-x)\indicate(u^2\ge x)$, and
    \begin{align*}
        \E(I_{x,j}^2)=(j-1)A_{2,n}(x)+(j-1)(j-2)\{A_{1,n}(x)\}^2.
    \end{align*}
    Hence,
    \begin{align}\label{eql7}
        \E\left\{\frac{1}{p^*}\sum_{j=2}^p\E(Y_{x,j}^2|\mathcal{F}_{j-1})\right\}=&\frac{1}{p^*}\sum_{j=2}^p\var(I_{x,j})=A_{2,n}(x)-\{A_{1,n}(x)\}^2\to A_2(x)-\{A_1(x)\}^2=:\sigma_{x,\gamma}^2.
    \end{align}
    Furthermore, 
    \begin{align*}
        \frac{1}{p^*}\sum_{j=2}^p\E(Y_{x,j}^2|\mathcal{F}_{j-1})= \frac{1}{p^*}\sum_{j=2}^p\E(I_{x,j}^2|\mathcal{F}_{j-1})- \frac{1}{p^*}\sum_{j=2}^p(j-1)^2\{A_{1,n}(x)\}^2,
    \end{align*}
    which implies 
    \begin{align*}
        \var\left\{\frac{1}{p^*}\sum_{j=2}^p\E(Y_{x,j}^2|\mathcal{F}_{j-1})\right\}=\var\left\{\frac{1}{p^*}\sum_{j=2}^p\E(I_{x,j}^2|\mathcal{F}_{j-1})\right\}.
    \end{align*}
    By the convexity of the function $g(x)=x^2,\forall x\in\mR$, we have
    \begin{align}\label{eql8}
        \var\left\{\frac{1}{p^*}\sum_{j=2}^p\E(I_{x,j}^2|\mathcal{F}_{j-1})\right\}\le\frac{p-1}{(p^*)^2}\sum_{j=2}^p\var\left\{\E(I_{x,j}^2|\mathcal{F}_{j-1})\right\}.
    \end{align}
    By \eqref{eql6},
    \begin{align*}
        &\var\left\{\E(I_{x,j}^2|\mathcal{F}_{j-1})\right\}\\
        =&4\var\left[\sum_{1\le i_1<i_2\le j-1}\E\{h_x(\sqrt{n}\s_{i_1}^\top\s_j)h_x(\sqrt{n}\s_{i_2}^\top\s_j)|\s_{i_1},\s_{i_2}\}\right]\\
        =&4\sum_{1\le i_1<i_2\le j-1}\var[\E\{h_x(\sqrt{n}\s_{i_1}^\top\s_j)h_x(\sqrt{n}\s_{i_2}^\top\s_j)|\s_{i_1},\s_{i_2}\}]\\
        &+4\sum_{\substack{(i_1,i_2)\ne(i_3,i_4)\\1\le i_1<i_2\le j-1\\1\le i_3<i_4\le j-1}}\cov[\E\{h_x(\sqrt{n}\s_{i_1}^\top\s_j)h_x(\sqrt{n}\s_{i_2}^\top\s_j)|\s_{i_1},\s_{i_2}\}, \E\{h_x(\sqrt{n}\s_{i_3}^\top\s_j)h_x(\sqrt{n}\s_{i_4}^\top\s_j)|\s_{i_3},\s_{i_4}\}]\\
        =&4\sum_{1\le i_1<i_2\le j-1}\var[\E\{h_x(\sqrt{n}\s_{i_1}^\top\s_j)h_x(\sqrt{n}\s_{i_2}^\top\s_j)|\s_{i_1},\s_{i_2}\}],
    \end{align*}
    where the last step is due to \eqref{eql3}. Next, we analyze $\var[\E\{h_x(\sqrt{n}\s_1^\top\s_3)h_x(\sqrt{n}\s_2^\top\s_3)|\s_1,\s_2\}]$. Let $r=\s_1^\top\s_2$, and pick an $(n-1)\times(n-1)$ orthogonal matrix $\Q$ such that $\Q\s_1=\e_1$ and $\Q\s_2=r\e_1+\sqrt{1-r^2}\e_2$, where $\e_k$ is a $(n-1)$-dimensional vector with the $k$-th entry being one and the others being zero. Let $\u=(u_1,\ldots,u_{n-1})^\top:=\Q\s_3$. Then, $\u$ is uniformly on $\mS^{n-2}$ due to the rotational invariance, and 
    \begin{align*}
        \s_1^\top\s_3=u_1,~~\text{and}~~\s_2^\top\s_3=ru_1+\sqrt{1-r^2}u_2.
    \end{align*}
    Hence,
    \begin{align*}
        \E\{h_x(\sqrt{n}\s_1^\top\s_3)h_x(\sqrt{n}\s_2^\top\s_3)|\s_1,\s_2\}=\E_{u_1,u_2}\{h_x(\sqrt{n}u_1)h_x(r\sqrt{n}u_1+\sqrt{1-r^2}\sqrt{n}u_2)\}=:\phi(r).
    \end{align*}
    Since $u_2$ is symmetry, $\phi(r)=\phi(-r)$. By CLT, $\sqrt{n}r\cd N(0,1)$, then $r$ concentrates around $0$ with variance $n^{-1}$. Therefore, the Taylor expansion of $\phi(r)$ has the following form:
    \begin{align*}
        \phi(r)=\phi(0)+\frac{\phi^{(2)}(0)}{2!}r^2+\frac{\phi^{(4)}(0)}{4!}r^4+\cdots.
    \end{align*}
    To proceed,
    \begin{align*}
        \phi^{(2)}(0)=\E\{(\sqrt{n}u_1)^2h_x(\sqrt{n}u_1)h_x^{(2)}(\sqrt{n}u_2)-h_x(\sqrt{n}u_1)(\sqrt{n}u_2)h_x^{(1)}(\sqrt{n}u_2)\}.
    \end{align*}
    Note that $\sqrt{n}(u_1,u_2)\cd N(\bzero,\I_2)$ as $n\to\infty$. Then, by Stein's Lemma, we have
    \begin{align*}
        \phi^{(2)}(0)\approx\E\{Z^2h_x(Z)\}\E\{h_x^{(2)}(Z)\}-\E\{h_x(Z)\}\E\{Zh_x^{(1)}(Z)\}=[\E\{h_x^{(2)}(Z)\}]^2,
    \end{align*}
    where $Z\sim N(0,1)$. Hence, $\phi^{(2)}(0)<\infty$. Furthermore,
    \begin{align*}
        \var[\E\{h_x(\sqrt{n}\s_1^\top\s_3)h_x(\sqrt{n}\s_2^\top\s_3)|\s_1,\s_2\}]=\var\{\phi(r)\}\le C\var(r^2)=O(n^{-2}),
    \end{align*}
    where the last step follows from the fact the $r^2=(\s_1^\top\s_2)^2\sim\text{Beta}\left(\frac{1}{2},\frac{n-2}{2}\right)$ by \eqref{eql1}. Therefore,
    \begin{align*}
        \var\left\{\E(I_{x,j}^2|\mathcal{F}_{j-1})\right\}\le Cj^2n^{-2},
    \end{align*}
    which, joint with \eqref{eql8}, implies
    \begin{align}\label{eql9}
        \var\left\{\frac{1}{p^*}\sum_{j=2}^p\E(Y_{x,j}^2|\mathcal{F}_{j-1})\right\}=\var\left\{\frac{1}{p^*}\sum_{j=2}^p\E(I_{x,j}^2|\mathcal{F}_{j-1})\right\}=O(n^{-2}).
    \end{align}
    Combining with \eqref{eql7} and \eqref{eql9}, \eqref{eql4} is proved.

    \noindent\textit{Step 2.} We prove \eqref{eql5}. It suffices to show that 
    \begin{align*}
        \frac{1}{(p^*)^2}\sum_{j=2}^p\E(Y_{x,j}^4)\to 0.
    \end{align*}
    By \eqref{eql3}, using simple calculation, we have
    \begin{align*}
        \E(I_{x,j})=&(j-1)A_{1,n}(x),\\
        \E(I_{x,j}^2)=&(j-1)A_{2,n}(x)+(j-1)(j-2)\{A_{1,n}(x)\}^2,\\
        \E(I_{x,j}^3)=&(j-1)A_{3,n}(x)+3(j-1)(j-2)A_{2,n}(x)A_{1,n}(x)+6\binom{j-1}{3}\{A_{1,n}(x)\}^3,\\
        \E(I_{x,j}^4)=&(j-1)A_{4,n}(x)+4(j-1)(j-2)A_{3,n}(x)A_{1,n}(x)+6\binom{j-1}{2}\{A_{2,n}(x)\}^2\\
        &+12(j-1)\binom{j-2}{2}A_{2,n}(x)\{A_{1,n}(x)\}^2+24\binom{j-1}{4}\{A_{1,n}(x)\}^4.
    \end{align*}
    Hence, by Lemma \ref{lemma9},
    \begin{align*}
        \E(Y_{x,j}^4)=&\E[\{I_{x,j}-\E(I_{x,j})\}^4]\\
        =&(j-1)A_{4,n}(x)-4(j-1)A_{3,n}(x)A_{1,n}(x)+3(j-1)(j-2)\{A_{2,n}(x)\}^2\\
        &-6(j-1)(j-3)A_{2,n}(x)\{A_{1,n}(x)\}^2+3(j-1)(j-3)\{A_{1,n}(x)\}^4\\
        \to&(j-1)A_{4}(x)-4(j-1)A_{3}(x)A_{1}(x)+3(j-1)(j-2)\{A_{2}(x)\}^2\\
        &-6(j-1)(j-3)A_{2}(x)\{A_{1}(x)\}^2+3(j-1)(j-3)\{A_{1}(x)\}^4\\
        \le&Cj^2,
    \end{align*}
    which implies 
    \begin{align*}
        \frac{1}{(p^*)^2}\sum_{j=2}^p\E(Y_{x,j}^4)=O(p^{-1}).
    \end{align*}
    The proof of \eqref{eql5} is completed.
    
\end{proof}

\begin{lemma}\label{lemma11}
	Let $\{(U,U_N,\widetilde{U}_N)\in\mathbb{R}^3;N\geq 1\}$ and $\{(V,V_N,\widetilde{V}_N)\in\mathbb{R}^3;N\geq 1\}$ be two sequences of random variables with $U_N\stackrel{d}{\to} U$ and $V_N\stackrel{d}{\to} V$ as $N\to\infty$. Assume $U$ and $V$ are continuous random variables. We assume that
	\begin{equation*}
	\widetilde{U}_N=U_N+o_p(1)~~\text{and}~~ \widetilde{V}_N=V_N+o_p(1).
	\end{equation*}
	If $U_N$ and $V_N$ are asymptotically independent, then $\widetilde{U}_N$ and $\widetilde{V}_N$ are also asymptotically independent.
\end{lemma}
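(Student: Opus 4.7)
The plan is to establish joint CDF convergence for $(\widetilde{U}_N,\widetilde{V}_N)$ by adapting the standard Slutsky-type sandwich argument to the joint setting. Since $U$ and $V$ are continuous, every point $(u,v)\in\mR^2$ is a continuity point of the limiting CDFs, so the hypothesis of asymptotic independence between $U_N$ and $V_N$ reads
\[
\Pr(U_N\le u,\, V_N\le v)\to F_U(u)F_V(v)\quad\text{for every }(u,v)\in\mR^2.
\]
The goal is to upgrade this convergence to the perturbed sequences, i.e., to show the same limit with $\widetilde{U}_N,\widetilde{V}_N$ in place of $U_N,V_N$.

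First, I would fix $\varepsilon>0$ and exploit the elementary set inclusion
\[
\{\widetilde{U}_N\le u,\,\widetilde{V}_N\le v\}\subseteq\{U_N\le u+\varepsilon,\,V_N\le v+\varepsilon\}\cup\{|\widetilde{U}_N-U_N|>\varepsilon\}\cup\{|\widetilde{V}_N-V_N|>\varepsilon\},
\]
which follows because $\widetilde{U}_N\le u$ together with $|\widetilde{U}_N-U_N|\le\varepsilon$ forces $U_N\le u+\varepsilon$, and similarly for the $V$-coordinate. Combined with $\widetilde{U}_N-U_N=o_p(1)$, $\widetilde{V}_N-V_N=o_p(1)$ and the asymptotic independence of $(U_N,V_N)$, this gives
\[
\limsup_{N\to\infty}\Pr(\widetilde{U}_N\le u,\,\widetilde{V}_N\le v)\le F_U(u+\varepsilon)F_V(v+\varepsilon).
\]
A symmetric inclusion applied to the event $\{U_N\le u-\varepsilon,\,V_N\le v-\varepsilon\}$ produces the matching lower bound
\[
F_U(u-\varepsilon)F_V(v-\varepsilon)\le\liminf_{N\to\infty}\Pr(\widetilde{U}_N\le u,\,\widetilde{V}_N\le v).
\]

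Letting $\varepsilon\downarrow 0$ and invoking the continuity of $F_U$ and $F_V$ at $u$ and $v$, both bounds collapse to $F_U(u)F_V(v)$, which is precisely the joint limit required for asymptotic independence of $(\widetilde{U}_N,\widetilde{V}_N)$. The argument is essentially a routine two-dimensional Slutsky manipulation; there is no substantive obstacle. The only point requiring care is ensuring that the assumed continuity of $U$ and $V$ is actually used at the final step, since without it one could not shrink $\varepsilon$ to $0$ and close the sandwich at arbitrary $(u,v)$.
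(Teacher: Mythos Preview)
Your argument is correct and complete: the two-sided sandwich based on the inclusions
\[
\{\widetilde{U}_N\le u,\widetilde{V}_N\le v\}\subseteq\{U_N\le u+\varepsilon,V_N\le v+\varepsilon\}\cup\{|\widetilde{U}_N-U_N|>\varepsilon\}\cup\{|\widetilde{V}_N-V_N|>\varepsilon\}
\]
and its mirror image, followed by $\varepsilon\downarrow 0$ using the continuity of $F_U$ and $F_V$, is exactly the standard two-dimensional Slutsky manipulation and yields the claim. The paper itself does not give an in-line proof but simply cites Lemma~7.10 of \cite{feng2024asymptotic}; your write-up supplies precisely the routine verification that such a citation is standing in for, so there is nothing substantively different to compare.
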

\begin{proof}
	See Lemma 7.10 in \cite{feng2024asymptotic}.
\end{proof}

\begin{lemma}\label{lemma12}
	For events $A_1,\ldots,A_p$, let $X=\sum_{i=1}^p\indicate(A_i)$, $S_j=\sum_{1\le i_1<\ldots<i_j\le p}P(A_{i_1}\cap\ldots\cap A_{i_j})$, then for any integer $k\in\{1,\ldots,p\}$, for any integer $1\le d\le [(p-k)/2]$,
	\begin{align*}
	\sum_{j=k}^{2d+k+1}(-1)^{k+j}\binom{j-1}{k-1}S_j \le \Pr(X\ge k) \le \sum_{j=k}^{2d+k}(-1)^{k+j}\binom{j-1}{k-1}S_j.
	\end{align*}
\end{lemma}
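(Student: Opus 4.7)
The plan is to reduce the probabilistic inequality to a pointwise deterministic inequality about non-negative integers and then to prove the latter via a Beta-integral representation combined with Taylor's theorem.

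The reduction is immediate. Since $\binom{X}{j}=\sum_{1\le i_1<\ldots<i_j\le p}\indicate(A_{i_1}\cap\ldots\cap A_{i_j})$ holds pointwise in $\omega$ (both sides count the $j$-subsets of the events containing $\omega$), taking expectations gives $\E\binom{X}{j}=S_j$. So it suffices to establish, for every integer $n\in\{0,1,\ldots,p\}$,
$$\sum_{j=k}^{2d+k+1}(-1)^{k+j}\binom{j-1}{k-1}\binom{n}{j}\le \indicate(n\ge k)\le \sum_{j=k}^{2d+k}(-1)^{k+j}\binom{j-1}{k-1}\binom{n}{j}.$$
When $n<k$ every term $\binom{n}{j}$ with $j\ge k$ vanishes, so both bounds equal $\indicate(n\ge k)=0$; when $m\ge n\ge k$ adding terms past $j=n$ contributes nothing, so the case reduces to the full sum. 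The nontrivial range is therefore $n\ge k$ and $k\le m<n$.

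For such $n$, the elementary identity $\binom{j-1}{k-1}\binom{n}{j}=\tfrac{k}{j}\binom{n}{k}\binom{n-k}{j-k}$ (checked by expanding factorials) together with $\tfrac{1}{j}=\int_0^1 t^{j-1}dt$ yields the Beta-integral representation
$$L_m(n):=\sum_{j=k}^{m}(-1)^{j-k}\binom{j-1}{k-1}\binom{n}{j}=k\binom{n}{k}\int_0^1 t^{k-1}Q_{m-k}(t)\,dt,$$
where $Q_M(t):=\sum_{\ell=0}^{M}(-1)^\ell\binom{n-k}{\ell}t^\ell$ is the truncation at degree $M$ of the Maclaurin expansion of $(1-t)^{n-k}$. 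Choosing $m=n$ gives $Q_{n-k}(t)=(1-t)^{n-k}$ and, via the Beta integral, $L_n(n)=k\binom{n}{k}B(k,n-k+1)=1=\indicate(n\ge k)$. Hence the deterministic inequality is equivalent to showing that $1-L_m(n)$ has sign $(-1)^{m-k+1}$ for $k\le m<n$.

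This sign property follows from Taylor's theorem applied to $f(t)=(1-t)^{n-k}$ expanded around $t=0$. Its Maclaurin polynomial truncated at order $m-k$ is exactly $Q_{m-k}(t)$, and its $(m-k+1)$-st derivative equals $(-1)^{m-k+1}\tfrac{(n-k)!}{(n-m-1)!}(1-t)^{n-m-1}$, which has constant sign $(-1)^{m-k+1}$ on $(0,1)$. Consequently the Lagrange remainder $(1-t)^{n-k}-Q_{m-k}(t)$ has sign $(-1)^{m-k+1}$ uniformly in $t\in(0,1)$, and integrating against the positive weight $k\binom{n}{k}t^{k-1}$ preserves this sign. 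Specializing to $m=2d+k$ (even $m-k$, sign $-$) gives $L_m(n)\ge 1$, and to $m=2d+k+1$ (odd $m-k$, sign $+$) gives $L_m(n)\le 1$, which are exactly the stated upper and lower bounds.

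The main obstacle is isolating the right integral representation and carefully controlling the sign of the Lagrange remainder, in particular the boundary case $m=n-1$ where $(1-\xi_t)^{n-m-1}\equiv 1$ and the remainder degenerates to an explicit monomial; once these are in place the rest of the argument is routine binomial-coefficient bookkeeping.
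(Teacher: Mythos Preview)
Your argument is correct. The paper itself does not give a proof of this lemma at all---it simply cites Chapter~IV of Feller's \emph{An Introduction to Probability Theory and Its Applications}---so your proposal is a genuine self-contained proof where the paper has none.

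It is worth noting that your route differs from the classical textbook treatment. Feller's presentation of the generalized Bonferroni inequalities proceeds combinatorially: after the same pointwise reduction to a deterministic identity, one typically analyzes the successive differences $L_{m+1}(n)-L_m(n)=(-1)^{m-k+1}\binom{m}{k-1}\binom{n}{m+1}$ directly and observes that they alternate in sign, which forces the partial sums to oscillate around the full-sum value $1$. Your approach instead packages the partial sum as $k\binom{n}{k}\int_0^1 t^{k-1}Q_{m-k}(t)\,dt$ via the Beta integral and then reads off the sign from the Lagrange remainder of $(1-t)^{n-k}$. This is an elegant analytic alternative: it replaces the bookkeeping of alternating finite differences by the single observation that a derivative of $(1-t)^{n-k}$ has constant sign on $(0,1)$. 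The combinatorial argument is a little shorter; your integral argument is arguably more transparent about \emph{why} the oscillation occurs and connects the Bonferroni remainder to a Taylor remainder in a way that could generalize. Both approaches handle the boundary cases ($n<k$, and $m\ge n$ where the added terms vanish) identically, and both yield the non-strict inequalities stated in the lemma.
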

\begin{proof}
	See Chapter \uppercase\expandafter{\romannumeral4} in \cite{Feller1968Probability}.
\end{proof}

\begin{lemma}\label{lemma13}
    Let $\s_1,\ldots,\s_p$ be i.i.d. random vectors uniformly distributed on $\mS^{n-2}$. Set $\Lambda_p:=\{(i,j):1\le i<j\le p\}=\{I_k=(i_k,j_k):1\le k\le p^*\}$ with
    \begin{align*}
        I_k:=\left(k-\sum_{l=1}^{i-1}(p-l),k-\sum_{l=1}^{i-1}(p-l)+i\right),~~\text{if}~~\sum_{l=1}^{i-1}(p-l)<k\le \sum_{l=1}^{i}(p-l)~~\text{for some}~~1\le i\le p-1.
    \end{align*}
    For any $y\in\mR$ and $I_k=(i_k,j_k)\in\Lambda_p$, define $B_k:=\{n(\s_{i_k}^\top\s_{j_k})^2>4\log p-\log(\log p)+y\}$, and
    \begin{align*}
        H(p,l):=\sum_{1\le k_1<\cdots<k_l\le p^*}\Pr(B_{k_1}\cdots B_{k_l}).
    \end{align*}
    Assume $\log p=o(\sqrt{n})$, then 
    \begin{align*}
        \limsup_{p\to\infty}H(p,l)\le \frac{C}{l!}~~\text{for each}~~l\ge 3.
    \end{align*}
\end{lemma}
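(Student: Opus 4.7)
I would interpret $H(p,l)$ as the $l$-th binomial moment of $X=\sum_{k=1}^{p^*}\indicate(B_k)$, so the target $C/l!$ matches the corresponding Poisson moment with parameter $\lambda:=(8\pi)^{-1/2}e^{-y/2}$. Accordingly, I would group the $l$-element index sets $\{k_1,\ldots,k_l\}$ by the isomorphism class of the graph $G$ on vertex set $\{1,\ldots,p\}$ whose edge set is $\{I_{k_1},\ldots,I_{k_l}\}$, and show that the only non-negligible contribution comes from matchings, i.e.\ from graphs consisting of $l$ pairwise vertex-disjoint edges.

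\textbf{Probability bound via spanning trees.} For $G$ with $m=m(G)$ distinct vertices and $c=c(G)$ connected components, the key estimate is
\[
q_G := \Pr\!\Big(\bigcap_{e\in E(G)} B_e\Big)\ \le\ q_p^{m-c},
\qquad q_p:=\Pr\!\big(n(\s_1^\top\s_2)^2 > 4\log p-\log\log p+y\big).
\]
I would prove this by fixing a spanning tree in each component and processing the vertices of that component in BFS order $v_1,v_2,\ldots$ Conditional on $\sigma(\s_{v_1},\ldots,\s_{v_{i-1}})$ the next vector $\s_{v_i}$ is uniform on $\mS^{n-2}$, so by the rotational invariance exploited in Lemma \ref{lemma7}, $\s_{v_i}^\top\s_{\mathrm{par}(v_i)}$ is distributed as the first coordinate of a uniform vector on $\mS^{n-2}$; hence the tree edge to its parent satisfies its $B_e$ with conditional probability exactly $q_p$. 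Iterating produces $m-c$ factors of $q_p$, and dropping the remaining non-tree constraints yields the displayed inequality. The exact law $(\s_1^\top\s_2)^2\sim\mathrm{Beta}(\tfrac12,\tfrac{n-2}{2})$ together with $\log p=o(\sqrt n)$ then gives the sharp tail $p^*q_p\to\lambda$, and in particular $q_p\le C/p^2$.

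\textbf{Counting and combination.} For each isomorphism class $G$, the number of $l$-element edge-subsets of $K_p$ realizing $G$ is at most $c_l\cdot p^{m}$ for a combinatorial constant $c_l$ depending only on $l$. Thus the contribution of $G$ to $H(p,l)$ is at most $c_l\cdot p^{m}q_p^{m-c}\le C_l\cdot p^{2c-m}$. Since every component of $G$ contains an edge, $m\ge 2c$, with equality if and only if every component is a single edge, i.e.\ $G$ is a matching on $2l$ vertices. Contributions from graphs with $m>2c$ therefore vanish as $p\to\infty$. For matchings ($c=l$, $m=2l$) the spanning-tree bound is sharp: the $l$ edges are vertex-disjoint, hence the events are fully independent, and an exact count gives
\[
\binom{p}{2l}(2l-1)!!\,q_p^{l}\ \sim\ \frac{(p^*q_p)^{l}}{l!}\ \to\ \frac{\lambda^{l}}{l!}.
\]
Summing over isomorphism classes then yields $\limsup_{p\to\infty}H(p,l)=\lambda^{l}/l!\le C/l!$ for a constant $C=C(l,y)$.

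\textbf{Main obstacle.} The principal difficulty is the spanning-tree/BFS argument: one must use the rotational invariance of the uniform law on $\mS^{n-2}$ carefully so that each tree edge contributes exactly the factor $q_p$, and keep track of combinatorial constants when stitching together multiple connected components. A secondary technical point is to obtain the sharp asymptotic $p^*q_p\to\lambda$ from the Beta tail under the relaxed hypothesis $\log p=o(\sqrt n)$, rather than settling for a crude Gaussian-tail bound that would only deliver $q_p=O(p^{-2+\varepsilon})$ and hence a weaker, non-matching-sharp version of the lemma.
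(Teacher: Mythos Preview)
Your proposal is correct, and in fact provides more than the paper does: the paper's own proof of this lemma is simply a citation to Lemma~36 and Equation~(108) of \cite{feng2022max}, with no argument given. Your route---grouping $l$-tuples of edges by the isomorphism type of the graph they form on $\{1,\ldots,p\}$, bounding the joint exceedance probability by $q_p^{m-c}$ via a spanning-tree/BFS peeling, and then showing that only perfect matchings ($m=2c=2l$) survive the combinatorial count---is the standard and rigorous way to handle such sums, and is almost certainly what the cited reference does as well. The BFS conditioning is sound: since $\s_{v_i}$ is independent of $\s_{v_1},\ldots,\s_{v_{i-1}}$ and uniform on $\mS^{n-2}$, rotational invariance makes the conditional law of $\s_{v_i}^\top\s_{\mathrm{par}(v_i)}$ equal to that of the first coordinate, yielding exactly one factor of $q_p$ per tree edge; dropping non-tree edges only enlarges the intersection and hence gives the inequality. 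Your Beta-tail asymptotic $p^*q_p\to\lambda=(8\pi)^{-1/2}e^{-y/2}$ under $\log p=o(\sqrt n)$ is also correct and is what makes the matching contribution converge to $\lambda^l/l!$. One small remark: you write $C=C(l,y)$, whereas the lemma as stated reads as if $C$ were uniform in $l$; your argument actually delivers the sharper $\limsup_p H(p,l)=\lambda^l/l!$, which is precisely what the proof of Theorem~\ref{asymptotic independence} needs (so that $\binom{2d+s-1}{s-1}\lambda^{2d+s}/(2d+s)!\to0$ as $d\to\infty$), and the literal form $C/l!$ with $C$ independent of $l$ would only hold when $\lambda\le1$. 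This is a cosmetic issue with the lemma's phrasing, not a gap in your proof.
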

\begin{proof}
    See Lemma 36 and Equation (108) in \cite{feng2022max}.
\end{proof}

\begin{lemma}\label{lemma14}
    Let $m\ge 1$ and $\{\xi_i;1\le i\le m\}$ be independent random variables with $\E(\xi_1)=0$ for each $i$ and $\sup_{1\le i\le m}\E(|\xi_i|^\tau)<\infty$ for some $\tau\ge 2$. Then there exists a constant $C_{\tau}>0$ depending on $\tau$ only such that
    \begin{align*}
        \E\left(\left|\sum_{i=1}^m\xi_i\right|^\tau\right)\le C_\tau m^{\frac{\tau}{2}-1}\sum_{i=1}^m\E(|\xi_i|^\tau).
    \end{align*}
\end{lemma}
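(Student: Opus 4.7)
The plan is to deduce this Rosenthal-type moment bound directly from the classical Rosenthal inequality combined with an elementary Jensen estimate. Specifically, I would first invoke the classical Rosenthal inequality for independent mean-zero random variables: for $\tau \geq 2$, there is a constant $K_\tau > 0$ depending only on $\tau$ such that
\begin{equation*}
\E\Bigl|\sum_{i=1}^m \xi_i\Bigr|^\tau \le K_\tau\Bigl\{\sum_{i=1}^m \E|\xi_i|^\tau + \Bigl(\sum_{i=1}^m \E\xi_i^2\Bigr)^{\tau/2}\Bigr\}.
\end{equation*}
It then suffices to show that each term on the right is dominated by $m^{\tau/2-1}\sum_{i=1}^m \E|\xi_i|^\tau$, after which the claim follows with $C_\tau = 2K_\tau$.

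For the first term, since $\tau \geq 2$ implies $m^{\tau/2-1} \geq 1$, the bound is trivial. For the second term, Jensen's inequality applied to the concave map $t \mapsto t^{2/\tau}$ gives $\E\xi_i^2 = \E(|\xi_i|^\tau)^{2/\tau} \leq (\E|\xi_i|^\tau)^{2/\tau}$. A second application of Jensen (equivalently, the power-mean inequality) yields
\begin{equation*}
\sum_{i=1}^m (\E|\xi_i|^\tau)^{2/\tau} \le m^{1-2/\tau}\Bigl(\sum_{i=1}^m \E|\xi_i|^\tau\Bigr)^{2/\tau},
\end{equation*}
and raising to the $\tau/2$ power produces $(\sum_{i=1}^m \E\xi_i^2)^{\tau/2} \leq m^{\tau/2-1}\sum_{i=1}^m \E|\xi_i|^\tau$, completing the reduction.

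The only nontrivial input is Rosenthal's inequality itself, which is established in the probability literature (for instance, via Burkholder/martingale arguments or by a dyadic symmetrization-and-truncation induction on $\tau$) and can safely be cited as a black box. Given it, the remaining interpolation step is a one-line application of Jensen, so no serious technical obstacle remains. The main subtlety worth flagging is that the constant $K_\tau$ blows up as $\tau \to \infty$, but since $\tau$ is fixed in every application of this lemma in the paper, this causes no issue.
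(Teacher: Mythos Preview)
Your argument is correct: Rosenthal's inequality plus the Jensen/power-mean interpolation you describe does yield the stated bound with $C_\tau=2K_\tau$. The paper itself does not give a proof at all---its entire ``proof'' is a one-line citation to Lemma~2 of \cite{feng2022max}---so your self-contained derivation is more informative than what the paper provides, and is in any case the standard way this inequality is obtained from Rosenthal.
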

\begin{proof}
    See Lemma 2 in \cite{feng2022max}.
\end{proof}

\begin{lemma}\label{lemma15}
    Let $\xi$ be a random variable with $\E(\xi)=a$. Let $\tau\ge 2$ be given. The following holds.\\
    (\romannumeral1) If $a=0$, then
    \begin{align*}
        \E\{|\xi^2-\E(\xi^2)|^\tau\}\le32^\tau\E(|\xi|^{2\tau}).
    \end{align*}
    (\romannumeral2) If $a\ne 0$, then 
    \begin{align*}
        \E\{|\xi^2-\E(\xi^2)|^\tau\}\le16^\tau\left[|a|^{-\tau}\{\var(\xi)\}^\tau+\sqrt{\E(|\xi-a|^{2\tau})}\right]\cdot\left[|a|^{\tau}+|a|^{-\tau}\{\var(\xi)\}^\tau+\sqrt{\E(|\xi-a|^{2\tau})}\right].
    \end{align*}
\end{lemma}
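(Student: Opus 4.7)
I would handle the two parts by different but related ideas: part (i) is a direct triangle-inequality argument, and part (ii) rests on the conjugate factorization $\xi^2-\E(\xi^2)=(\xi-b)(\xi+b)$ with $b=\sgn(a)\sqrt{\E(\xi^2)}$, which turns a difference of squares into a product.

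For part (i), since $a=0$ forces $\E(\xi^2)=\var(\xi)$, Minkowski's inequality in $L^\tau$ combined with Jensen's bound $\E(\xi^2)\le\{\E|\xi|^{2\tau}\}^{1/\tau}$ yields
\begin{align*}
\{\E|\xi^2-\E(\xi^2)|^\tau\}^{1/\tau}\le \{\E|\xi|^{2\tau}\}^{1/\tau}+\E(\xi^2)\le 2\{\E|\xi|^{2\tau}\}^{1/\tau},
\end{align*}
so $\E|\xi^2-\E(\xi^2)|^\tau\le 2^\tau\E|\xi|^{2\tau}$, which is well within the stated $32^\tau\E|\xi|^{2\tau}$.

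For part (ii), the choice $b=\sgn(a)\sqrt{\E(\xi^2)}$ gives $|b|=\sqrt{a^2+\var(\xi)}\ge|a|$ and $\sgn(b)=\sgn(a)$, so the Cauchy--Schwarz inequality reduces the task to
\begin{align*}
\E|\xi^2-\E(\xi^2)|^\tau\le \sqrt{\E|\xi-b|^{2\tau}}\cdot\sqrt{\E|\xi+b|^{2\tau}}.
\end{align*}
The key observation, where the hypothesis $a\ne 0$ enters, is the conjugate identity
\begin{align*}
|b-a|=\sqrt{a^2+\var(\xi)}-|a|=\frac{\var(\xi)}{\sqrt{a^2+\var(\xi)}+|a|}\le\frac{\var(\xi)}{|a|},
\end{align*}
together with $|a+b|=|a|+|b|\le 2|b|\le 2\sqrt{a^2+\var(\xi)}$. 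Decomposing $\xi-b=(\xi-a)-(b-a)$ and $\xi+b=(\xi-a)+(a+b)$ and then applying $|u+v|^{2\tau}\le 2^{2\tau-1}(|u|^{2\tau}+|v|^{2\tau})$ followed by $\sqrt{x+y}\le\sqrt{x}+\sqrt{y}$ gives
\begin{align*}
\sqrt{\E|\xi-b|^{2\tau}}&\le 2^{(2\tau-1)/2}\left(\sqrt{\E|\xi-a|^{2\tau}}+|a|^{-\tau}\{\var(\xi)\}^{\tau}\right),\\
\sqrt{\E|\xi+b|^{2\tau}}&\le 2^{(2\tau-1)/2}\left(\sqrt{\E|\xi-a|^{2\tau}}+2^\tau(a^2+\var\xi)^{\tau/2}\right).
\end{align*}
The surviving factor $(a^2+\var\xi)^{\tau/2}\le 2^{\tau-1}(|a|^\tau+\{\var\xi\}^{\tau/2})$ is then collapsed onto the three allowed pieces via the Jensen bound $\{\var(\xi)\}^{\tau/2}\le\sqrt{\E|\xi-a|^{2\tau}}$. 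Multiplying the two square-root bounds and tallying the powers of $2$ produces a constant $2^{4\tau-1}\le 16^\tau$ in front of $(\sqrt{\E|\xi-a|^{2\tau}}+|a|^{-\tau}\{\var\xi\}^\tau)(|a|^\tau+\sqrt{\E|\xi-a|^{2\tau}})$, which is dominated by the expression in the statement.

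The only real obstacle is the bookkeeping: I must arrange that every leftover $\sqrt{\var(\xi)}$ is absorbed either into $\sqrt{\E|\xi-a|^{2\tau}}$ through Jensen, or routed through the $|a|^{-\tau}\{\var(\xi)\}^\tau$ channel through the conjugate identity for $|b-a|$. No individual step requires more than elementary inequalities; the analytic content is concentrated entirely in the choice $b=\sgn(a)\sqrt{\E(\xi^2)}$, which is what makes the two conjugate factors dissimilar enough that $|b-a|$ gains the favourable factor $|a|^{-1}$.
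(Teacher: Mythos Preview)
Your argument is correct. Part (i) follows immediately from Minkowski plus Jensen and in fact yields the sharper constant $2^\tau$; part (ii) is carried by the factorization $\xi^2-\E(\xi^2)=(\xi-b)(\xi+b)$ with $b=\sgn(a)\sqrt{\E(\xi^2)}$, and your subsequent bookkeeping (Cauchy--Schwarz, the conjugate identity $|b-a|\le\var(\xi)/|a|$, the Jensen absorption $\{\var(\xi)\}^{\tau/2}\le\sqrt{\E|\xi-a|^{2\tau}}$, and the power-of-two tally) all checks out. One minor slip: the convexity bound on $(a^2+\var\xi)^{\tau/2}$ actually gives the exponent $2^{\tau/2-1}$ rather than $2^{\tau-1}$, so your final constant could be tightened to $2^{7\tau/2-1}$, but the looser $2^{4\tau-1}\le 16^\tau$ you state is of course still valid.

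As for comparison with the paper: there is nothing to compare against, since the paper does not prove this lemma but simply cites it as Lemma~1 of \cite{feng2022max}. Your write-up therefore supplies a self-contained proof where the paper defers to an external reference.
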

\begin{proof}
    See Lemma 1 in \cite{feng2022max}.
\end{proof}

\subsection{Proof of Theorem \ref{asymptotic distribution of ordered statistics} (\romannumeral1)}\label{appendix2}
\begin{proof}
    Set $\Lambda_p:=\{(i,j):1\le i<j\le p\}=\{I_k=(i_k,j_k):1\le k\le p^*\}$ with
    \begin{align}\label{eq1}
        I_k:=\left(k-\sum_{l=1}^{i-1}(p-l),k-\sum_{l=1}^{i-1}(p-l)+i\right),~~\text{if}~~\sum_{l=1}^{i-1}(p-l)<k\le \sum_{l=1}^{i}(p-l)~~\text{for some}~~1\le i\le p-1.
    \end{align}
    For $I_k=(i_k,j_k)\in\Lambda_p$, write $\hat{\rho}_{I_k}=\hat{\rho}_{i_kj_k}$. Define $N_{p^*}(B)=\sum_{k=1}^{p^*}\indicate(n\hat{\rho}^2_{I_k}>x+b_p,k/p^*\in B)$ for any Borel set $B$ on $(0,1]$. Then,
	\begin{align*}
	\{n\hat{\rho}^2_{(p^*+1-s)}\le x+b_p\}=\{N_{p^*}((0,1])\le s-1\}.
	\end{align*}
	Let $N$ be a Poisson
	point process on $(0,1]$ with intensity $\log \Lambda^{-1}(x)$, then
	\begin{align*}
	\Pr\left(N((0,1])\le s-1\right)=\Lambda(x)\sum_{i=0}^{s-1}\frac{\{\log \Lambda^{-1}(x)\}^i}{i!}.
	\end{align*}
	Hence, it suffices to show that $N_{p^*}\cd N$. By Lemma \ref{lemma1}, it is sufficient to show that
	\begin{align}\label{eq2}
	    \E\{N_{p^*}((a,b])\} \to \E\{N((a,b])\} =(b-a)\log \Lambda^{-1}(x),0<a<b\le 1,
	\end{align}
	and for $0<a_1<b_1\le a_2<b_2\le\ldots\le a_l<b_l\le 1$, $l\ge 1$,
	\begin{align}\label{eq3}
	    \Pr\left(\bigcap_{i=1}^l\{N_{p^*}((a_i,b_i])=0\} \right)\to& \Pr\left(\bigcap_{i=1}^l\{N((a_i,b_i])=0\} \right)\nonumber\\
	    =&\exp\left[-\sum_{i=1}^l(b_i-a_i)\log \Lambda^{-1}(x)\right].
	\end{align}

    \noindent\textit{Step 1.} We prove \eqref{eq2}. Note that under $H_0$, $\bms$ is diagonal. Then,
    \begin{align}\label{eq4}
        \hat{\rho}_{ij}=\frac{(\Z_{.i}-\bar{Z}_i)^\top(\Z_{.j}-\bar{Z}_j)}{\Vert\Z_{.i}-\bar{Z}_i\Vert_2\cdot\Vert\Z_{.j}-\bar{Z}_j\Vert_2},
    \end{align}
    where $\Z_{.i}=(Z_{1i},\ldots,Z_{ni})^\top$ and $\bar{Z}_i=n^{-1}\sum_{k=1}^nZ_{ki}$. Here, we write $\Z_{.i}-\bar{Z}_i$ for $\Z_{.i}-\bar{Z}_i\mathbbm{1}_n$, where $\mathbbm{1}_n=(1,\ldots,1)^\top\in\mR^n$. Since $Z_{ij}$'s are i.i.d., then $\{\hat{\rho}_{I_k},k\in \Lambda_p\}$ are identically distributed and
    \begin{align}\label{eq5}
        \E\{N_{p^*}((a,b])\}=\sum_{k\in(ap^*,bp^*]}\Pr\left(n\hat{\rho}^2_{I_k}>x+b_p\right)=(b-a)p^*\Pr\left(n\hat{\rho}^2_{12}>x+b_p\right).
    \end{align}
    Furthermore, setting $\check{\rho}_{k},1\le k\le p^*$ be i.i.d. copies of $\hat{\rho}_{12}$, by Lemma \ref{lemma2},
    \begin{align*}
        \left[1-\left\{1-\Pr\left(n\hat{\rho}^2_{12}\le x+b_p\right)\right\}\right]^{p^*}=\left\{\Pr\left(n\hat{\rho}^2_{12}\le x+b_p\right)\right\}^{p^*}=\Pr\left(\max_{1\le k\le p^*}n\check{\rho}^2_k-b_p\le x\right)\to \Lambda(x).
    \end{align*}
    By taking logarithms, we have
    \begin{align*}
        -p^*\left\{1-\Pr\left(n\hat{\rho}^2_{12}\le x+b_p\right)\right\}\{1+o(1)\}\to\log\Lambda(x),
    \end{align*}
    which, combining with \eqref{eq5}, implies \eqref{eq2}.

    \noindent\textit{Step 2.} We prove \eqref{eq3}. Set $E_l:=\cup_{i=1}^l\{\lfloor a_ip^*\rfloor+1,\lfloor a_ip^*\rfloor+2,\ldots,\lfloor b_ip^*\rfloor\}$. For any $I_k=(i_k,j_k)\in\Lambda_p$ defined in \eqref{eq1}, set 
    \begin{align}\label{eq6}
        \tilde{\rho}_{I_k}:=\frac{1}{n}\Z_{.i_k}^\top\Z_{.j_k}.
    \end{align}
    Then,
    \begin{align}\label{eq7}
        &\Pr\left(\bigcap_{i=1}^l\{N_{p^*}((a_i,b_i])=0\} \right)\nonumber\\
        =&\Pr\left(\max_{k\in E_l}n\hat{\rho}^2_{I_k}\le x+b_p\right)\nonumber\\
        =&\Pr\left(\max_{k\in E_l}n\tilde{\rho}^2_{I_k}\le x+b_p\right)+\left\{\Pr\left(\max_{k\in E_l}n\hat{\rho}^2_{I_k}\le x+b_p\right)-\Pr\left(\max_{k\in E_l}n\tilde{\rho}^2_{I_k}\le x+b_p\right)\right\}.
    \end{align}

    \noindent\textit{Step 2.1} We first show that 
    \begin{align}\label{eq8}
        \Pr\left(\max_{k\in E_l}n\tilde{\rho}^2_{I_k}\le x+b_p\right)\to \exp\left[-\sum_{i=1}^l(b_i-a_i)\log \Lambda^{-1}(x)\right].
    \end{align}
    For any $I_k=(i_k,j_k)\in\Lambda_p$ defined in \eqref{eq1}, define 
    \begin{align*}
        B_{I_k}:=\{I_{k'}\in\Lambda_p:k'\in E_l,|I_k\cap I_{k'}|\ge 1,I_k\ne I_{k'}\}.
    \end{align*}
    Then, by Lemma \ref{lemma3},
    \begin{align}\label{eq9}
        \left|\Pr\left(\max_{k\in E_l}n\tilde{\rho}^2_{I_k}\le x+b_p\right)-\exp(-\lambda_{1,n})\right|\le 2(b_{1,n}+b_{2,n}),
    \end{align}
    where, 
    \begin{align*}
        \lambda_{1,n}:=&\sum_{k\in E_l}\Pr\left(n\tilde{\rho}^2_{I_k}>x+b_p\right)=\sum_{i=1}^l(b_i-a_i)p^*\Pr\left(n\tilde{\rho}^2_{I_1}>x+b_p\right),\\
        b_{1,n}:=&\sum_{I_k:k\in E_l}\sum_{I_{k'}\in B_{I_k}}\Pr\left(n\tilde{\rho}^2_{I_k}>x+b_p\right)\Pr\left(n\tilde{\rho}^2_{I_k'}>x+b_p\right)\le Cp^3\left\{\Pr\left(n\tilde{\rho}^2_{I_1}>x+b_p\right)\right\}^2,~~\text{and}\\
        b_{2,n}:=&\sum_{I_k:k\in E_l}\sum_{I_{k'}\in B_{I_k}}\Pr\left(n\tilde{\rho}^2_{I_k}>x+b_p,n\tilde{\rho}^2_{I_k'}>x+b_p\right)\le Cp^3\Pr\left(n\tilde{\rho}^2_{I_1}>x+b_p,n\tilde{\rho}^2_{I_2}>x+b_p\right),
    \end{align*}
    from the fact that $\{\tilde{\rho}_{I_k},k\in \Lambda_p\}$ are identically distributed. We first calculate $\lambda_{1,n}$. Recalling the definition \eqref{eq6},
    \begin{align*}
        \Pr\left(n\tilde{\rho}^2_{I_1}>x+b_p\right)=\Pr\left(\frac{1}{\sqrt{n\log p}}\left|\sum_{k=1}^nZ_{k1}Z_{k2}\right|>\sqrt{\frac{x+b_p}{\log p}}\right).
    \end{align*}
    Note that $\sqrt{(x+b_p)/\log p}\to 2$ as $p\to\infty$, and $\{Z_{k1}Z_{k2},1\le k\le n\}$ are i.i.d. with $\E(Z_{k1}Z_{k2})=0$, $\E\{(Z_{k1}Z_{k2})^2\}=1$ and for $0<a\le 2$,
    \begin{align*}
        |Z_{k1}Z_{k2}|^{a/2}\le\left(\frac{Z_{k1}^2+Z_{k2}^2}{2}\right)^{a/2}\le \frac{1}{2^{a/2}}\left(|Z_{k1}|^a+|Z_{k2}|^a\right).
    \end{align*}
    Hence, by independence, 
    \begin{align*}
        \E\{\exp(t_0|Z_{k1}Z_{k2}|^{a/2})\}<\infty.
    \end{align*}
    Thus, by Lemma \ref{lemma4}, we get
    \begin{align*}
        \Pr\left(\frac{1}{\sqrt{n\log p}}\left|\sum_{k=1}^nZ_{k1}Z_{k2}\right|>\sqrt{\frac{x+b_p}{\log p}}\right)
        \sim\frac{p^{-\frac{x+b_p}{2\log p}}(\log p)^{-1/2}}{\sqrt{8\pi}}
        \sim\frac{1}{p^2}\cdot\frac{\exp(-x/2)}{\sqrt{8\pi}},
    \end{align*}
    as $\min(n,p)\to\infty$. Considering $\E(Z_{ki})=0$, it is easy to see that the above also holds if $Z_{k1}Z_{k2}$ is replaced by $-Z_{k1}Z_{k2}$. These lead to
    \begin{align*}
        \Pr\left(n\tilde{\rho}^2_{I_1}>x+b_p\right)\sim\frac{2}{p^2}\cdot\frac{\exp(-x/2)}{\sqrt{8\pi}},
    \end{align*}
    and 
    \begin{align}\label{eq10}
        \lambda_{1,n}\sim\sum_{i=1}^l(b_i-a_i)\cdot\frac{\exp(-x/2)}{\sqrt{8\pi}},
    \end{align}
    as $\min(n,p)\to\infty$. Furthermore,
    \begin{align}\label{eq11}
        b_{1,n}\le Cp^3\left\{\frac{\lambda_{1,n}}{\sum_{i=1}^l(b_i-a_i)p^*}\right\}^2=O(p^{-1}).
    \end{align}
    To proceed, by Lemma \ref{lemma5}, for any $\varepsilon>0$,
    \begin{align}\label{eq12}
        b_{2,n}\le& Cp^3\Pr\left(n\tilde{\rho}^2_{I_1}>x+b_p,n\tilde{\rho}^2_{I_2}>x+b_p\right)\nonumber\\
        =&Cp^3\Pr\left(|\Z_{.1}^\top\Z_{.2}|>\sqrt{n(x+b_p)},|\Z_{.2}^\top\Z_{.3}|>\sqrt{n(x+b_p)}\right)\nonumber\\
        =&Cp^3\E\left[\left\{\Pr\left(|\Z_{.1}^\top\Z_{.2}|>\sqrt{n(x+b_p)}|\Z_{.2}\right)\right\}^2\right]=O(p^3\cdot p^{-4+\varepsilon}).
    \end{align}
    Combining \eqref{eq9}-\eqref{eq12} implies \eqref{eq8}.

    \noindent\textit{Step 2.2} Next we show that 
    \begin{align}\label{eq13}
        \left|\max_{k\in E_l}n\hat{\rho}^2_{I_k}-\max_{k\in E_l}n\tilde{\rho}^2_{I_k}\right|=o_p(1).
    \end{align}
    Observe that
    \begin{align}\label{eq14}
        \left|\max_{k\in E_l}n\hat{\rho}^2_{I_k}-\max_{k\in E_l}n\tilde{\rho}^2_{I_k}\right|=&n^{-1}\left|\max_{k\in E_l}n|\hat{\rho}_{I_k}|-\max_{k\in E_l}n|\tilde{\rho}_{I_k}|\right|\cdot\left|\max_{k\in E_l}n|\hat{\rho}_{I_k}|+\max_{k\in E_l}n|\tilde{\rho}_{I_k}|\right|\nonumber\\
        \le&n^{-1}\Delta_n(\Delta_n+2W_n),
    \end{align}
    where $\Delta_n:=\max_{k\in E_l}|n\hat{\rho}_{I_k}-n\tilde{\rho}_{I_k}|$ and $W_n:=\max_{k\in E_l}|n\tilde{\rho}_{I_k}|$. By \eqref{eq8}, setting $x=\log(\log p)$, we get
    \begin{align}\label{eq15}
        \frac{W_n}{\sqrt{n\log p}}\cp 2.
    \end{align}
    For $I_k=(i_k,j_k)\in\Lambda_p$,
    \begin{align}
        |n\hat{\rho}_{I_k}-n\tilde{\rho}_{I_k}|=&\left|\frac{\Z_{.i_k}^\top\Z_{.j_k}-n\bar{Z}_{i_k}\bar{Z}_{j_k}}{h_{i_k}h_{j_k}}-\Z_{.i_k}^\top\Z_{.j_k}\right|\nonumber\\
        =&|(h_{i_k}h_{j_k})^{-1}-1|\cdot|\Z_{.i_k}^\top\Z_{.j_k}|+n\frac{|\bar{Z}_{i_k}|\cdot|\bar{Z}_{j_k}|}{h_{i_k}h_{j_k}}\nonumber\\
        =&|\{(h_{i_k}-1)(h_{j_k}-1)+(h_{i_k}-1)+(h_{j_k}-1)\}(h_{i_k}h_{j_k})^{-1}|\cdot|\Z_{.i_k}^\top\Z_{.j_k}|+n\frac{|\bar{Z}_{i_k}|\cdot|\bar{Z}_{j_k}|}{h_{i_k}h_{j_k}},
    \end{align}
    where $h_i:=\Vert\Z_{.i}-\bar{Z}_i\Vert_2/\sqrt{n}$. Taking maximum for both side, we obtain
    \begin{align*}
        \Delta_n\le(b_{n,1}^2+2b_{n,1})b_{n,3}^{-2}W_n+nb_{n,3}^{-2}b_{n,4}^2,
    \end{align*}
    where
    \begin{align*}
        b_{n,1}:=\max_{1\le i\le p}|h_i-1|,~~b_{n,3}:=\min_{1\le i\le p}h_i~~\text{and}~~b_{n,4}:=\max_{1\le i\le p}|\bar{Z}_i|.
    \end{align*} 
    It follows that
    \begin{align*}
        \frac{\Delta_n}{\log p}\le \left\{\sqrt{\frac{\log p}{n}}\left(\sqrt{\frac{n}{\log p}}b_{n,1}\right)^2+2\sqrt{\frac{n}{\log p}}b_{n,1}\right\}b_{n,3}^{-2}\cdot\frac{W_n}{\sqrt{n\log p}}+b_{n,3}^{-2}\left(\sqrt{\frac{n}{\log p}}b_{n,4}\right)^2,
    \end{align*}
    Then, by Lemma \ref{lemma6} and \eqref{eq15}, we obtain
    \begin{align*}
        \left\{\frac{\Delta_n}{\log p}\right\}~~\text{are tight},
    \end{align*}
    which, joint with \eqref{eq14} and \eqref{eq15}, implies
    \begin{align*}
        \left|\max_{k\in E_l}n\hat{\rho}^2_{I_k}-\max_{k\in E_l}n\tilde{\rho}^2_{I_k}\right|\le\sqrt{\frac{\log^3 p}{n}}\cdot\frac{\Delta_n}{\log p}\left(\sqrt{\frac{\log p}{n}}\cdot\frac{\Delta_n}{\log p}+2\frac{W_n}{\sqrt{n\log p}}\right)=o_p(1).
    \end{align*}
    That is \eqref{eq13}. Combining \eqref{eq7}, \eqref{eq8} and \eqref{eq13}, the proof of \eqref{eq3} is completed. 
    
    In summary, \eqref{eq2} and \eqref{eq3} are proved.
\end{proof}

\subsection{Proof of Theorem \ref{asymptotic distribution of ordered statistics} (\romannumeral2)}\label{appendix3}
\begin{proof}
    Recalling the definition \eqref{eq1}, for Borel sets $B\in(0,1]\times\mR$, define 
	\begin{align*}
	\widetilde{N}_{p^*}^{(s)}(B):=\sum_{k=1}^{p^*}\indicate(n\hat{\rho}^2_{I_k}>x_s+b_p,(k/p^*,s)\in B),~~\text{and}~~\widetilde{N}_{p^*}(B):= \sum_{s=1}^l\widetilde{N}_{p^*}^{(s)}(B).
	\end{align*}
	Then,
    \begin{align*}
        \Pr\left(\bigcap_{s=1}^l\left(n\hat{\rho}^2_{(p^*+1-s)}-b_p \leq x_s\right)\right)=\Pr\left(\bigcap_{s=1}^l\left\{ \widetilde{N}^{(s)}_{p^*}((0,1]\times (s-1/2,s+1/2] )\le s-1\right\}\right).
	\end{align*}
	Construct a point process $\widetilde{N}$ on $(0,1]\times\mR$ such that $\widetilde{N}(\cdot)=\sum_{s=1}^l\widetilde{N}^{(s)}(\cdot)$, where for each $1\le s\le l-1$, $\widetilde{N}^{(s)}$ is a Poisson process independent thinning of the Poisson process $\widetilde{N}^{(s+1)}$ with deleting probability $1-\frac{\log \Lambda^{-1}(x_s)}{\log \Lambda^{-1}(x_{s+1})}$, the initial Poisson process $\widetilde{N}^{(l)}$ has intensity $\log \Lambda^{-1}(x_l)$. According to Section 5 in \cite{Leadbetter1983ExtremesAR}, it is known that $\widetilde{N}^{(s)},s=1,\ldots,l$ are Poisson processes with intensity $\log \Lambda^{-1}(x_s)$ respectively, and are independent on disjoint intervals on the plane, and
	\begin{align*}
	    &\Pr\left( \bigcap_{s=1}^l\left\{ \widetilde{N}^{(s)}((0,1]\times (s-1/2,s+1/2] )\le s-1\right\} \right)\\
	    =& \Lambda\left(x_l\right) \sum_{\sum_{i=2}^s l_i \leq s-1, s=2, \ldots, l} \prod_{i=2}^l \frac{\{\log \Lambda^{-1}(x_i)-\log \Lambda^{-1}(x_{i-1})\}^{l_i}}{l_i!}.
	\end{align*}
	Hence, it suffices to show that $\widetilde{N}_{p^*}\cd\widetilde{N}$. By Lemma \ref{lemma1}, it is sufficient to show that for $0<a<b\le 1$, $0<c<d\le l$,
	\begin{align}\label{eq17}
	    \E\{\widetilde{N}_{p^*}((a,b]\times (c,d])\} \to \E\{\widetilde{N}((a,b]\times (c,d])\} =(b-a)\sum_{c<s\le d}\log \Lambda^{-1}(x_s),
	\end{align}
	and for $0<a_1<b_1\le a_2<b_2\le\ldots\le a_s<b_s\le 1$, $0<c_1<d_1\le c_2<d_2\le\ldots\le c_s<d_s\le l$,
	\begin{align}\label{eq18}
	    \Pr\left(\bigcap_{s=1}^l\{\widetilde{N}_{p^*}((a_s,b_s]\times (c_l,d_l])=0\} \right)&\to \Pr\left(\bigcap_{s=1}^l\{\widetilde{N}((a_s,b_s]\times (c_s,d_s])=0\} \right)\nonumber\\
	    &=\exp\left[-\sum_{s=1}^l(b_s-a_s)\log \Lambda^{-1}(x_{d_s})\right].
	\end{align}

    For \eqref{eq17}, by \eqref{eq2},
    \begin{align*}
        \E\{\widetilde{N}_{p^*}((a,b]\times (c,d])\}=\sum_{s\in(c,d]}\sum_{k\in(ap^*,bp^*]}\Pr(n\hat{\rho}^2_{I_k}>x_s+b_p)\to \sum_{c<s\le d}(b-a)\log \Lambda^{-1}(x_s).
    \end{align*}

    For \eqref{eq18}, by \eqref{eq13},
    \begin{align*}
        \Pr\left(\bigcap_{s=1}^l\{\widetilde{N}_{p^*}((a_s,b_s]\times (c_l,d_l])=0\} \right)=&\Pr\left(\bigcap_{s=1}^l\left\{\max_{k\in(a_lp^*,b_lp^*]}n\hat{\rho}^2_{I_k}\le x_{d_l}+b_p\right\} \right)\\
        =&\Pr\left(\bigcap_{s=1}^l\left\{\max_{k\in(a_lp^*,b_lp^*]}n\tilde{\rho}^2_{I_k}\le x_{d_l}+b_p\right\} \right)+o(1).
    \end{align*}
    Using Lemma \ref{lemma3} and the similar arguments as thoes used in the proof of \eqref{eq8}, we can prove
    \begin{align*}
        \Pr\left(\bigcap_{s=1}^l\left\{\max_{k\in(a_lp^*,b_lp^*]}n\tilde{\rho}^2_{I_k}\le x_{d_l}+b_p\right\} \right)\to \exp\left[-\sum_{s=1}^l(b_s-a_s)\log \Lambda^{-1}(x_{d_s})\right].
    \end{align*}
\end{proof}

\subsection{Proof of Theorem \ref{bootstrap}}\label{appendix4}
\begin{proof}
    Recalling \eqref{eq4}, 
    \begin{align*}
        \hat{\rho}_{ij}=\frac{(\Z_{.i}-\bar{Z}_i)^\top(\Z_{.j}-\bar{Z}_j)}{\Vert\Z_{.i}-\bar{Z}_i\Vert_2\cdot\Vert\Z_{.j}-\bar{Z}_j\Vert_2}=\frac{\Z_{.i}^\top\Z_{.j}-n\bar{Z}_i\bar{Z}_j}{\Vert\Z_{.i}-\bar{Z}_i\Vert_2\cdot\Vert\Z_{.j}-\bar{Z}_j\Vert_2}.
    \end{align*}
    For each $1\le i\le p$, we sample a permutation $\pi_i=\{\pi_i(1),\ldots,\pi_i(n)\}$ of $\{1,\ldots,n\}$ uniformly at random. Then, under $H_0$, we can write
    \begin{align*}
        \X_{.i}^*=\mu_i\e+\sigma_{ii}\Z_{.i}^*,
    \end{align*}
    where $\X_{.i}^*=(X_{\pi_i(1),i},\ldots,X_{\pi_i(n),i})^{\top}$ and $\Z_{.i}^*=(Z_{\pi_i(1),i},\ldots,Z_{\pi_i(n),i})^{\top}$. Hence,
    \begin{align*}
        \hat{\rho}^*_{ij}=\frac{(\Z^*_{.i}-\bar{Z}^*_i)^\top(\Z^*_{.j}-\bar{Z}^*_j)}{\Vert\Z^*_{.i}-\bar{Z}^*_i\Vert_2\cdot\Vert\Z^*_{.j}-\bar{Z}^*_j\Vert_2}=\frac{(\Z^*_{.i})^\top\Z^*_{.j}-n\bar{Z}_i\bar{Z}_j}{\Vert\Z_{.i}-\bar{Z}_i\Vert_2\cdot\Vert\Z_{.j}-\bar{Z}_j\Vert_2}.
    \end{align*}
    Since $\{Z_{ij}:1\le i\le n,1\le j\le p\}$ are i.i.d., $(\Z^*_{.i})^\top\Z^*_{.j}\stackrel{d}{=}\Z_{.i}^\top\Z_{.j}$, which implies $\hat{\rho}^*_{ij}\stackrel{d}{=}\hat{\rho}_{ij}$. Thus, the observations are exchangeable.   
\end{proof}

\subsection{Proof of Theorem \ref{asymptotic normality}}\label{appendix5}
\begin{proof}
    By \eqref{eq4} and Lemma \ref{lemma7}, we have  
    \begin{align}\label{eq19}
        \hat{\rho}_{ij}=\frac{(\P\Z_{.i})^\top(\P\Z_{.j})}{\Vert\P\Z_{.i}\Vert_2\cdot\Vert\P\Z_{.j}\Vert_2}=\s_i^\top\s_j,~~1\le i<j\le p,
    \end{align}
    where $\P=\I_n-n^{-1}\bone_n\bone_n^\top$ and $\s_1,\ldots,\s_p$ are i.i.d. random vectors uniformly distributed on $\mS^{n-2}$. Then,
    \begin{align*}
        \frac{1}{\sqrt{p^*}}T_{\lceil\gamma p^*\rceil}=\frac{1}{\sqrt{p^*}}\sum_{1\le i<j\le p}n(\s_i^\top\s_j)^2\indicate\{n(\s_i^\top\s_j)^2\ge\hat{v}_{\gamma,n,p}\},
    \end{align*}
    where 
    \begin{align*}
        \hat{v}_{\gamma,n,p}:=\inf\{x:\hat{F}_{n,p}(x)\ge 1-\gamma\}~~\text{with}~~\hat{F}_{n,p}(x):=\frac{2}{p(p-1)}\sum_{1\le i<j\le p}\indicate(n(\s_i^\top\s_j)^2\le x).
    \end{align*}    
    We first claim that
    \begin{align}\label{eq20}
        \frac{1}{\sqrt{p^*}}T_{\lceil\gamma p^*\rceil}=\frac{1}{\sqrt{p^*}}\sum_{j=2}^p\{I_{v_{\gamma,n},j}+(j-1)\gamma v_{\gamma,n}\}+o_p(1),
    \end{align}
    where
    \begin{align*}
        v_{\gamma,n}:=\inf\{x:F_n(x)\ge 1-\gamma\}~~\text{with}~~F_n(x):=\Pr(n(\s_1^\top\s_2)^2\le x),
    \end{align*}
    and for any $x\in\mR$,
    \begin{align*}
        I_{x,j}:=\sum_{i=1}^{j-1}\{n(\s_i^\top\s_j)^2-x\}\indicate\{n(\s_i^\top\s_j)^2\ge x\}.
    \end{align*}
    By Lemma \ref{lemma10}, as $\min(n,p)\to\infty$,
    \begin{align}\label{eq21}
        \frac{1}{\sqrt{p^*}}\sum_{j=2}^p\{I_{x,j}-\E(I_{x,j})\}\cd N(0,\sigma_{x,\gamma}^2),
    \end{align}
    where $\E(I_{x,j})=(j-1)A_{1,n}(x)$, $\sigma_{x,\gamma}^2:=A_2(x)-\{A_1(x)\}^2$ and
    \begin{align*}
        A_{1,n}(x)=&\frac{n}{n-1}\left\{1-F_{\text{Beta}\left(\frac{3}{2},\frac{n-2}{2}\right)}\left(\frac{x}{n}\right)\right\}-x\left\{1-F_{\text{Beta}\left(\frac{1}{2},\frac{n-2}{2}\right)}\left(\frac{x}{n}\right)\right\},\\
        A_{1}(x):=&\{1-F_{\chi^2_3}(x)\}-x\{1-F_{\chi^2_1}(x)\},\\
        A_{2}(x):=&3\{1-F_{\chi^2_5}(x)\}-2x\{1-F_{\chi^2_3}(x)\}+x^2\{1-F_{\chi^2_1}(x)\}.
    \end{align*}
    By CLT, $\sqrt{n}\s_1^\top\s_2\cd N(0,1)$, then
    \begin{align*}
        v_{\gamma,n}\to v_{\gamma},~~\text{as}~~n\to\infty,
    \end{align*}
    where $v_{\gamma}:=\inf\{x:F_{\chi^2_1}(x)\ge 1-\gamma\}$. This, joint with \eqref{eq20} and \eqref{eq21}, by Slutsky's lemma, implies
    \begin{align*}
        \frac{T_{\lceil\gamma p^*\rceil}-\mu_{\gamma,n,p}}{\sqrt{p^*}}\cd N(0,\sigma_{\gamma}^2),
    \end{align*}
    where $\mu_{\gamma,n,p}:=p^*\{A_{1,n}(v_{\gamma,n})+\gamma v_{\gamma,n}\}$ and $\sigma_{\gamma}^2:=A_2(v_\gamma)-\{A_1(v_\gamma)\}^2$.

    To complete the proof, we next prove the claim \eqref{eq20}. It is enough to show 
    \begin{align}\label{eq22}
        \frac{1}{\sqrt{p^*}}\sum_{1\le i<j\le p}\{n(\s_i^\top\s_j)^2-v_{\gamma,n}\}[\indicate\{n(\s_i^\top\s_j)^2\ge\hat{v}_{\gamma,n,p}\}-\indicate\{n(\s_i^\top\s_j)^2\ge v_{\gamma,n}\}]=o_p(1).
    \end{align}
    By Lemma \ref{lemma8}, we have for each $n$, 
    \begin{align}\label{eq23}
        \hat{v}_{\gamma,n,p}-v_{\gamma,n}=O_p(p^{-1}).
    \end{align}
    Let $W_{ij}:=\{n(\s_i^\top\s_j)^2-v_{\gamma,n}\}[\indicate\{n(\s_i^\top\s_j)^2\ge\hat{v}_{\gamma,n,p}\}-\indicate\{n(\s_i^\top\s_j)^2\ge v_{\gamma,n}\}]$. We first evaluate $\E(W_{ij})$. Note that, picking $a\in\left(\frac{1}{3},\frac{1}{2}\right)$ and $c>0$,
    \begin{align*}
        &\indicate\{n(\s_i^\top\s_j)^2\ge\hat{v}_{\gamma,n,p}\}-\indicate\{n(\s_i^\top\s_j)^2\ge v_{\gamma,n}\}\\
        =&-\indicate\{v_{\gamma,n}\le n(\s_i^\top\s_j)^2<\hat{v}_{\gamma,n,p}\}\indicate(\hat{v}_{\gamma,n,p}>v_{\gamma})+\indicate\{\hat{v}_{\gamma,n,p}\le n(\s_i^\top\s_j)^2<v_{\gamma,n}\}\indicate(\hat{v}_{\gamma,n,p}<v_{\gamma})\\
        =&-\indicate\{v_{\gamma,n}\le n(\s_i^\top\s_j)^2<\hat{v}_{\gamma,n,p}\}\indicate(\hat{v}_{\gamma,n,p}\ge v_{\gamma}+cp^{-2a})\\
        &-\indicate\{v_{\gamma,n}\le n(\s_i^\top\s_j)^2<\hat{v}_{\gamma,n,p}\}\indicate(v_{\gamma}<\hat{v}_{\gamma,n,p}<v_{\gamma}+cp^{-2a})\\
        &+\indicate\{\hat{v}_{\gamma,n,p}\le n(\s_i^\top\s_j)^2<v_{\gamma,n}\}\indicate(\hat{v}_{\gamma,n,p}\le v_{\gamma}-cp^{-2a})\\
        &+\indicate\{\hat{v}_{\gamma,n,p}\le n(\s_i^\top\s_j)^2<v_{\gamma,n}\}\indicate(v_{\gamma}-cp^{-2a}<\hat{v}_{\gamma,n,p}<v_{\gamma})\\
        =&:-I_{ij,11}-I_{ij,12}+I_{ij,21}+I_{ij,22}.
    \end{align*}
    Then $W_{ij}=\{n(\s_i^\top\s_j)^2-v_{\gamma,n}\}(-I_{ij,11}-I_{ij,12}+I_{ij,21}+I_{ij,22})$. Applying the Cauchy–Swartz inequality, by \eqref{eq23}, for $l=1,2$,
    \begin{align}\label{eq24}
        |\E[\{n(\s_i^\top\s_j)^2-v_{\gamma,n}\}I_{ij,l1}]|\le\sqrt{\E(|\hat{v}_{\gamma,n,p}-v_{\gamma,n}|^2)\Pr(|\hat{v}_{\gamma,n,p}-v_{\gamma,n}|\ge cp^{-2a})}=o(p^{-1}).
    \end{align}
    To proceed,
    \begin{align}\label{eq25}
        |\E[\{n(\s_i^\top\s_j)^2-v_{\gamma,n}\}I_{ij,12}]|\le&\E[\{n(\s_i^\top\s_j)^2-v_{\gamma,n}\}\indicate\{v_{\gamma,n}\le n(\s_i^\top\s_j)^2<v_{\gamma}+cp^{-2a})]\nonumber\\
        =&\int_{v_{\gamma,n}}^{v_{\gamma}+cp^{-2a}}(x-v_{\gamma,n})f_n(x)\dif x=O(p^{-4a}),
    \end{align}
    where $f_n(x)$ is the density of $n(\s_1^\top\s_2)^2$ given in \eqref{eql1}. Using exactly the same approach we can show that $|\E[\{n(\s_i^\top\s_j)^2-v_{\gamma,n}\}I_{ij,22}]|=O(p^{-4a})$ as well. Hence, $\E(W_{ij})=o(p^{-1})$, and
    \begin{align}\label{eq26}
        \frac{1}{\sqrt{p^*}}\sum_{1\le i<j\le p}\E(W_{ij})=o(1).
    \end{align}

    We now consider $\var(W_{ij})$. Note that
    \begin{align*}
        &[\indicate\{n(\s_i^\top\s_j)^2\ge\hat{v}_{\gamma,n,p}\}-\indicate\{n(\s_i^\top\s_j)^2\ge v_{\gamma,n}\}]^2\\
        =&\indicate\{n(\s_i^\top\s_j)^2\ge\hat{v}_{\gamma,n,p}\}-2\indicate\{n(\s_i^\top\s_j)^2\ge\hat{v}_{\gamma,n,p}\}\indicate\{n(\s_i^\top\s_j)^2\ge v_{\gamma,n}\}+\indicate\{n(\s_i^\top\s_j)^2\ge v_{\gamma,n}\}\\
        =&\indicate\{\hat{v}_{\gamma,n,p}\le n(\s_i^\top\s_j)^2<v_{\gamma,n}\}+\indicate\{v_{\gamma,n}\le n(\s_i^\top\s_j)^2<\hat{v}_{\gamma,n,p}\}\\
        =&\indicate\{\hat{v}_{\gamma,n,p}\le n(\s_i^\top\s_j)^2<v_{\gamma,n}\}\indicate\{\hat{v}_{\gamma,n,p}< v_{\gamma,n}-cp^{-2a}\}\\
        &+\indicate\{\hat{v}_{\gamma,n,p}\le n(\s_i^\top\s_j)^2<v_{\gamma,n}\}\indicate\{\hat{v}_{\gamma,n,p}\ge v_{\gamma,n}-cp^{-2a}\}\\
        &+\indicate\{v_{\gamma,n}\le n(\s_i^\top\s_j)^2<\hat{v}_{\gamma,n,p}\}\indicate\{\hat{v}_{\gamma,n,p}\ge v_{\gamma,n}+cp^{-2a}\}\\
        &+\indicate\{v_{\gamma,n}\le n(\s_i^\top\s_j)^2<\hat{v}_{\gamma,n,p}\}\indicate\{\hat{v}_{\gamma,n,p}<v_{\gamma,n}+cp^{-2a}\}\\
        =&:L_{ij,11}+L_{ij,12}+L_{ij,21}+L_{ij,22}.
    \end{align*}
    Then $\E(W_{ij}^2)=\E[\{n(\s_i^\top\s_j)^2-v_{\gamma,n}\}^2(L_{ij,11}+L_{ij,12}+L_{ij,21}+L_{ij,22})]$. Similar as the proof of \eqref{eq24} and \eqref{eq25}, we have for $l=1,2$,
    \begin{align*}
        &|\E[\{n(\s_i^\top\s_j)^2-v_{\gamma,n}\}^2L_{ij,l1}]|=o(p^{-2}),\\
        &|\E[\{n(\s_i^\top\s_j)^2-v_{\gamma,n}\}^2L_{ij,12}]|=O(p^{-6a}),
    \end{align*}
    which implies
    \begin{align}\label{eq27}
        \E(W_{ij}^2)=o(p^{-2}).
    \end{align}

    For $(i_1,j_1)\ne(i_2,j_2)$, we have
    \begin{align*}
        &(-I_{i_1j_1,11}-I_{i_1j_1,12}+I_{i_1j_1,21}+I_{i_1j_1,22})(-I_{i_2j_2,11}-I_{i_2j_2,12}+I_{i_2j_2,21}+I_{i_2j_2,22})\\
        =&(I_{i_1j_1,21}I_{i_2j_2,21}-I_{i_1j_1,21}I_{i_2j_2,11}-I_{i_1j_1,11}I_{i_2j_2,21}+I_{i_1j_1,11}I_{i_2j_2,11})\\
        &+(I_{i_1j_1,21}I_{i_2j_2,22}+I_{i_1j_1,22}I_{i_2j_2,21}-I_{i_1j_1,21}I_{i_2j_2,12}-I_{i_1j_1,22}I_{i_2j_2,11}-I_{i_1j_1,11}I_{i_2j_2,22}-I_{i_1j_1,12}I_{i_2j_2,21}\\
        &\quad+I_{i_1j_1,11}I_{i_2j_2,12}+I_{i_1j_1,12}I_{i_2j_2,11})\\
        &+(I_{i_1j_1,22}I_{i_2j_2,22}-I_{i_1j_1,22}I_{i_2j_2,12}-I_{i_1j_1,12}I_{i_2j_2,22}+I_{i_1j_1,12}I_{i_2j_2,12})\\
        =&:I_{i_1j_1i_2j_2,1}+I_{i_1j_1i_2j_2,2}+I_{i_1j_1i_2j_2,3}.
    \end{align*}
    Then, $\E(W_{i_1j_1}W_{i_2j_2})=\E[\{n(\s_{i_1}^\top\s_{j_1})^2-v_{\gamma,n}\}\{n(\s_{i_2}^\top\s_{j_2})^2-v_{\gamma,n}\}(I_{i_1j_1i_2j_2,1}+I_{i_1j_1i_2j_2,2}+I_{i_1j_1i_2j_2,3})]$. By the Cauchy–Swartz inequality and the similar results as \eqref{eq24} and \eqref{eq25}, we can bound the terms in $I_{i_1j_1i_2j_2,1}$ as 
    \begin{align*}
        &\E[\{n(\s_{i_1}^\top\s_{j_1})^2-v_{\gamma,n}\}\{n(\s_{i_2}^\top\s_{j_2})^2-v_{\gamma,n}\}I_{i_1j_1,21}I_{i_2j_2,21}]\\
        \le&\sqrt{\E[\{n(\s_{i_1}^\top\s_{j_1})^2-v_{\gamma,n}\}^2I_{i_1j_1,21}]\cdot \E[\{n(\s_{i_2}^\top\s_{j_2})^2-v_{\gamma,n}\}^2I_{i_2j_2,21}]}=o(p^{-2}),
    \end{align*}
    the terms in $I_{i_1j_1i_2j_2,2}$ as
    \begin{align*}
        &\E[\{n(\s_{i_1}^\top\s_{j_1})^2-v_{\gamma,n}\}\{n(\s_{i_2}^\top\s_{j_2})^2-v_{\gamma,n}\}I_{i_1j_1,21}I_{i_2j_2,22}]\\
        \le&\sqrt{\E[\{n(\s_{i_1}^\top\s_{j_1})^2-v_{\gamma,n}\}^2I_{i_1j_1,21}]\cdot \E[\{n(\s_{i_2}^\top\s_{j_2})^2-v_{\gamma,n}\}^2I_{i_2j_2,22}]}=o(p^{-3a-1}),
    \end{align*}
    and the terms in $I_{i_1j_1i_2j_2,3}$ as
    \begin{align*}
        &\E[\{n(\s_{i_1}^\top\s_{j_1})^2-v_{\gamma,n}\}\{n(\s_{i_2}^\top\s_{j_2})^2-v_{\gamma,n}\}I_{i_1j_1,22}I_{i_2j_2,22}]\\
        \le&\sqrt{\E[\{n(\s_{i_1}^\top\s_{j_1})^2-v_{\gamma,n}\}^2I_{i_1j_1,22}]\cdot \E[\{n(\s_{i_2}^\top\s_{j_2})^2-v_{\gamma,n}\}^2I_{i_2j_2,22}]}=O(p^{-6a}).
    \end{align*}
    Hence,
    \begin{align*}
        \E(W_{i_1j_1}W_{i_2j_2})=o(p^{-2}),
    \end{align*}
    which, joint with \eqref{eq27}, implies
    \begin{align}\label{eq28}
        \E\left(\left|\frac{1}{\sqrt{p^*}}\sum_{1\le i<j\le p}W_{ij}\right|^2\right)=\frac{1}{p^*}\sum_{1\le i<j\le p}\E(W_{ij}^2)+\frac{1}{p^*}\sum_{\substack{(i_1,j_1)\ne(i_1,j_2)\\1\le i_1<j_1\le p\\1\le i_2<j_2\le p}}\E(W_{i_1j_1}W_{i_2j_2})=o(1).
    \end{align}
    This together with \eqref{eq26} readily establishes the claim \eqref{eq22}.
\end{proof}

\subsection{Proof of Theorem \ref{asymptotic independence} }\label{appendix6}
\begin{proof}
    By \eqref{eq20} and \eqref{eq21}, we have 
    \begin{align*}
        \frac{T_{\lceil\gamma p^*\rceil}-\mu_{\gamma,n,p}}{\sqrt{p^*\sigma_{\gamma}^2}}=\frac{1}{\sqrt{p^*\sigma_{\gamma}^2}}\sum_{1\le i<j\le p}[h_{v_{\gamma,n}}(\sqrt{n}\s_i^\top\s_j)-\E\{h_{v_{\gamma,n}}(\sqrt{n}\s_i^\top\s_j)\}]+o_p(1).
    \end{align*}
    where $h_x(u):=(u^2-x)\indicate(u^2\ge x)$. By Lemma \ref{lemma11}, it suffices to show that
    \begin{align*}
        n\hat{\rho}_{(p^*+1-s)}~~\text{and}~~\frac{1}{\sqrt{p^*\sigma_{\gamma}^2}}\sum_{1\le i<j\le p}[h_{v_{\gamma,n}}(\sqrt{n}\s_i^\top\s_j)-\E\{h_{v_{\gamma,n}}(\sqrt{n}\s_i^\top\s_j)\}]
    \end{align*}
    are asymptotically independent.
    
    Define 
    \begin{align*}
        A_p\equiv A_p(x):=\left\{\frac{1}{\sqrt{p^*\sigma_{\gamma}^2}}\sum_{1\le i<j\le p}[h_{v_{\gamma,n}}(\sqrt{n}\s_i^\top\s_j)-\E\{h_{v_{\gamma,n}}(\sqrt{n}\s_i^\top\s_j)\}]\le x\right\}.
    \end{align*}
    Recalling the definition of $I_k=(i_k,j_k)\in\Lambda_p,1\le k\le p^*$ in \eqref{eq1}, by \eqref{eq19}, define $B_k\equiv B_k(y):=\{n\hat{\rho}_{I_k}^2>b_p+y\}=\{n(\s_{i_k}^\top\s_{j_k})^2>b_p+y\}$. According to the proof of Theorem \ref{asymptotic distribution of ordered statistics} and \ref{asymptotic normality}, we have $\Pr(A_p)\to \Phi(x)$ and 
    \begin{align*}
	    \Pr\left(\bigcup_{1\le k_1<k_2<\cdots<k_s\le p^*}B_{k_1}\cdots B_{k_s}\right)\to 1-\Lambda(y)\sum_{i=0}^{s-1}\frac{\{\log \Lambda^{-1}(y)\}^i}{i!}.
	\end{align*}
	Our goal is to prove that
	\begin{align*}
	    \Pr\left(\bigcup_{1\le k_1<k_2<\cdots<k_s\le p^*}A_pB_{k_1}\cdots B_{k_s}\right)\to \Phi(x)\left[1-\Lambda(y)\sum_{i=0}^{s-1}\frac{\{\log \Lambda^{-1}(y)\}^i}{i!}\right].
	\end{align*}

    By the Bonferroni inequality given in Lemma \ref{lemma12}, we observe that for any integer $1\le d<[(p^*-s)/2]$,
	\begin{align*}
	\sum_{l=s}^{2d+s+1}(-1)^{s+l}\binom{l-1}{s-1}\sum_{1\le k_1<\cdots<k_l\le p^*}&\Pr(A_pB_{k_1}\cdots B_{k_l})\le \Pr\left(\bigcup_{1\le k_1<k_2<\cdots<k_s\le p^*}A_pB_{k_1}\cdots B_{k_s}\right)\\
	\le& \sum_{l=s}^{2d+s}(-1)^{s+l}\binom{l-1}{s-1}\sum_{1\le k_1<\cdots<k_l\le p^*}\Pr(A_pB_{k_1}\cdots B_{k_l}).
	\end{align*}
	Using the Bonferroni inequality again, we have
	\begin{align*}
	&\sum_{l=s}^{2d+s}(-1)^{s+l}\binom{l-1}{s-1}\sum_{1\le k_1<\cdots<k_l\le p^*}\Pr(A_pB_{k_1}\cdots B_{k_l})\\
	=&\sum_{l=s}^{2d+s-1}(-1)^{s+l}\binom{l-1}{s-1}\sum_{1\le k_1<\cdots<k_l\le p^*}\Pr(A_p)\Pr(B_{k_1}\cdots B_{k_l})\\
	&+\sum_{l=s}^{2d+s-1}(-1)^{s+l}\binom{l-1}{s-1}\sum_{1\le k_1<\cdots<k_l\le p^*}\{\Pr(A_pB_{k_1}\cdots B_{k_l})-\Pr(A_p)\Pr(B_{k_1}\cdots B_{k_l})\}\\
	&+\binom{2d+s-1}{s-1}\sum_{1\le k_1<\cdots<k_{2d+s}\le p^*}\Pr(A_pB_{k_1}\cdots B_{k_{2d+s}})\\
	\le &\Pr(A_p)\Pr\left(\bigcup_{1\le k_1<k_2<\cdots<k_s\le p^*}B_{i_1}\cdots B_{k_s}\right)+\sum_{l=s}^{2d+s-1}\binom{l-1}{s-1}\zeta(p,l)+\binom{2d+s-1}{s-1}H(p,2d+s),
	\end{align*}
    and 
    \begin{align*}
        &\sum_{l=s}^{2d+s+1}(-1)^{s+l}\binom{l-1}{s-1}\sum_{1\le k_1<\cdots<k_l\le p^*}\Pr(A_pB_{k_1}\cdots B_{k_l})\\
        =&\sum_{l=s}^{2d+s+2}(-1)^{s+l}\binom{l-1}{s-1}\sum_{1\le k_1<\cdots<k_l\le p^*}\Pr(A_p)\Pr(B_{k_1}\cdots B_{k_l})\\
	    &+\sum_{l=s}^{2d+s+2}(-1)^{s+l}\binom{l-1}{s-1}\sum_{1\le k_1<\cdots<k_l\le p^*}\{\Pr(A_pB_{k_1}\cdots B_{k_l})-\Pr(A_p)\Pr(B_{k_1}\cdots B_{k_l})\}\\
	    &-\binom{2d+s+1}{s-1}\sum_{1\le k_1<\cdots<k_{2d+s+2}\le p^*}\Pr(A_pB_{k_1}\cdots B_{k_{2d+s}})\\
        \ge&\Pr(A_p)\Pr\left(\bigcup_{1\le k_1<k_2<\cdots<k_s\le p^*}B_{i_1}\cdots B_{k_s}\right)-\sum_{l=s}^{2d+s+2}\binom{l-1}{s-1}\zeta(p,l)-\binom{2d+s+1}{s-1}H(p,2d+s+2),
    \end{align*}
    where 
    \begin{align*}
        \zeta(p,l):=&\sum_{1\le k_1<\cdots<k_l\le p^*}|\Pr(A_pB_{k_1}\cdots B_{k_l})-\Pr(A_p)\Pr(B_{k_1}\cdots B_{k_l})|,~~\text{and}\\
        H(p,l):=&\sum_{1\le k_1<\cdots<k_l\le p^*}\Pr(B_{k_1}\cdots B_{k_l}).
    \end{align*}
    By Lemma \ref{lemma13}, under the assumption that $\log p=o(\sqrt{n})$, we have 
    \begin{align}\label{eq29}
        \limsup_{p\to\infty}H(p,l)=O\left(\frac{1}{l!}\right).
    \end{align}
    If we claim that 
    \begin{align}\label{eq30}
        \limsup_{p\to\infty}\zeta(p,l)=o\left(\frac{1}{l!}\right).
    \end{align}
    Then, we obtain
    \begin{align*}
	    &\limsup_{p\to\infty}\Pr\left(\bigcup_{1\le k_1<k_2<\cdots<k_s\le p^*}A_pB_{k_1}\cdots B_{k_s}\right)\le \Phi(x)\left[1-\Lambda(y)\sum_{i=0}^{s-1}\frac{\{\log \Lambda^{-1}(y)\}^i}{i!}\right],~~\text{and}\\
        &\liminf_{p\to\infty}\Pr\left(\bigcup_{1\le k_1<k_2<\cdots<k_s\le p^*}A_pB_{k_1}\cdots B_{k_s}\right)\ge \Phi(x)\left[1-\Lambda(y)\sum_{i=0}^{s-1}\frac{\{\log \Lambda^{-1}(y)\}^i}{i!}\right],
	\end{align*}
    which leads to the desired result.

    To complete the proof, we prove the calim \eqref{eq30}. For $1\le k_1<k_2<\cdots<k_l\le p^*$, write $I_{k_m}=(i_{k_m},j_{k_m})$ for $1\le m\le l$. Set
    \begin{align*}
        \Lambda_{l,p}:=\{(i_{k_m},j):i_{k_m}<j\le p,1\le m\le l\}\bigcup\{(i,j_{k_m}):1\le i<j_{k_m},1\le m\le l\}
    \end{align*}
    for $l\ge 1$. It is easy to check that $|\Lambda_{l,p}|=\sum_{m=1}^l(p-i_{k_m}+j_{k_m}-2)$. Since $i_{k_m}<j_{k_m}$ for each $m$, we see that 
    \begin{align*}
        l(p-1)\le|\Lambda_{l,p}|\le\sum_{m=1}^l(p+j_{k_m})\le 2lp.
    \end{align*}
    Observe that $B_{k_1}\cdots B_{k_l}$ is an event generated by random vectors $\{\s_i,s_j:(i,j)\in\Lambda_{l,p}\}$. A crucial observation is that
    \begin{align}\label{eq31}
        \sum_{(i,j)\in\Lambda_p-\Lambda_{l,p}}[h_{v_{\gamma,n}}(\sqrt{n}\s_i^\top\s_j)-\E\{h_{v_{\gamma,n}}(\sqrt{n}\s_i^\top\s_j)\}]~~\text{and}~~B_{k_1}\cdots B_{k_l}~~\text{are independent}.
    \end{align}
    It is easy to see that
    \begin{align}\label{eq32}
        &\sum_{(i,j)\in\Lambda_{l,p}}[h_{v_{\gamma,n}}(\sqrt{n}\s_i^\top\s_j)-\E\{h_{v_{\gamma,n}}(\sqrt{n}\s_i^\top\s_j)\}]\nonumber\\
        =&\sum_{m=1}^l\sum_{j=i_{k_m}+1}^p[h_{v_{\gamma,n}}(\sqrt{n}\s_{i_{k_m}}^\top\s_j)-\E\{h_{v_{\gamma,n}}(\sqrt{n}\s_{i_{k_m}}^\top\s_j)\}]\nonumber\\
        &+\sum_{m=1}^l\sum_{i=1}^{j_{k_m}-1}[h_{v_{\gamma,n}}(\sqrt{n}\s_i^\top\s_{j_{k_m}})-\E\{h_{v_{\gamma,n}}(\sqrt{n}\s_i^\top\s_{j_{k_m}})\}]\nonumber\\
        &-\sum_{m=1}^l\sum_{m'=1}^l[h_{v_{\gamma,n}}(\sqrt{n}\s_{i_{k_m}}^\top\s_{j_{k_{m'}}})-\E\{h_{v_{\gamma,n}}(\sqrt{n}\s_{i_{k_m}}^\top\s_{j_{k_{m'}}})\}]\nonumber\\
        =&:Q_{p,1}+Q_{p,2}+Q_{p,3}.
    \end{align}
    For any integer $\tau\ge 2$, by H\"{o}lder inequality that $(\sum_{m=1}^l|a_m|)^\tau\le l^{\tau-1}\sum_{m=1}^l|a_m|^\tau $, we have
    \begin{align*}
        \E(|Q_{p,1}|^{\tau})=&\E\left(\left|\sum_{m=1}^l\sum_{j=i_{k_m}+1}^p[h_{v_{\gamma,n}}(\sqrt{n}\s_{i_{k_m}}^\top\s_j)-\E\{h_{v_{\gamma,n}}(\sqrt{n}\s_{i_{k_m}}^\top\s_j)\}]\right|^\tau\right)\\
        \le&l^{\tau-1}\sum_{m=1}^l\E\left(\left|\sum_{j=i_{k_m}+1}^p[h_{v_{\gamma,n}}(\sqrt{n}\s_{i_{k_m}}^\top\s_j)-\E\{h_{v_{\gamma,n}}(\sqrt{n}\s_{i_{k_m}}^\top\s_j)\}]\right|^\tau\right).
    \end{align*}
    By \eqref{eql3} that $\{\s_1^\top\s_2,\s_1^\top\s_3,\ldots,\s_1^\top\s_p\}$ are independent and Lemma \ref{lemma14}, we have
    \begin{align*}
        &\E\left(\left|\sum_{j=i+1}^p[h_{v_{\gamma,n}}(\sqrt{n}\s_i^\top\s_j)-\E\{h_{v_{\gamma,n}}(\sqrt{n}\s_i^\top\s_j)\}]\right|^\tau\right)\\
        \le&C(p-i)^{\frac{\tau}{2}-1}\sum_{j=i+1}^p\E[|h_{v_{\gamma,n}}(\sqrt{n}\s_i^\top\s_j)-\E\{h_{v_{\gamma,n}}(\sqrt{n}\s_i^\top\s_j)\}|^\tau],
    \end{align*}
    where the constant $C>0$ is free of $n$ and $p$. Note that $h_x(u)=h_{x,1}(u)-h_{x,2}(u)$, where $h_{x,1}(u):=u^2\indicate(u^2\ge x)$ and $h_{x,2}(u):=x\indicate(u^2\ge x)$. Then, by H\"{o}lder inequality again, we have
    \begin{align*}
        &\E[|h_{v_{\gamma,n}}(\sqrt{n}\s_i^\top\s_j)-\E\{h_{v_{\gamma,n}}(\sqrt{n}\s_i^\top\s_j)\}|^\tau]\\
        \le&2^{\tau-1}\sum_{m=1}^2\E[|h_{m,v_{\gamma,n}}(\sqrt{n}\s_i^\top\s_j)-\E\{h_{m,v_{\gamma,n}}(\sqrt{n}\s_i^\top\s_j)\}|^\tau].
    \end{align*}
    Recall \eqref{eql1} that $(\s_i^\top\s_j)^2\sim\text{Beta}\left(\frac{1}{2},\frac{n-2}{2}\right)$ for any $ 1\le i<j\le p$. Define $\xi_1:=\sqrt{n}\s_1^\top\s_2\indicate(|\s_1^\top\s_2|\ge\sqrt{v_{\gamma,n}/n})$, then we can calculate $\E(\xi_1)=0$ and 
    \begin{align*}
        \E(|\xi_1|^{2\tau})=\frac{n^\tau\Gamma\left(\tau+\frac{1}{2}\right)\Gamma\left(\frac{n-1}{2}\right)}{\Gamma\left(\frac{1}{2}\right)\Gamma\left(\tau+\frac{n-1}{2}\right)}\left\{1-F_{\text{Beta}\left(\tau+\frac{1}{2},\frac{n-2}{2}\right)}\left(\frac{v_{\gamma,n}}{n}\right)\right\}.
    \end{align*}
    By Lemma \ref{lemma15} (\romannumeral1), we obtain
    \begin{align*}
        &\E[|h_{1,v_{\gamma,n}}(\sqrt{n}\s_i^\top\s_j)-\E\{h_{1,v_{\gamma,n}}(\sqrt{n}\s_i^\top\s_j)\}|^\tau]\\
        =&\E\{|\xi_1^2-E(\xi_1^2)|^\tau\}\\
        \le&32^\tau\frac{n^\tau\Gamma\left(\tau+\frac{1}{2}\right)\Gamma\left(\frac{n-1}{2}\right)}{\Gamma\left(\frac{1}{2}\right)\Gamma\left(\tau+\frac{n-1}{2}\right)}\left\{1-F_{\text{Beta}\left(\tau+\frac{1}{2},\frac{n-2}{2}\right)}\left(\frac{v_{\gamma,n}}{n}\right)\right\}\\
        \approx&32^\tau\frac{\Gamma\left(\tau+\frac{1}{2}\right)}{\Gamma\left(\frac{1}{2}\right)}\{1-F_{\chi^2_{2\tau+1}}(v_{\gamma})\}=O(1),
    \end{align*}
    where the approximation follows from the Stirling expansion and the fact that $F_{\text{Beta}\left(a,\frac{n-2}{2}\right)}\left(\frac{x}{n}\right)\to F_{\text{Gamma}(a,1)}\left(\frac{x}{2}\right)=F_{\chi^2_{2a}}(x)$ and $v_{\gamma,n}\to v_{\gamma}$ as $n\to\infty$.

    Define $\xi_2:=\indicate(|\s_1^\top\s_2|\ge\sqrt{v_{\gamma,n}/n})$, then we can calculate $\E(\xi_2)=1-F_{\text{Beta}\left(\frac{1}{2},\frac{n-2}{2}\right)}\left(\frac{v_{\gamma,n}}{n}\right)=:a_{\gamma,n}$ and 
    \begin{align*}
        \E\{|\xi_2-\E(\xi_2)|^{2\tau}\}=&a_{\gamma,n}(1-a_{\gamma,n})^{2\tau}+a_{\gamma,n}^{2\tau}(1-a_{\gamma,n}).
    \end{align*}
    By Lemma \ref{lemma15} (\romannumeral2), we obtain
    \begin{align*}
        &\E[|h_{2,v_{\gamma,n}}(\sqrt{n}\s_i^\top\s_j)-\E\{h_{2,v_{\gamma,n}}(\sqrt{n}\s_i^\top\s_j)\}|^\tau]\\
        =&v_{\gamma,n}^\tau\E\{|\xi_2^2-E(\xi_2^2)|^\tau\}\\
        \le&v_{\gamma,n}^\tau\cdot 16^\tau\left\{(1-a_{\gamma,n})^\tau+\sqrt{(1-a_{\gamma,n})^{2\tau}a_{\gamma,n}+(1-a_{\gamma,n})a_{\gamma,n}^{2\tau}}\right\}\\
        &\quad\times\left\{a_{\gamma,n}^\tau+(1-a_{\gamma,n})^\tau+\sqrt{(1-a_{\gamma,n})^{2\tau}a_{\gamma,n}+(1-a_{\gamma,n})a_{\gamma,n}^{2\tau}}\right\}=O(1),
    \end{align*}
    due to $a_{\gamma,n}\to 1-F_{\chi^2_1}(v_{\gamma})$ as $n\to\infty$. Therefore,
    \begin{align}\label{eq33}
        \E(|Q_{p,1}|^{\tau})\le Cl^\tau p^{\tau/2},
    \end{align}
    where the constant $C>0$ only depends on $\tau$. Similarly, we can prove
    \begin{align}\label{eq34}
        \E(|Q_{p,2}|^{\tau})\le Cl^\tau p^{\tau/2},
    \end{align}
    and
    \begin{align}\label{eq35}
        \E(|Q_{p,3}|^{\tau})\le l^{2(\tau-1)}\sum_{m=1}^l\sum_{m'=1}^l\E[|h_{v_{\gamma,n}}(\sqrt{n}\s_{i_{k_m}}^\top\s_{j_{k_{m'}}})-\E\{h_{v_{\gamma,n}}(\sqrt{n}\s_{i_{k_m}}^\top\s_{j_{k_{m'}}})\}|^\tau] \le Cl^{2\tau}.
    \end{align}
    Combining with \eqref{eq32}-\eqref{eq35}, we obtain
    \begin{align*}
        \E\left(\left|\sum_{(i,j)\in\Lambda_{l,p}}[h_{v_{\gamma,n}}(\sqrt{n}\s_i^\top\s_j)-\E\{h_{v_{\gamma,n}}(\sqrt{n}\s_i^\top\s_j)\}]\right|^\tau\right)\le C(l^\tau p^{\tau/2}+l^{2\tau})\le C'l^\tau p^{\tau/2},
    \end{align*}
    which, by Markov inequality, implies for any fixed $\varepsilon\in(0,1)$,
    \begin{align}\label{eq36}
        \Pr\left(\frac{1}{\sqrt{p^*\sigma^2_\gamma}}\left|\sum_{(i,j)\in\Lambda_{l,p}}[h_{v_{\gamma,n}}(\sqrt{n}\s_i^\top\s_j)-\E\{h_{v_{\gamma,n}}(\sqrt{n}\s_i^\top\s_j)\}]\right|\ge\varepsilon\right)\le Cl^\tau\varepsilon^{-\tau}p^{-\tau/2}.
    \end{align}

    Recalling the definition of $A_p(x)$, by \eqref{eq31} and \eqref{eq36}, we have
    \begin{align*}
        &\Pr(A_p(x)B_{k_1}\cdots B_{k_l})\\
        \le&\Pr\left(A_p(x)B_{k_1}\cdots B_{k_l},\frac{1}{\sqrt{p^*\sigma^2_\gamma}}\left|\sum_{(i,j)\in\Lambda_{l,p}}[h_{v_{\gamma,n}}(\sqrt{n}\s_i^\top\s_j)-\E\{h_{v_{\gamma,n}}(\sqrt{n}\s_i^\top\s_j)\}]\right|<\varepsilon\right)+Cl^\tau\varepsilon^{-\tau}p^{-\tau/2}\\
        \le&\Pr\left(\frac{1}{\sqrt{p^*\sigma^2_\gamma}}\sum_{(i,j)\in\Lambda_p-\Lambda_{l,p}}[h_{v_{\gamma,n}}(\sqrt{n}\s_i^\top\s_j)-\E\{h_{v_{\gamma,n}}(\sqrt{n}\s_i^\top\s_j)\}]\le x+\varepsilon,B_{k_1}\cdots B_{k_l}\right)+Cl^\tau\varepsilon^{-\tau}p^{-\tau/2}\\
        =&\Pr\left(\frac{1}{\sqrt{p^*\sigma^2_\gamma}}\sum_{(i,j)\in\Lambda_p-\Lambda_{l,p}}[h_{v_{\gamma,n}}(\sqrt{n}\s_i^\top\s_j)-\E\{h_{v_{\gamma,n}}(\sqrt{n}\s_i^\top\s_j)\}]\le x+\varepsilon\right)\Pr(B_{k_1}\cdots B_{k_l})+Cl^\tau\varepsilon^{-\tau}p^{-\tau/2}.
    \end{align*}
    To proceed,
    \begin{align*}
        &\Pr\left(\frac{1}{\sqrt{p^*\sigma^2_\gamma}}\sum_{(i,j)\in\Lambda_p-\Lambda_{l,p}}[h_{v_{\gamma,n}}(\sqrt{n}\s_i^\top\s_j)-\E\{h_{v_{\gamma,n}}(\sqrt{n}\s_i^\top\s_j)\}]\le x+\varepsilon\right)\\
        \le&\Pr\left(\frac{1}{\sqrt{p^*\sigma^2_\gamma}}\sum_{(i,j)\in\Lambda_p-\Lambda_{l,p}}[h_{v_{\gamma,n}}(\sqrt{n}\s_i^\top\s_j)-\E\{h_{v_{\gamma,n}}(\sqrt{n}\s_i^\top\s_j)\}]\le x+\varepsilon,\right.\\
        &\qquad \left.\frac{1}{\sqrt{p^*\sigma^2_\gamma}}\left|\sum_{(i,j)\in\Lambda_{l,p}}[h_{v_{\gamma,n}}(\sqrt{n}\s_i^\top\s_j)-\E\{h_{v_{\gamma,n}}(\sqrt{n}\s_i^\top\s_j)\}]\right|<\varepsilon\right)+Cl^\tau\varepsilon^{-\tau}p^{-\tau/2}\\
        \le&\Pr\left(\frac{1}{\sqrt{p^*\sigma^2_\gamma}}\sum_{(i,j)\in\Lambda_p}[h_{v_{\gamma,n}}(\sqrt{n}\s_i^\top\s_j)-\E\{h_{v_{\gamma,n}}(\sqrt{n}\s_i^\top\s_j)\}]\le x+2\varepsilon\right)+Cl^\tau\varepsilon^{-\tau}p^{-\tau/2}\\
        =&\Pr(A_p(x+2\varepsilon))+Cl^\tau\varepsilon^{-\tau}p^{-\tau/2}.
    \end{align*}
    Hence,
    \begin{align*}
        \Pr(A_p(x)B_{k_1}\cdots B_{k_l})\le \Pr(A_p(x+2\varepsilon))\Pr(B_{k_1}\cdots B_{k_l})+2Cl^\tau\varepsilon^{-\tau}p^{-\tau/2}.
    \end{align*}
    Similarly, we can prove
    \begin{align*}
        \Pr(A_p(x)B_{k_1}\cdots B_{k_l})\ge \Pr(A_p(x-2\varepsilon))\Pr(B_{k_1}\cdots B_{k_l})-2Cl^\tau\varepsilon^{-\tau}p^{-\tau/2}.
    \end{align*}
    Thus,
    \begin{align}
        |\Pr(A_p(x)B_{k_1}\cdots B_{k_l})-\Pr(A_p(x))\Pr(B_{k_1}\cdots B_{k_l})|\le\Delta_{p,\varepsilon}\Pr(B_{k_1}\cdots B_{k_l})+4Cl^\tau\varepsilon^{-\tau}p^{-\tau/2},
    \end{align}
    where
    \begin{align*}
        \Delta_{p,\varepsilon}:=&|\Pr(A_p(x))-\Pr(A_p(x+2\varepsilon))|+|\Pr(A_p(x))-\Pr(A_p(x-2\varepsilon))|\\
        \to&|\Phi(x)-\Phi(x+2\varepsilon)|+|\Phi(x)-\Phi(x-2\varepsilon)|~~\text{as}~~p\to\infty.
    \end{align*}
    This, by the definitions of $\zeta(p,l)$ and $H(p,l)$, implies
    \begin{align*}
        \zeta(p,l)\le \Delta_{p,\varepsilon}H(p,l)+C\binom{p^*}{l}l^\tau\varepsilon^{-\tau}p^{-\tau/2}.
    \end{align*}
    Then, by \eqref{eq29} and the fact that $\binom{p^*}{l}l^\tau p^{-\tau/2}\le p^{2l}l^\tau p^{-\tau/2}\le l^\tau p^{-l}$ when $\tau=6l$, we have
    \begin{align}
        \limsup_{p\to\infty}\zeta(p,l)\le \frac{C}{l!}|\Phi(x)-\Phi(x+2\varepsilon)|+|\Phi(x)-\Phi(x-2\varepsilon)|=o\left(\frac{1}{l!}\right),
    \end{align}
    by setting $\varepsilon\downarrow 0$. The proof of \eqref{eq30} is completed. Thus, we prove (\romannumeral1). (\romannumeral2) could be easily obtained by (\romannumeral1) and Theorem \ref{asymptotic distribution of ordered statistics}.
\end{proof}

\bibliographystyle{chicago}
\bibliography{ref}

\end{document}